\newcommand{\notr}{\!\not\!\! R}
\newcommand{\supp}{\text{supp}}
\newcommand{\consequence}{\pi}
\newcommand{\vari}{Y}
\newcommand{\covariate}{Q}
\DeclareMathOperator{\marg}{marg}
\theoremstyle{definition}
\newtheorem{defn}{\protect\definitionname}
\theoremstyle{plain}
\theoremstyle{plain}
\theoremstyle{definition}
\newtheorem{ax}{\protect\axiomname}
\newtheorem*{ax*}{\protect\axiomname}
\theoremstyle{plain}
\newtheorem{thm}{\protect\theoremname}
\theoremstyle{plain}
\newtheorem{lemma}{Lemma}
\theoremstyle{plain}
\newtheorem{cor}{\protect\corollaryname}
\theoremstyle{plain}
\newtheorem{proposition}{\protect\propositionname}
\theoremstyle{remark}
\newtheorem{rem}{Remark}
\theoremstyle{definition}
  \providecommand{\axiomname}{Axiom}
  \providecommand{\conjecturename}{Conjecture}
  \providecommand{\definitionname}{Definition}
  \providecommand{\corollaryname}{Corollary}
  \providecommand{\theoremname}{Theorem}
  \providecommand{\propositionname}{Proposition}
\begin{document}
\title[Subjective Causality in Choice]{Subjective Causality in Choice$^*$}

\author[Ellis and Thysen]{Andrew Ellis$^\dag$ and Heidi Christina Thysen$^\S$}

\date{May 2024}
\thanks{$^*$
We thank Sandro Ambuehl, Kfir Eliaz, Yusufcan Masatlioglu, Jay Lu, Pablo Schenone,  Heiner Schumacher, Balazs Szentes, Chen Zhao, and especially Rani Spiegler, as well as participants at the Arizona, BRIC, BU, DC Area Theory Workshop, Edinburgh, Israeli and Hong Kong Theory Seminars, Harvard/MIT, Michigan, LSE, RUD, St Andrews, UCLA, and UCSD  for helpful comments and discussions. Thysen gratefully acknowledges financial support by the ERC Advanced Investigator grant no. 692995.}
\thanks{$^\dag$Department of Economics, London School of Economics and Political Science,   a.ellis@lse.ac.uk.}
\thanks{$^\S$Department of Economics, Norwegian School of Economics (NHH),  heidi.thysen@nhh.no.}

\begin{abstract}
Choices based on observational data depend on beliefs about which correlations  reflect causality. An agent predicts the consequence of available actions   using a dataset and her subjective beliefs about causality represented by a directed acyclic graph (DAG). We identify her DAG from her random choice rule. Her choices reveal the chains of causal reasoning that she undertakes and the confounding variables she adjusts for, and these pin down her model. When her choices generate the dataset used, her behavior affects her inferences, which in turn affect her choices.  We provide necessary and sufficient conditions for testing whether her behavior is compatible with such a model.
\end{abstract}
\maketitle

\newpage
\section{Introduction}
A person's choice depends on what she predicts its consequences will be. When observational data informs these predictions, her choice depends on her beliefs about which correlations in the data reflect causality and which do not. While correlation is unambiguous, inferences about causality are inherently subjective.
For instance, a positive correlation between duration of hospitalization and risk of death may be causal if time in hospital increases one's risk of catching an unrelated infection, or spurious if both are caused by the severity of illness.
These beliefs about causality affect behavior:
she would be more reluctant to seek treatment if she believed the former rather than the latter.
This paper develops a theory under which we  can use a person's choices to identify her subjective beliefs about causality and to test whether those beliefs can explain her behavior.

Following \cite{Pearl2009} and \cite{Spiegler2016}, we study a decision maker (DM) who makes predictions about the consequences of her action using a causal model described by a Directed Acyclic Graph (DAG). A DAG is a graph where each node represents a variable, and each edge represents a belief that one variable directly causes another. For example,  an edge  from  ``Disease''  to ``Hospitalization''    represents the belief that going to the hospital  is a consequence of being sick.  A variable whose node has no edges pointing into it is believed to be exogenous.
%Figure \ref{fig:intro example DAGs} provides two examples {\red the first shows disease as exogenous and the second as a consequence of whether DM goes to hospital}. 
DAGs allow for a flexible and non-parametric representation of causal relationships. 

How someone thinks about causality is inherently unobservable.
We instead consider a random choice rule that records how often the DM takes each action after observing each dataset describing the joint distribution of variables and actions.
The choice rule has a \emph{subjective causality representation (SCR)} if she uses a fixed DAG to predict the consequences of her actions, and then chooses the action with the highest predicted utility most frequently. That is, the DM estimates the strength of the causal relationships in her DAG using the dataset, makes predictions about the outcomes of every action, and chooses the best.   We show how to reveal the causal model of any choice rule with an SCR, and provide axiomatic foundations for an SCR with a dataset that is endogenously generated by the DM's choices.

%%{\blue outreach trust changes}
%\begin{table} 
%\caption{A Dataset \label{tab: example data}}
%\begin{threeparttable}
%\begin{tabular}{|l|cc|c|c|}
%\hline 
% Profits & Pr(\text{high|$\cdot$})& Pr(\text{high|workforce}) & Pr($\cdot$)\\
%\hline
%\text{small workforce, high output} & 0.6 & 0.5 & 0.375\\
%\text{large workforce, high output} & 0.7 & 0.4 & 0.125\\
%\text{small workforce, low output}  & 0.2 & 0.5 & 0.125\\
%\text{large workforce, low output}  & 0.3 & 0.4 & 0.375\\
%
%\hline  
%\end{tabular} 
%
%\begin{tablenotes}
%      \small
%      \item Table describes a simulated dataset. Column  2 (3) shows the fraction of instances with high profits conditional on both characteristics (only workforce) in Column 1. Column 4 shows the fraction of instances in which the characteristics in Column 1 occur.
%    \end{tablenotes} 
%\end{threeparttable}
%\end{table} 

To illustrate, consider a management consultant  who recommends the workforce size,
%the {\blue mayor of a city} choosing between two levels of policing, 
tiny ($a$) or big ($b$), that she thinks will maximize profits. To aid her decision, she has access to data detailing the firm's previous workforce sizes, its competitor's output (which can be high or low), and its profits (which can be large or small). 
The consultant
%The {\blue mayor}
 infers expected profit from each workforce size using the dataset described in Table \ref{tab: example data}, and  recommends
% implement
 the size that she predicts will maximize profit.
We observe what the consultant recommended, and want to understand \emph{why} she made her recommendation. For instance, a manager at a different firm may want to  hire the consultant only if she will support the manager's preferred action given the second firm's circumstances.

\begin{table} 
\caption{A Dataset \label{tab: example data}}
\begin{threeparttable}
\begin{tabular}{|l|cc|c|c|}
\hline 
 Profits & Pr(\text{large \& $\cdot$}) & Pr(\text{small \& $\cdot$}) & Pr(\text{large$|\cdot$}) & Pr(\text{large$|$workforce}) \\
\hline
\text{$b$, high  output} & 0.24 & 0.16 & 0.6 & 0.52\\
\text{$a$, high  output} & 0.07 & 0.03 & 0.7 & 0.38\\
\text{$b$, low  output}  & 0.02 & 0.08 & 0.2 & 0.52\\
\text{$a$, low  output}  & 0.12 & 0.28 & 0.3 & 0.38\\

\hline  
\end{tabular} 

\begin{tablenotes}
      \small
      \item Table describes a simulated dataset. $a$ and $b$ are tiny and big workforce. Column  2 (3) shows the fraction of instances with large (small) profits that have the characteristics in Column 1. Column 4 (5) shows fraction of high profits conditional on both characteristics (only workforce) in Column 1.
    \end{tablenotes} 
\end{threeparttable}
\end{table}

Suppose that we know  that the consultant's subjective causal model is either $R$ or $R'$ in Figure \ref{fig:intro example DAGs}. Observing her recommendation from the dataset given by Table \ref{tab: example data} enables us to determine which she believes. If the consultant's model is $R$, then she believes that competitor output and  profit are both endogenous consequences of the size of the workforce; that is, she believes that the firm is a Stackelberg leader. The unconditional correlation between workforce size and profit is positive, so she is more likely to recommend a big workforce. If her model is instead $R'$, then she believes that competitor output is an exogenous cause of both workforce and profit; that is, the firm acts as a Stackelberg follower. Then, she recommends a tiny workforce more frequently because the correlation between workforce size and profit is negative conditional on either level of competitor output. 

\begin{figure}[h]
\caption{Two Subjective Causal Models}
	\begin{minipage}{0.4\textwidth}
	\centering
	% This code uses the tikz package
\begin{tikzpicture}[scale=1.2]
\node (v0) at (-1.5,0) {Workforce};
\node (vA) at (0,1) {Competitor Output};
\node (vP) at (1.5,0) {Profit};
\draw [->] (v0) edge (vA);
\draw [->] (vA) edge (vP);
\draw [->] (v0) edge (vP);
\end{tikzpicture}
	\vspace{-2em}
	\subcaption[first caption.]{$R$}
	\end{minipage}
	\qquad\qquad
	\begin{minipage}{0.4\textwidth}
	\centering
	% This code uses the tikz package
\begin{tikzpicture}[scale=1.2]
\node (v0) at (-1.5,0) {Workforce};
\node (vA) at (0,1) {Competitor Output};
\node (vP) at (1.5,0) {Profit};
\draw [->] (vA) edge (v0);
\draw [->] (vA) edge (vP);
\draw [->] (v0) edge (vP);
\end{tikzpicture}
	\vspace{-2em}
	\subcaption[first caption.]{$R'$}
	\end{minipage}
\label{fig:intro example DAGs}
\end{figure}
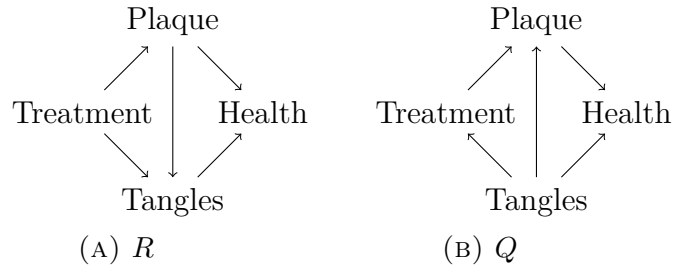

%Following \cite{Pearl1995Causal} and \cite{Spiegler2016}, we study a decision maker (DM) who makes predictions about her actions using a causal model described by a Directed Acyclic Graph (DAG). Each node in the graph represents a variable such as {\blue trust}, and crime rates. Each edge represents a belief that one variable directly influence another. DAGs allow for a flexible and non-parametric representation of causal relationships.
%For example, they can represent a DM who believes that {\blue trust} is an endogenous consequence of the level of policing implemented (R in Figure \ref{fig:intro example DAGs}), or that community trust in law enforcement may have affected previous choices of policing levels ($R'$ in Figure \ref{fig:intro example DAGs}).
%A random choice rule records how often the DM takes each action after observing each dataset describing the joint distribution of variables and actions. The choice rule has a \emph{subjective causality representation (SCR)} if she acts as if she uses a fixed DAG to predict the consequences of her action, and then chooses the action with the highest predicted utility most frequently.

Our first main result shows that the DM's choices jointly reveal  her subjective causal model, the variable  she cares about (the outcome), and her preferences. 
Three features of the causal model determine her predictions.
Most crucially, her predictions depend on the relationships between the variables that appear in the simplest causal paths through which she believes that her action can influence the outcome. %Relationships between variables that belong only to more complex paths do not affect her predictions.
For instance, the DAG $R$ contains two causal paths from her action to the outcome: $\text{Workforce }\rightarrow \text{ Output } \rightarrow \text{ Profit}$ and $\text{Workforce }\rightarrow \text{ Profit}$. As the nodes in the latter causal path are a subset of the nodes in the former, changing the distribution of competitor output will not change the  DM's behavior,  \emph{ceteris paribus}. 
%This reflects Occam's razor: more complex explanations implied by simpler ones do not affect her predictions. 
Her predictions also depend on the exogenous variables that have a common consequence  with the action, such as competitor output in the DAG $R'$. We refer to such variables as \emph{confounders}.
The DM adjusts her predictions to account for spurious correlations resulting from the confounders.
Any two causal models with the same confounders and the same simplest causal paths from either the action or a confounder to the outcome lead to the same predictions for every dataset.  This is reminiscent of Occam's razor, in that variables uninvolved in any of the simplest causal paths can be safely ignored.
We show that one can reveal these features by observing the DM's choices given different datasets. When we have additional knowledge about the DM's choice procedure,  then the variation required in the datasets decreases. This provides potential avenues to experimentally elicit people's subjective causal models.\footnote{Of course, one will have to consider the nature of the choice environment. This procedure is suitable for situations in which the people have formed an attachment to their causal model.}
%Our identification works regardless of and in complement to non-choice data about the DM's model, such as the timing of variables or pre-existing knowledge of perceptions of causality. 

In contrast to the large literature that empirically determines causality (e.g. \cite{Card1999}), the result identifies an agent's \emph{perception} of causal relationships. A policy change that does not objectively change the effectiveness of the agent's action might still change her prediction about its effect and thereby her choices. For instance, a firm that appears reluctant to hire minority workers may dislike employing minorities even though they are more or equally productive (taste-based discrimination). Alternatively, its reluctance may arise because minority-status is correlated with another attribute, such as education, that the firm thinks affects productivity (statistical discrimination, perhaps based on a wrong causal model, resulting in incorrect beliefs). Policies that attempt to remedy the latter, such as affirmative action in university admissions or awarding scholarships to minority students, may do nothing for the former.\footnote{
See \cite{Lang2020}  for an overview of different types of race discrimination, and \cite{Bohren2023} for an overview the limited extend to which the literature attempts to take misspecified beliefs into account.} Thus, knowing the firm's causal model may help determine the most effective policy to mitigate discrimination.

We extend our model to datasets generated by the DM's own (or others with the same model's) past behavior. Actions correspond to lotteries over the other variables. Her choices from each set  generate a dataset that combines the choice frequency of lotteries with their resulting distributions over the other variables. 
The choice rule has an \emph{Endogenous SCR} if she forms predictions using this endogenous dataset and her subjective causal model, and then chooses the lottery with the highest predicted expected utility plus a (extreme-valued) shock.
Her behavior influences the correlation between the variables, which in turn influence her predictions and thus her choices. While this feedback occurs in many studies of agents with misspecified models, e.g. \cite{EspondaPouzo2016Berk}, it has typically been absent in decision theoretic work on misspecification. We show in Sections \ref{sec: reg violation} and \ref{sec:self-confirming} that the feedback effect allows the model to accommodate a number of documented cognitive biases, including selection neglect, illusion of control, status-quo, and congruence.

 This feedback effect may, for example, cause a firm to incorrectly statistically discriminate against minority workers when it hires few of them, but not when it hires more.  Consider a firm who thinks that worker productivity is a function of only education. If education is correlated with productivity for majority workers, but not for minorities, then the perceived return to education increases with the fraction of majority workers hired. This incentivizes the firm to hire the better-educated type more frequently, potentially reinforcing the effect. While this allows the model to accommodate the above biases, it lead to technical challenges for our analysis, such as violations of (the stochastic choice analog of) independence of irrelevant alternatives, a necessary condition for a random utility model (RUM).

Our second main result establishes how to test whether a random choice rule over lotteries has an  Endogenous SCR. We provide axioms that link the DM's subjective causal model, as inferred by our first results, to her behavior. The axioms require that if her predictions about the outcome for a pair of actions are constant across two menus, then their relative choice frequency is also constant. Her choices from a pair of menus diverge from the predictions of Logit with an expected utility Luce index (Logit-EU) only if her inferences about causal effects  differ across the menus. 
%A DM with an {\blue Endogenous} SCR may predict a different distribution of outcomes from taking an action then the analyst does.

For a DM with an Endogenous SCR, the predicted distribution of the consequences from an action might not align with the one that actually results from taking it. 
The result establishes that subjective causality provides enough discipline on how her beliefs are distorted to be testable; without any restrictions on belief distortion, testing would be impossible.  More broadly, this paper adapts decision theoretic methodology to identify and test an agent's subjective model of the world and her preferences, as opposed to the usual exercise of identifying and testing her preferences with a correct, or at least agreed upon, model of the world. We see this as a step towards providing testable implications for the growing literature studying agents with misspecified models, especially \cite{Spiegler2016},  
\cite{Eliaz2020}, \cite{Spiegler2020a}, and \cite{Schumacher}, which all use versions of the subjective causality representation.%
\footnote{\label{fn:misspecifed}Other models where misspecification leads to distorted beliefs include \cite{EspondaPouzo2016Berk}, \cite{BohrenHauser2018}, \cite{Fricketal2019}, \cite{He2018}, \cite{HeidhuesKoszegiStrack2018},  \cite{SammuelsonMailath2019}, \cite{olea2021competing}, and \cite{Levyetal2021}.}

\subsection*{Related literature}

\cite{Pearl1995Causal} argued for using and analyzing DAGs to understand causality.
A large literature \citep[e.g.,][]{cowell1999,Pearl2009} develops and applies this approach for probabilistic and causal inference.
Applied researchers  use DAGs to estimate the causal effect of  interventions from  observational datasets, e.g. \cite{Tennat2020DAGHealth}.%
\footnote{ \cite{Imbens2020} contrasts this with the potential outcomes approach and discusses why these methods have attracted more attention outside of economics than within it.}  

\cite{Spiegler2016} first modeled misspecified causal reasoning using DAGs, albeit without axiomatic foundations.
He illustrates that his model has the power to capture a number of errors in reasoning, including reversed causality and omitted variables.
Taken together, our results allow one to test the underlying assumptions of existing work studying the effects of causal misperception.
This growing literature has been applied to  monetary policy \citep{Spiegler2020a}, political competition \citep{Eliaz2018model, eliaz2022false}, communication \citep{Eliaz2019}, inference \citep{Eliaz2020}, and contracting \citep{Schumacher}.
In both these papers and the present one,  the DM's behavior results from using a DAG and observational data to predict the outcome of her actions.
Consequently, our results identifying the DAG from behavior increase their applicability.

%\sout{  \cite{andre2021} reconstruct causal models of the macroeconomy based on verbal descriptions. Combining this with causal models constructed based on can improve our understanding of the causal models people use, and how they are related to people's willingness or ability to verbally disseminate them.} 

We follow Spiegler in focusing on how a DM with a subjective causal model interprets data and on the feedback between her predictions and behavior. In contrast,
\cite{Schenone2020causality} takes a more normative approach to causality. He considers a DM who expresses preferences over act-causal-intervention pairs, and provides necessary and sufficient conditions for her beliefs to result from applying the do-operator  to intervened variables for a fixed DAG and fixed prior. 
His analysis complements ours, showing what is possible given rich variation in exogenous interventions and mappings from variables to payoffs as opposed to rich variation in the data provided about the available actions. In contrast to both these approaches, \cite{Alexander} model perceived causal relationships as reducing Kolmogorov complexity but do not explicitly relate their notion to choice.

More generally, our paper is related to the decision theory literature studying DMs who misperceive the world.
\cite{Lipman1999} studies a DM who may not understand all the logical implications of information provided to her.
\cite{EllisPiccione2017} develop a model where agents misperceive the correlation between actions.
\cite{Kochov2015} models an agent who does not accurately foresee the future consequences of her actions.
\cite{Keetal2020} study DMs who perceive lotteries through a neural network.
\cite{EllisMasatlioglu2021} consider an agent who categorizes alternatives based on the context, and the alternative's categorization affects her evaluation (or perception) thereof.
\cite{Cerreiaetal2017ModelMisspecification} analyze a DM who considers several misspecified models and makes decisions that take her lack of confidence in the models into account.
In all, the DM's perception of an alternative is unaffected by her behavior.
 
Finally, our paper also falls into the theoretical literature studying random choice.
We fall between two strands.
The first seeks to use choice data to identify features of otherwise rational behavior, such as \cite{Gul2006Random} identifying the distribution of utility indexes, \cite{Lu2016} identifying an agent's private information, and \cite{apesteguia2018monotone} studying comparative risk and time preferences.
The second interprets randomness as a result of boundedly rational behavior in abstract environments, such as the \cite{manzini2014stochastic}, \cite{brady2016menu}, and \cite{cattaneo2020randomattention} models of limited attention.
This paper uses random choice to identify features of explicitly boundedly-rational behavior.
The only other paper of which we are aware that uses stochastic choice to capture equilibrium behavior is \cite{chambersetal2021influence}.

\section{Model}
\label{sec:Model}
\subsection{Primitives}
A DM chooses an action after observing a dataset $q$.
A set $\mathcal{X}_0$ having at least two elements denotes the set of possible actions.
Taking an action affects the distribution of a random vector $X=(X_1,\dots,X_{n})$ taking values in 
$\prod_{i=1}^{n} \mathcal{X}_{i}\equiv \mathcal{X}_{-0}$, where $\mathcal{X}_i$ is a finite subset of $\mathbb{R}$ with at least two distinct elements for each $i>0$.
The dataset $q$ informing her choice  describes past (joint) realizations of $X$ and actions, with $q(a,x_1,\dots,x_{n})$ representing frequency of observations of action $a$ and $X_i=x_i $ for every $i$.
Formally, $q$ is a distribution over $\mathcal{X} \equiv\mathcal{X}_0 \times \mathcal{X}_{-0}$ assigning positive probability to a finite set $S_q$ of actions and where, to avoid issues of updating on zero probability events, $ q(a,x)>0 $ for all $x \in \mathcal{X}_{-0}$ and $a \in S_q$; let $\mathcal{D}$ be the set of such datasets.
We also make the simplifying assumption that the DM's choice set equals $S_q$.%
\footnote{This avoids the DM having to form predictions about actions not in the dataset. One can easily extend the model to choice from a known  subset of $S_q$.}
An important special case, formally studied in Section \ref{sec:Foundation}, is when the dataset $q$ is derived endogenously from the DM's (random) choice from a set of lotteries over $\mathcal{X}_{-0}$.
	
A random choice rule is a function $\rho: \mathcal{X}_0 \times \mathcal{D} \rightarrow [0,1]$ where $\sum_{a \in S_q}\rho(a,q)=1$ and $\rho(a,q)=0$ for every $a \notin S_q$. The choice rule describes the DM's behavior. The probability she chooses action $a$ given dataset $q$ is $\rho(a,q)$. We take $\rho$ to be observable.

We adopt the following notational conventions. For a set $B$, $\Delta(B)$ denotes the set of finite-support probability measures on  it. For a set $J \subset N\equiv  \{0,1,\dots,n\}$, $\mathcal{X}_J= \prod_{j\in J} \mathcal{X}_j$,   $x_J$ denotes the event $\{ y\in \mathcal{X}: y_j=x_j \text{ for all }j\in J\}$ for any  $x \in \mathcal{X}_{J'}$ with $J \subseteq J'$, and  $\marg_J p$ denotes the marginal distribution of $p$ on   $\mathcal{X}_J$ when $p \in \Delta(\mathcal{X}_{J'})$ for $J \subseteq J'$.
With slight abuse of notation, we sometimes write $x_j$ instead of $x_{\{j\}}$ and  $x_\emptyset$ for an arbitrary constant random variable.
For $p \in \Delta(\mathcal{X} )$ and disjoint sets $A,B \subset N$, write $X_A \perp_{p} X_B$ if $p(x_{A\cup B})=p(x_{A})p(x_{B})$ for all $x \in \mathcal{X}$.
%Define the mixture between  $p_1,p_2\in \Delta(B)$, $\alpha a +(1-\alpha) b$, in the usual way.
Finally, for $p_1,p_2\in \Delta(\mathcal{X}_i)$, say that $p_1$ (strictly) FOSD $p_2$ if $p_1((-\infty,y))\leq p_2((-\infty,y))$ for all $y$ (and without equality for some $y$).

\subsection{Causal Graphs}

%\cite{Pearl1995Causal} introduces a method for predicting the   effect of exogenously forcing a variable, such as the DM's action, to take a particular value on the distribution of the other variables. 
We model perception of causality  using  a directed acyclic graph (DAG) $R$ over the set $N$.
The DAG $R$ is an acyclic  binary relation, with $iRj$ denoting $(i,j)\in R$ and indicating that   $X_i$ is a direct cause of $X_j$.
Visually, $R$ describes the set of directed edges in a graph with node set $N$.

The following standard terminology will be useful.
The \emph{parents} of $i$, denoted $R(i)$, are the variables that directly cause $X_i$.
The node $j$ is an \emph{ancestor} of $i$, denoted $j \in An_R(i)$, if there are $i_1,\dots,i_m \in N$ so that  $j R i_1 R i_2 R\dots R i_m R i$, and $j$ is a \emph{descendant} of $i$ if $i \in An_R(j)$.
The tuple $(i,j,k)$ is an \emph{$R$-v-collider} if $i Rk$, $j R k$, $j \notr i$, and $i \notr j$.

We follow \cite{Pearl1995Causal} in assuming that if  one intervenes to force $X_0$  to equal  $x_0 $ without observing the realization of any other variables,  the DAG $R$ predicts that the random vector $X$ equals $x\in \mathcal{X}_{-0}$ with probability 
\begin{equation}
q_R(x \mid do(x_0))= \prod_{j = 1}^n q \left( x_j \mid x_{R(j)} \right) 
\label{eq: factorization}
\end{equation}
given the dataset $q$.
The intervention only affects the distribution of variables caused by it, which in turn only affect  variables caused by them. 
Equation (\ref{eq: factorization}) makes  $X_i$ is independent of $X_j$ conditional on $X_{R(i)}$ whenever $j$ is not a descendant of $i$.
In particular, the intervention does not change the distribution of its  parents, even if they are correlated with it.
%This reflects the DM's perception that taking an action is an exogenous intervention by her that does not affect any variable that is not a descendant of it (including its parents).
Given  the DAGs in Figure \ref{fig:intro example DAGs} and a dataset $q$, $q_{R'}(x_{\consequence}|do(x_0))$ adjusts for the potential confounding effect of competitor output, whereas $q_{R} (x_{\consequence}|do(x_0))$ does not.
%Note that equation (\ref{eq: factorization}) generalizes Bayes' rule, since setting $R'(0)=\emptyset$ and $R'(j)=\{0,\dots,j-1\}$ for $j>0$ obtains $q_{R'}(\cdot \mid do(x_0))=q(\cdot \mid x_0)$ via the chain rule for probabilities.

\subsection{Subjective Causality Representation}
We consider a DM who cares about the realization of only one of the variables, referred to as the  \emph{outcome} and labeled $n^*$.
The other  dimensions correspond to \emph{covariates} that aid the DM in her inferences about the distribution of the outcome.
The DM  has a fixed  subjective causal model  given by the DAG $R$.
The DAG describes her beliefs about the causal relationships between variables but does not  impose any restrictions on the sign or the magnitude of the effects. 
She uses the dataset $q$ and DAG $R$ to predict the outcome distribution resulting  from the intervention ``take action $a$'' for each $a\in S_q$, according to Equation (\ref{eq: factorization}). 
Then, she chooses the action  with  the largest predicted expected utility. 

Our representation restricts attention to a nontrivial and perfect DAG. 
A DAG $R$ is \emph{nontrivial} if $0\in An_R(n^*)$ and \emph{perfect} if there are no $R$-v-colliders.
Neither places any restrictions on  $q \in \mathcal{D}$.
If $R$ is trivial, then  $q_R(x_{n^*}|do(a))$ is  independent of $a$ for every $x\in \mathcal{X}$ and $a \in S_q$.
A perfect DAG implies that the DM does not 
neglect the correlation between two variables used to predict the realization of another variable. We discuss perfection further in Section \ref{sec:remarks}.

\begin{defn}
Choice rule $\rho$ has a \emph{subjective causality representation (SCR)} if there exists a perfect, nontrivial DAG $R$, index $n^* \in N\setminus\{0\}$, and  strictly increasing $u$ so that for any $q \in \mathcal{D}$ and $a \in S_q$, $\rho(a,q)\geq \rho(b,q)$ for all $b\in S_q$  $\iff$ $$U_q(a)=\sum_{x\in \mathcal{X}_{-0}}u(x_{n^*}) q_R(x_{n^*}\mid do(a))\geq \sum_{x\in \mathcal{X}_{-0}}u(x_{n^*}) q_R(x_{n^*}\mid do(b))=U_q(b)$$
for all $b \in S_q$. %$U_q(a)\geq U_q(b)$ for all $b\in S_q$, where
%$$U_q(c)=\sum_{x\in \mathcal{X}_{-0}}u(x_{n^*}) q_R(x\mid do(c)).$$
Then, we say that $R$ represents $\rho$ and that $\rho$ has an SCR $(R,u,n^*)$.
\end{defn} 

The DM's most frequent choice (or choices) maximizes expected utility but with a potentially incorrect prediction about the outcome distribution. 
%Her prediction uses her causal model $R$ and computes the magnitude of each of the causal effects in $R$ from $q$.
She expects to receive outcome $x_{n^*}$ with probability $q_R(x_{n^*}|do(a))$ if she takes action $a$.

\subsection{Running Example}\label{sec: example}
Throughout, we will illustrate using the following example. The random choice rule represents the frequency with which a management consultant recommends
%the {\blue mayor} implements 
the different sizes of the workforce based on observational data. The data available to the consultant
is comprised of four variables: the size of the firm's workforce (indexed by $0$), its own output (indexed by $\vari  = 1$), its competitor's output (indexed by $\covariate = 2$), and its profit (indexed by $\consequence = 3$). 
The consultant is hired to maximize
the firm's profit. When choosing the size of workforce, she does not know the realizations of any of the variables, but she has access to a dataset $q$ that contains their joint distribution. 
The vector $(a,x_{\vari },y_{\covariate},z_{\consequence})$ represents a situation in which the size of workforce is given by $a$, own output is $x$, competitor output is $y$ and profit is $z$; $q(a,x_{\vari },y_{\covariate},z_{\consequence})$ is the frequency with which a situation with these characteristics appears in the dataset. 

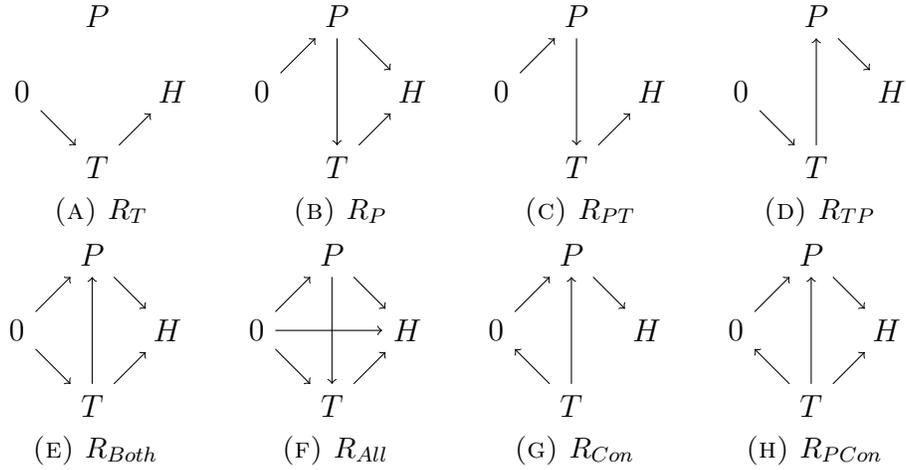
\begin{figure}
\caption{Possible DAGs in the Running Example}
	\begin{minipage}{0.2\textwidth}
	\centering
	% This code uses the tikz package
\begin{tikzpicture}[scale=1]
\node (v0) at (3,1) {$0$};
\node (vE) at (4,2) {$\vari$};
\node (vA) at (4,0) {$\covariate$};
\node (vP) at (5,1) {$\consequence$};
\draw [->] (v0) edge (vA);
\draw [->] (vA) edge (vP);
\end{tikzpicture}
	\vspace{-1.5em}
	\subcaption[first caption.]{$R_{\covariate}$}
	\end{minipage}
	\begin{minipage}{0.2\textwidth}
	\centering
	% This code uses the tikz package
\begin{tikzpicture}[scale=1]
\node (v0) at (3,1) {$0$};
\node (vE) at (4,2) {$\vari$};
\node (vA) at (4,0) {$\covariate$};
\node (vP) at (5,1) {$\consequence$};
\draw [->] (v0) edge (vE);
\draw [->] (vE) edge (vP);
\draw [->] (vE) edge (vA);
\draw [->] (vA) edge (vP);
\end{tikzpicture}
	\vspace{-1.5em}
	\subcaption[first caption.]{$R_{\vari }$}
	\end{minipage}
	\begin{minipage}{0.2\textwidth}
	\centering
	% This code uses the tikz package
\begin{tikzpicture}[scale=1]
\node (v0) at (3,1) {$0$};
\node (vE) at (4,2) {$\vari$};
\node (vA) at (4,0) {$\covariate$};
\node (vP) at (5,1) {$\consequence$};
\draw [->] (v0) edge (vE);
\draw [->] (vA) edge (vP);
\draw [->] (vE) edge (vA);
\end{tikzpicture}
	\vspace{-1.5em}
	\subcaption[first caption.]{$R_{\vari \covariate}$}
	\centering
	\end{minipage}
	\begin{minipage}{0.2\textwidth}
	\centering
	% This code uses the tikz package
\begin{tikzpicture}[scale=1]
\node (v0) at (3,1) {$0$};
\node (vE) at (4,2) {$\vari$};
\node (vA) at (4,0) {$\covariate$};
\node (vP) at (5,1) {$\consequence$};
\draw [->] (v0) edge (vA);
\draw [->] (vE) edge (vP);
\draw [->] (vA) edge (vE);
\end{tikzpicture}
	\vspace{-1.5em}
	\subcaption[first caption.]{$R_{\covariate\vari }$}
	\end{minipage}
	
	\begin{minipage}{0.2\textwidth}
	\centering
	% This code uses the tikz package
\begin{tikzpicture}[scale=1]
\node (v0) at (3,1) {$0$};
\node (vE) at (4,2) {$\vari$};
\node (vA) at (4,0) {$\covariate$};
\node (vP) at (5,1) {$\consequence$};
\draw [->] (v0) edge (vE);
\draw [->] (v0) edge (vA);
\draw [->] (vA) edge (vP);
\draw [->] (vA) edge (vE);
\draw [->] (vE) edge (vP);
\end{tikzpicture}
	\vspace{-1.5em}
	\subcaption[first caption.]{$R_{Both}$}
	\end{minipage}
	\begin{minipage}{0.2\textwidth}
	\centering
	% This code uses the tikz package
\begin{tikzpicture}[scale=1]
\node (v0) at (3,1) {$0$};
\node (vE) at (4,2) {$\vari$};
\node (vA) at (4,0) {$\covariate$};
\node (vP) at (5,1) {$\consequence$};
\draw [->] (v0) edge (vE);
\draw [->] (v0) edge (vA);
\draw [->] (v0) edge (vP);
\draw [->] (vA) edge (vP);
\draw [->] (vE) edge (vA);
\draw [->] (vE) edge (vP);
\end{tikzpicture}
	\vspace{-1.5em}
	\subcaption[first caption.]{$R_{All}$}
	\end{minipage}
	\begin{minipage}{0.2\textwidth}
	\centering
	% This code uses the tikz package
\begin{tikzpicture}[scale=1]
\node (v0) at (3,1) {$0$};
\node (vE) at (4,2) {$\vari$};
\node (vA) at (4,0) {$\covariate$};
\node (vP) at (5,1) {$\consequence$};
\draw [->] (v0) edge (vE);
\draw [->] (vA) edge (v0);
\draw [->] (vE) edge (vP);
\draw [->] (vA) edge (vE);
\end{tikzpicture}
	\vspace{-1.5em}
	\subcaption[first caption.]{$R_{Con}$}
	\end{minipage}
	\begin{minipage}{0.2\textwidth}
	\centering
	% This code uses the tikz package
\begin{tikzpicture}[scale=1]
\node (v0) at (3,1) {$0$};
\node (vE) at (4,2) {$\vari$};
\node (vA) at (4,0) {$\covariate$};
\node (vP) at (5,1) {$\consequence$};
\draw [->] (v0) edge (vE);
\draw [->] (vA) edge (v0);
\draw [->] (vE) edge (vP);
\draw [->] (vA) edge (vE);
\draw [->] (vA) edge (vP);
\end{tikzpicture}
	\vspace{-1.5em}
	\subcaption[first caption.]{$R_{\covariate Con}$}
	\end{minipage}
\label{fig:example DAGs}
\end{figure}

Figure \ref{fig:example DAGs} gives some possible DAGs, each representing a different theory of causation.%
\footnote{Given an application some DAGs might be pre-emptively ruled out as unreasonable, such as $R_{\covariate }$ in Figure \ref{fig:example DAGs}. Our approach does not rely on pre-existing knowledge but can accommodate it.
%DAGs are common tools in applied health research; see \cite{Tennat2020DAGHealth} for a survey.
}
A consultant whose causal model is 
%A {\blue mayor}'s causal model 
represented by $R_{\vari }$ believes that workforce size directly influences only own output, that competitor output is directly influenced by own output, and that both outputs directly influence profits. Roughly, the consultant believes that the firm is a Stackelberg leader.
%$R_{\covariate }$($R_T$) believes that the size of the workforce directly influences the firms own output and own output, and own output alone, impact profits. 
By contrast, one whose causal model is represented by $R_{Both}$ believes that  workforce size directly influences both own and competitor output (e.g., as a signal of production capacity), which in turn influence profit.
% directly influence profits, that both workforce size and own output influence competitor's output, and that only workforce size influence own output. 
The DAG $R_{\covariate Con}$ differs from $R_{Both}$ only in that competitor output is believed to have affected the choice of workforce size in the data rather than vice versa. This may occur due to a disagreement about the speed at which the competitor can change their output. 

While both $R_{\covariate Con}$ and $R_{Both}$ explain any dataset equally well, they may lead to opposite predictions about which action is better.%
\footnote{Any data consistent with $R_{\covariate Con}$ is consistent with $R_{Both}$ by Thoerem 1 of \cite{Pearl1991}.}
Consider the dataset $q$ described by Table \ref{tab: example data}, letting $1$ represent the larger realization of all variables and  $0$ represent the smaller realization. Moreover, assume that own output is high or low (almost always) when workforce is big or tiny, i.e.    $q(0_{\vari}|y_{\covariate },a)\approx 1$ and  $q(1_{\vari}|y_{\covariate },b)\approx 1$ for each $y \in \{0,1\}$.
Now, if the DM  is represented by $R_{Both}$, then she takes action $a$ since 
\begin{align*}
q_{R_{Both}}(1_{\consequence}|do(a))= &
\sum_{y=0,1} q(y_{\covariate }|a)
\left[\sum_{x=0,1}q(x_{\vari }|y_{\covariate },a)  q(1_{\consequence}|x_{\vari },y_{\covariate }) \right] \\
\approx & 0.2*(1* 0.7 + 0* 0.3) + 0.8 * (1*0.3 + 0* 0.7) = 0.38\\ 
< & 0.52 = 0.8*0.6 + 0.2 * 0.2 \approx q_{R_{Both}}(1_{\consequence}|do(b)).
%0.8 *(1*0.6 + 0*0.2) + 0.2 *(1*0.2 +0*0.6) = 0.52\\
%> & 0.38= 0.2 * 0.7 + 0.8 * 0.3 \approx q_{R_{Both}}(1_{\consequence}|do(b)).
\end{align*} 
%and $$q_{R_P}(\text{healthy}|do(b))\approx   0.4.$$
If she  is instead represented by $R_{\covariate Con}$, then she takes action $b$ since
\begin{align*}
q_{R_{\covariate Con}}(1_{\consequence}|do(a))= &
\sum_{y=0,1}q(y_{\covariate } )
\left[ \sum_{x=0,1}q(x_{\vari }|y_{\covariate },a ) q(1_{\consequence}|y_{\covariate },x_{\vari })  
\right] \\
\approx & \frac12*(1* 0.7 + 0*0.3) + \frac12*(1*0.3 + 0*0.7) = 0.5 \\
> & 0.4 = \frac12 *0.6 + \frac12 * 0.2  \approx  q_{R_{\covariate Con}}(1_{\consequence}|do(b)).
%\frac12* (0.6 * 1 + 0.4*0) + \frac12 *(0.2*1+0.8*0) =0.4\\
% < & 0.5 = \frac12 * 0.7 + \frac12 * 0.3 \approx  q_{R_{\covariate Con}}(1_{\consequence}|do(b)).
\end{align*}
The expression $q_{R_{\covariate Con}}(1_{\consequence}|do(a))$ treats competitor output as a confounder affecting the relationship between workforce size and own output. In contrast, $q_{R_{Both}}(1_{\consequence}|do(a))$ treats it as an endogenous consequence of workforce size. The consultant's beliefs about causality map the same correlations into different predictions.

\subsection{Remarks on the Model}
\label{sec:remarks}
The agent is dogmatic about her model $R$, in that she does not consider switching  models because of her dataset. This conforms with evidence from psychology that, as summarized by \citet[p. 107]{Sloman2005}, ``causal explanations quickly become independent of the data from which they are derived.'' We view our setting as one in which the DM has already formulated her model, perhaps from past experience or earlier data, and does not deviate from it on the basis of the present dataset.

We focus on perfect DAGs for philosophical and psychological reasons. Perfection limits how wrong the DM's predictions can be, as it ensures that the DM correctly predicts the marginal distribution of individual variables.  Psychology experiments \citep{lombrozo2007simplicity,pacer2017ockham} indicate  a preference for stories with fewer unexplained variables, and a perfect DAG has exactly one unexplained variable.  Moreover, perfect DAGs are widely applied. Most of  the literature following \cite{Spiegler2016} uses perfect subjective DAGs. The first step of many algorithms using DAGs is to make the DAG perfect if it is not already (e.g. Chapter 6 of \cite{cowell1999}). Natural procedures, such as  extrapolating from multiple datasets by filling in missing correlations to maximize  entropy,  can be represented as coming from a perfect DAG \citep{Spiegler2017}.  

The ideas behind  our identification results apply to imperfect DAGs as well.  V-colliders introduce complications similar to confounding variables (defined formally in the next section). Confounders can be treated as v-colliders with the action. However, any v-collider ancestral to the outcome  affects the DM's prediction, even if it is exogenous and far removed from the variables that she thinks her action can affect. While tedious, these can be identified from the DM's behavior in a similar manner as we do with confounders. 

Our identification results only require that the DM's most frequent choices maximize perceived expected utility. This allows application to several different choice procedures. Of particular interest are the following three. First, $\rho(\cdot, q)$ equals the uniform distribution on $\arg\max_{S_q} U_q(\cdot)$. This nests deterministic choice with $c(S_q)=\supp \rho(\cdot, q)$.   Second, $\rho(a,q)$ equals $$\Pr(\{\varepsilon:U_q(a)+ \varepsilon_a\geq U_q(b)+\varepsilon_b\ \forall b\in S_q\})$$ where $\{\varepsilon_i\}_{i\in S_q}$ are independent and identically distributed. This includes Logit and Probit  for particular distributions of $\varepsilon_i$. Our endogenous dataset results focus on Logit so that all actions are chosen with positive probability. Third, $\rho(a,q)>\frac12$ if and only if $U_q(a)>U_q(b)$ when $X_0=\{a,b\}$.

We assume that that the agent cares about influencing a single random variable.
Our results immediately extend to multiple outcomes provided that they are all directly causally related to one another.  Otherwise, the paths between the outcomes may also matter. We leave this as an open question for future work.

%In this way, the model nests \cite{Spiegler2016} and its follow-ups.

\section{Revealing the Subjective Causal Model}
\label{sec:ID}
When the DM is represented by the DAG $R$, three features of it affect her predictions: its confounders, its minimal active paths (MAPs) from the action to the outcome, and its MAPs from a confounder to the outcome. 

A confounder is a variable that the DM believes can influence both her action and another variable, and thereby affect the relationship between them. The node $i^*$ is an \emph{$R$-confounder} if $i^* R0$, $0Rl$ and $i^*Rl$ for some $l\in An_R(n^*)\cup\{n^*\}$.
Competitor output is a $R_{\covariate Con}$-confounder  but not an $R_{Both}$-confounder.

A MAP  is a (directed) path in the DAG that cannot be made shorter by omitting variables.  Paths in the DAG captures a chain  of causal reasoning, and so MAPs capture the simplest ways in which one variable can affect another.
Formally, the finite sequence $(i_0,\dots,i_m)$   is an \emph{$R$-MAP from $i_0$ to $i_m$} if $i_j R i_{k}$ if and only if $k=j+1$ and $i_j$ is an $R$-confounder or $0$ only if $j=0$.
The path $(0,\vari,\consequence)$ is both an $R_{Both}$-MAP and an $R_{\covariate Con}$-MAP,  $(0,\covariate ,\consequence)$ is an $R_{Both}$-MAP but not an $R_{\covariate Con}$-MAP,
and $(0,\covariate ,\vari,\consequence)$ is neither an $R_{Both}$-MAP nor an $R_{\covariate Con}$-MAP.

\begin{thm}
\label{thm: RMAP characterize DAG}
	Let $\rho$ have a SCR $(R,u,n^*)$ and $R'$ be a perfect DAG.
	Then, $\rho$ has an SCR $(R',u,n'{}^*)$ if and only if $n^*=n'{}^*$, $R$ and $R'$ have the same confounders, and $R$ and $R'$ have the same MAPs from the action or a confounder to $n^*$.
%	 the set of $R$-confounders equals the set of  $R'$-confounders, the set of  $R^{\prime}$-MAPs from an $R$-confounder or $0$ to $n^*$ equals the set of  $R$-MAPs from an $R$-confounder or $0$ to $n^*$,
\end{thm}

 Theorem \ref{thm: RMAP characterize DAG} shows that  the identity of the outcome, the confounding variables, and the simplest causal paths determine her predictions.
 Paths from her action to the outcome represent the direct and indirect ways that she thinks her action could affect her payoff.
DMs who disagree on the simplest paths make different choices.
The DM does not think that her action affects confounders, and so adjusts for them when forming her predictions. Different confounders, or paths from confounders to the outcome, lead to different adjustments and thus different behavior.

Because a MAP cannot be made shorter, it represents one of the simplest causal chains. The result thus provides a version of Occam's razor:
the simplest ways in which the DM thinks her action (or a confounder) can affect the outcome determine her choices.
For instance, a DM represented by $R_{\vari}$ behaves identically to one represented by $0\rightarrow \vari \rightarrow \consequence$ even though $R_{\vari}$ contains the additional path $(0,\vari,\covariate ,\consequence)$.
The latter path captures a belief that the effect of own output on profits is partly due to its effect on competitor output.
However, this does not affect her estimate of own output's total effect on profits. Because both DMs think that their action can only influence own output, the two make the same predictions (and hence the same choices).
Mathematically, the effect of competitor output on profits integrates out through the identity $\sum_{x_{\vari} \in \mathcal{X}_{\vari}} q(z_{\consequence}|x_{\vari},y_{\covariate })q(x_{\vari}|y_{\covariate })=q(z_{\consequence}|y_{\covariate })$.

\begin{rem}
If $\rho$ has SCRs $(R,u,n^*)$ and $(R',u',n^*)$, then there exists $\alpha>0$ and $\beta$ so that $u(z)=\alpha u'(z)+\beta$ for every $z \in \mathcal{X}_{n^*}$. The exact form of the converse depends on how much structure one puts on the random choice rule. For instance, the converse holds if $\rho(\cdot,q)$ is uniformly distributed on $\arg\max_{S_q} U_q (\cdot)$ and holds with the extra requirement that $\alpha=1$ if $\rho(\cdot,q)$ is Logit or Probit.
\end{rem}

\subsection{Implications}
Theorem \ref{thm: RMAP characterize DAG} has two immediate implications that allow us to simplify the set of DAGs considered.
First, only relationships between variables that appear in at least one $R$-MAP affect the DM's behavior.
In particular, any variables that she thinks are caused by the outcome are inconsequential for her behavior.
Second, the simplest causal chains of causality determine all other relevant causal relationships.
While there may be edges not in an $R$-MAP, their existence and direction can either be determined from the $R$-MAPs or are immaterial to the DM's choices.

The  economic content of Theorem \ref{thm: RMAP characterize DAG} is that identifying the $R$-MAPs, confounders, and the outcome suffices to pin down the DM's inferences.
Any two DMs whose DAGs differ on at least one MAP evaluate some actions differently.
In fact, given ``almost any'' dataset where a subset of variables that includes $n^*$ and $0$ is independent of the others, the utility difference between any pair of actions is uniquely determined by the MAP (or MAPs) inside that subset.
\begin{proposition}\label{prop: MAP nongeneric}
Suppose that $(i_0,\dots,i_m)$ is an $R$-MAP from $0$ to $n^*$ but not an $R'$-MAP from $0$ to $n^*$ and that $I =\{i_0,\dots,i_m\}$.
For every non-constant $u$ and almost every $q \in \mathcal{D}$ so that  $X_I \perp_q X_{I^c}$ and $S_q=\{a,b\}$,   $\sum_x u(x) [q_R(x_{n^*}|a)-q_R(x_{n^*}|b)] \neq \sum_x u(x)[q_{R'}(x_{n^*}|a)-q_{R'}(x_{n^*}|b)]$.
\end{proposition}

The result shows that a DM with model $R$ evaluates actions differently than with model $R'$ whenever $(i_0,\dots, i_m)$ is not an $R'$-MAP. This holds regardless of how $R'$ relates the variables in  $I=\{i_0,\dots,i_m\}$. They may contain no $R'$-MAP  or an $R'$-MAP distinct from $(i_0,\dots, i_m)$.
When we know the DM's tastes and how the DM randomizes, this  makes identification of DM's MAPs relatively easy. The result is particularly applicable in a laboratory setting where preferences can be induced.
If the DM's randomness comes from logit or probit errors, then the frequency with which she chooses action $a$ reveals the utility difference. 
In that case, one can calculate $q_{R'}(x_{n^*}|c)$ for each $x$, $c\in S_q$, and $R'$, and then check whether the DM chooses consistently with that calculation.

The independence needed for this result, as well as Lemmas \ref{lem: I contains AP iff block} and \ref{lem: adjacent iff rmap} below, can be created from any dataset $q$ by providing additional information about the outcomes of certain variables. In an experiment, the experimenter can ask subjects ``what would you choose if the random vector $X_A$ were equal to $x_A$ regardless of your choice?'' Assuming the DM believes this information, the resulting dataset $q'=q(\cdot|x_A)$ satisfies $X_A \perp_{q'} X_{A^c}$. Similar variation would result from policy interventions or court decisions that require bureaucrats to disregard certain information, such as race, gender, or income, when making their decision. Assuming they follow these instructions, they should act as if those variables are independent of all others. An experimenter can also generate the any of the independences needed for Proposition \ref{prop: MAP nongeneric} or Lemma \ref{lem: I contains AP iff block} from a single dataset by splitting it into multiple partial datasets provided that the DM extrapolates from these partial datasets as in \cite{Spiegler2017}.

\subsection{Identification}

We now show how one can identify the $R$-MAPs from behavior without any pre-existing information on the DM's preferences or how she randomizes. 
Observe that the DM infers that the parents of a variable have no causal effect on it when it is independent of them.
We can tease out her DAG from her behavior when she forms her predictions using a dataset where certain sets of variables are independent of all others.

We first identify the sets of variables that intersect every $R$-MAP. To state the lemma, let  $\bar{x} = \left(\max \mathcal{X}_i \right)_{i=1}^n$ and  $\underline{x} = \left(\min \mathcal{X}_i \right)_{i=1}^n$. 

\begin{lemma}
\label{lem: I contains AP iff block}
If $\rho$ has an SCR $(R,u,n^*)$ and $0\notin A$, then the following are equivalent:
\begin{enumerate}[(i)]
\item every $R$-MAP from $0$ to $n^*$ intersects $A$ 
\item  for any dataset $q$, $\rho(a,q)=\rho(b,q)$  for all $a,b \in S_q$ whenever $X_A \perp_q X_{A^c}$
\end{enumerate}
Moreover, if $R$ is uncounfounded and $q$ is a dataset having $S_{q}=\{a,b\}$,  $X_A \perp_{q} X_{A^c}$, and  ${q}(\bar{x}_{A^c}|a),{q}(\underline{x}_{A^c}|b)>   \left((2^{1/n} -1){q}(a)+1 \right)^{-1}$, then either is equivalent  to
\begin{enumerate}[(i)]
\setcounter{enumi}{2}
\item $\rho(a,{q})=\rho(b,{q})$
\end{enumerate}   
\end{lemma}

If Lemma \ref{lem: I contains AP iff block}.(ii) holds for $A$, then the DM is indifferent between her available actions whenever $X_A$ is independent of the other variables in her dataset. 
%That is, she predicts that there is no relationship between her action and  the outcome whenever there is no relationship between the variables indexed by $A$ and the others. 
This reveals that every $R$-MAP intersects $A$.
To illustrate why, consider a consultant who is equally likely to recommend every workforce size whenever own output is independent of the other variables in the data.
When large profits and high competitor output occur with very high probability for big workforce and with very low probability for tiny workforce, she is indifferent between the two actions only if she does not think the positive correlations are causal. This requires that every causal chain in her DAG includes own output. 

Based on Lemma \ref{lem: I contains AP iff block}, we introduce the following definitions.
\begin{defn}
\label{def: separator}
A set $A$ of variables is a \emph{$\rho$-separator} if either $0 \in A$ or Lemma \ref{lem: I contains AP iff block}.(ii)  holds for $A$, and   $\mathcal{A}^\rho$ is the set of $\subseteq$-minimal $\rho$-separators.
A \emph{selection from the $\rho$-separators} is a $\subseteq$-minimal set $I \subseteq N$ so that $I \cap A \neq \emptyset$  for all $A \in \mathcal{A}^\rho$. 
\end{defn}
By  Lemma \ref{lem: I contains AP iff block}, $A$ is a $\rho$-separator if it intersects every $R$-MAP from $0$ to $n^*$.
An $R$-MAP must contain at least one index from every $A \in \mathcal{A}^\rho$ but cannot contain too many from any of them.
This suggests, and Lemma \ref{lem: adjacent iff rmap} proves, that a selection from the separators $I$ contains the variables in an $R$-MAP. 

We also say that a dataset $q$ has \emph{$\{i ,j \}$-MLRP} if  $S_q = \{a,b\}$ and $\frac{q(x_{i }|y_{j })}{q(x'_{i }|y_{j })}\geq \frac{q(x_{i }|y'_{j })}{q(x'_{i }|y'_{j })} $  when $x >x'$ and $y>y'$, without equality for at least one  $(x,x',y,y')$ and where ``$a>b$'' is taken to be true. This implies that a larger value of $X_i$ increases the likelihood of higher values of $X_j$ and vice versa.
\begin{defn}
\label{def: rho adjacent}
The variable \emph{$i $ is  $\rho$-adjacent to $j  \neq i$}  if there is a selection from the $\rho$-separators $I$ with $i,j \in I$ and $\rho(a,q)=\rho(b,q)$ for some $q\in \mathcal{D}$ for which
$$X_I \perp_{q} X_{I^c} \ \& \ X_{i} \perp_{q} X_{j}$$ 
and that   has $\{i',j'\}$-MLRP for all distinct $i',j' \in I$ s.t. $\{i',j'\} \neq \{i , j\},\{k,0\}$ for some $k \in \{i,j\}$ that is not $\rho$-adjacent to $0$.
% For  a selection  $I $ from the $\rho$-separators:
% \emph{$i \in I \setminus \{0\}$ is  $\rho$-adjacent to $0 $} if   $\rho(a,\{a,b\})=\rho(b,\{a,b\})$ for some $\{a,b\}$ so that $\rho^{\{a,b\}}$   has $\{i',j'\}$-MLRP for all distinct $i',j' \in I$ s.t. $\{i',j'\} \neq \{i , 0\}$,   $$X_I \perp_{\rho^{\{a,b\}}} X_{I^c} \ \& \ X_{i} \perp_{\rho^{\{a,b\}}} X_{j},$$
% and when
% $i \in I \setminus \{0\}$ is not $\rho$-adjacent to $0$, \emph{$i$ is  $\rho$-adjacent to $j \in I \setminus \{0,i\}$} if   $\rho(a,\{a,b\})=\rho(b,\{a,b\})$ for some $\{a,b\}$ so that $\rho^{\{a,b\}}$   has $\{i',j'\}$-MLRP for all distinct $i',j' \in I$ s.t. $\{i',j'\} \neq \{i , j\},\{i,0\}$,   $$X_I \perp_{\rho^{\{a,b\}}} X_{I^c} \ \& \ X_{i} \perp_{\rho^{\{a,b\}}} X_{j}.$$
%We also say that \emph{$A \in \mathcal{A}^\rho $ is adjacent to $B \in \mathcal{A}^\rho $}, written $A \propto^\rho B$, if   every $k \in A  \setminus B$ is $\rho$-adjacent to every $l \in B \setminus A$ and $ A\neq B$.
\end{defn}

Note that $0$ is never $\rho$-adjacent to itself, so we can use $k=0$ to determine whether or not $0$ is $\rho$-adjacent to $j$ for any $j$.
We can now reveal the $R$-MAPs.

\begin{lemma}
\label{lem: adjacent iff rmap}
If $\rho$ has an SCR $(R,u,n^*)$, then for any $I=\{i_0,\dots,i_m\}$ with $i_0=0$, the following are equivalent:
\begin{enumerate}[(i)]

\item $(i_0,\dots,i_m)$ is an $R$-MAP from $0$ to $n^*$ 

\item   $I$ is a selection from the $\rho$-separators, and for any  $j=1,\dots,m$ and $q \in \mathcal{D}$, $X_I \perp_q X_{I^c}$ and $X_{i_{j-1}} \perp_q X_{i_{j}}$ imply that $\rho(a,q)=\rho(b,q)$ for all $a,b\in S_q$  

%$\{i_0,\dots,i_m\}=I$ is a selection from the $\rho$-separators with $i_0 = 0$, and for any  $j=1,\dots,m$ and $q \in \mathcal{D}$, $X_I \perp_q X_{I^c}$ and $X_{i_{j-1}} \perp_q X_{i_{j}}$ imply that $\rho(a,q)=\rho(b,q)$ for all $a,b\in S_q$  
 
\item $I$ is a selection from the $\rho$-separators and for each $j=1,\dots,m-1$, $i_j$ is $\rho$-adjacent to $i_{k}$ if and only if $k \in \{j-1,j+1\}$.
\end{enumerate}
\end{lemma}

The result first shows that any selection from the $\rho$-separators $I$ contains the variables in an $R$-MAP. Therefore when $j\in I$, the DM believes that the variable $X_j$ is caused by exactly one other variable in $I$, say $X_i$. Since $X_I \perp_q X_{I^c}$, her prediction about the value of $X_j$ that results from taking an action is a function only  of her prediction about $X_i$ and the relationship between   $X_i$ and $X_j$. 
If  $X_{i} \perp_q X_{j}$, then she estimates its causal effect to be zero.
If the two are positively correlated (as when the dataset satisfies $\{i,j\}$-MLRP), then she infers that a higher value of $X_i$ causes a higher value of $X_j$. Therefore, she expresses indifference between $a$ and $b$ for every dataset $q$ as in Lemma \ref{lem: adjacent iff rmap}.(ii) if and only if she believes that $i$ causes $j$ (or vice versa).

Note that the lemma  also identifies $n^*$ as the element of $\{1,\dots,n\}$ adjacent to exactly one variable in every selection. 
%
%To illustrate, consider a consultant represented by $R_{\vari\covariate }$. There is a  unique selection from the separators, $J=\{0,\vari,\covariate ,\consequence\}$. 
%If competitor output is independent of profit, then the consultant thinks that every workforce size leads to the same distribution of output since she believes that competitor output is the only direct cause of profit.
%She is equally likely to recommend any workforce size, regardless of the true relationship between her action and profits.
%Thus, we conclude that $\vari$ is adjacent to $\consequence$. 
%On the contrary, the consultant may think one size of workforce is superior even when own output is independent of profit.
%For instance, if high competitor output occur if and only if profits are high and there is low own output, then competitor output is negatively correlated with own output and positively correlated with profit even when own output and profit are independent of each other.
%Therefore, the consultant thinks that the action leading to a lower probability of own output is best, and we conclude that $\vari$ is not adjacent to $\consequence$. 
%By similar logic, $\vari$ is adjacent to $\covariate $, and $0$ is adjacent to $\vari$. 

Combined, the lemmas allow us to determine if $(i_0,\dots,i_m)$ is an $R$-MAP from $0$ to $n^*$ from a finite number of observations. 
First, check whether $I=\{i_0,\dots,i_m\}$ is a selection from the separators using Lemma \ref{lem: I contains AP iff block}. If so, then   pick datasets $q_k$ for $k=1,\dots,m$ satisfying the conditions of Definition \ref{def: rho adjacent} for $I$ and $\{i,j\}=\{i_k,i_{k-1}\}$.
If $\rho(a,q_k)=\rho(b,q_k)$ for all $k$, then the equivalence of Lemma \ref{lem: adjacent iff rmap}.(i) and Lemma \ref{lem: adjacent iff rmap}.(iii) gives that $(i_0,\dots,i_m)$ is an $R$-MAP. Otherwise, $\rho(a,q_k) \neq \rho(b,q_k)$ for some $k$, so the equivalence of Lemma \ref{lem: adjacent iff rmap}.(i) and Lemma \ref{lem: adjacent iff rmap}.(ii) gives that $(i_0,\dots,i_m)$ is not an $R$-MAP.

When the DM believes there are confounding variables, all representations must also have the same confounders and the same minimal causal chains from each confounder to the outcome.
To identify the presence of confounders in the DM's causal model, we observe that in their absence, only the distribution over variables involved in a minimal causal chain from $0$ to $n^*$ affects the DM's choices. 
If the action is unconditionally independent of all variables involved in any minimal causal chain from $0$ to $n^*$ and the DM is not indifferent between all of her actions, then her causal model contains at least one confounder.%
\footnote{Lemma \ref{lem: adjacent iff rmap} still applies, so we can identify the $R$-MAPs from $0$ to $n^*$ using it.}
More precisely, if $i^*$ is not part of any selection from the $\rho$-separators, then $i^*$ is a \emph{revealed confounder} if the DM expresses a strict preference between two actions for some dataset in which every variable except $i^*$ is independent of her action.
We formalize this definition and show that they must be confounders in the DM's DAG in Appendix \ref{sec: Proof Necessity ID}.
We then extend the procedure above to identify the MAPs from each revealed confounder to the outcome.

\section{Endogenous Datasets}
\label{sec:Foundation}
Up to now, we have focused on a situation where the data from which the DM derives her predictions is exogenously given. 
We now turn to the case where the dataset utilized by the DM is generated by her own behavior as in \cite{Spiegler2016}. This section introduces the Endogenous Subjective Causality Representation (Endogenous SCR) and shows that it can accommodate various biases from the psychology literature. The next section axiomatically characterizes the random choice rules with an Endogenous SCR.

To model this, we take as given the (full-support) distribution over $\mathcal{X}_{-0}$ that would result from taking each action $a \in \mathcal{X}_0$. This distribution should be interpreted as known to us but not the DM, who leverages her causal model to learn it.
We assume that  $\mathcal{X}_0$ is rich enough so that there is at least one action corresponding to every full-support lottery over $\mathcal{X}_{-0}$, and to economize on notation, we write $a(x)$ for the chance that $x \in \mathcal{X}_{-0}$ results when action $a$ is taken.
	
Let $\mathcal{S}$ be the set of non-empty, finite subsets of $\mathcal{X}_0$.
Then, a random choice rule $\rho: \mathcal{X}_0 \times \mathcal{S} \rightarrow [0,1]$ where $\sum_{a \in S}\rho(a,S)=1$ and $\rho(a,S)=0$ for every $a \notin S$ maps each feasible set $S \in \mathcal{S}$ to the probability the DM chooses each action $a \in \mathcal{X}_0$. 
The DM uses the dataset $\rho^S$ when facing menu $S$  where
\[ \rho^S(a,y)= \rho(a,S) a(y)\text{ for all }a\in \mathcal{X}_0\text{ and } y\in \mathcal{X}_{-0}.\]
The dataset $\rho^S$ combines the lottery  that results from taking action $a$ with the frequency that $a$ is chosen, $\rho(a,S)$.
Note that $\rho(\cdot,S)$ is a distribution over actions whereas $\rho^S$ is a distribution over $\mathcal{X}$.

\begin{defn}
Random choice rule $\rho$ has an \emph{Endogenous SCR} if there exists a perfect, nontrivial, unconfounded DAG $R$, an index $n^*$, and a strictly increasing  $u:\mathcal{X}_{n^*} \rightarrow \mathbb{R}$ so that
\[ \rho(a,S)=\frac{\exp \left(  \sum_{x\in \mathcal{X}_{-0}}u(x_{n^*}) \rho^S_{R}(x |do(a)) \right) }{\sum_{a' \in S} \exp \left(\sum_{x\in \mathcal{X}_{-0}}u(x_{n^*}) \rho^S_{R}(x |do(a')) \right)}\]
for every $S \in \mathcal{S}$ and $a \in S$. %Then, we say that  $\rho$ has an Endogenous SCR  $(R,u,n^*)$.		
\end{defn}

	As before, the DM uses her model $R$ to form predictions about the outcomes of each potential action.
	However, the dataset $\rho^S$ that she uses to estimate each of the causal effects varies with how often she chooses each of the actions.
	Note that 
	$\rho(a,S)=\Pr(\{\varepsilon:U_{\rho^S}(a)+ \varepsilon_a\geq U_{\rho^S}(b)+\varepsilon_b\text{ for all }b\in S\})$
	when $\{\varepsilon_b\}_{b\in S}$ are distributed independently and extreme value.
	Moreover, the representation is a personal equilibrium \citep{KR06}: the DM maximizes expected utility given her beliefs, which in turn depend on her choices.
It is easy to show that an equilibrium exists for any $S \in \mathcal{S}$ using Brouwer's fixed point theorem.
For menus with more than one equilibrium, we place no restrictions on which is selected.
\begin{rem}
Random choice plays two roles. First, $\rho(a,S)>0$ for all $a \in S$, so predictions about every action are well-defined. Second, \cite{Spiegler2016} shows that ``pure'' equilibria may not exist, even after defining beliefs about unchosen actions.
\end{rem}

\begin{rem}
We adopt the exponential function for concreteness and applicability.
Our results adapt to any other strictly increasing and positive function.
\end{rem}

\begin{rem}
A confounded DAG would represent a DM who believes that a variable whose realization she does not observe affects her choices.
\end{rem}

The DM's behavior may endogenously create correlations that she misinterprets as causation.
Fundamentally, the DM neglects the effect of her choices on the variables that she thinks are not directly affect by her action.%
\footnote{\cite{esponda2018endogenous} experimentally document selection neglect, and \cite{denrell2020sampling} provides a recent survey of evidence for it in managers.}
This leads to two key technical challenges.
First, the DM  may violate \emph{regularity}, a property necessary for representation by a random utility model (RUM). 
Second, her behavior may be \emph{self-confirming}:
because she chooses $a$ frequently, she thinks it is better than $b$, but she would reverse her ranking if she chose $b$ more frequently.
These features allow the model to accommodate a number of biases documented in the psychology literature, including violations of independence of irrelevant alternatives, illusion of control, status-quo bias, and congruence bias.
We illustrate these challenges using a consultant in the running example whose behavior has an Endogenous SCR $(R_{\vari },u,\consequence)$ and where $\mathcal{X}_i=\{0,1\}$ for $i=\vari ,\covariate ,\consequence$.

\subsection{Regularity violation}
\label{sec: reg violation}
An Endogenous  SCR may violate regularity, the requirement that  $\rho(a,S) \geq \rho(a,S')$ whenever  $a \in S \subseteq S'$. 
Consequently, an Endogenous SCR is not a RUM.
%This poses a challenge for identifying the model as usual techniques do not directly apply.
By contrast, many other models of ``irrational'' behavior are RUMs, such as the model of limited attention due to \cite{manzini2014stochastic}. 

To illustrate why regularity may be violated, consider three sizes of the workforce, $a,b,c$, that are equally likely to lead to high profit.
The firm's own output and profit are independent when workforce size is $a$, 
positively correlated when it is $b$, and negatively correlated when it is $c$.
When the consultant can only recommend $a$ or $b$, own output is necessarily positively correlated with profit.
As she mistakes the correlation for causation, this makes the workforce size that is more likely to lead to high  output more attractive. 
However, when she chooses between all three workforce sizes, the recommendations of $c$ might cancel out or even reverse the perceived positive effect of own output on profit.
When this effect is strong enough, she may reverse her recommendation and violate regularity.
Formally, suppose that $a (1_{\vari },1_{\consequence})=a (1_{\vari },0_{\consequence})=\frac12$, $b (0_{\vari },0_{\consequence})=b (1_{\vari },1_{\consequence})=\frac12$, $c (1_{\vari },0_{\consequence})=c (0_{\vari },1_{\consequence})=\frac12$, $u(0)=0$, and $u(1)=6$.%
\footnote{The distribution of competitor output does not affect behavior, so we leave it unspecified.}
One can verify that for $S=\{a,b\} \subset S'=\{a,b,c\}$, we have $\rho(b,S)<\frac13=\rho(b,S')$.%
\footnote{See Appendix \ref{ex: regularity} for the derivation.}$\ \!^{,}$%
\footnote{We note that $S,S' \notin \mathcal{S}$, so this example, and that in the next subsection, are technically outside our domain. At the cost of complicating the algebra and obscuring the logic, they can be made consistent with our assumptions by replacing each action $d$ with $d'=(1-\epsilon)d+\epsilon e$ where $\epsilon>0$ is small enough and $e(y)=\frac18$ for each $y\in \{0,1\}^3$.}
%As $S\subseteq S'$, the DM violates regularity.

While the violations of regularity allow the model to accommodate phenomena like the decoy effect, the failure stems from faulty reasoning.
The above consultant overestimates her ability to control events, or exhibits illusion of control \citep{Langer1975illusion}. %writes, ``In skill situations there is a causal link between behavior and outcome.... Success in luck or chance activities is apparently uncontrollable. The issue of present concern is whether or not this distinction is generally recognized. The position taken here is that it is not.'' The firm above is subject to illusion of control when choosing from $\{\phi,\iota\}$. 
Her choices do not affect profit, yet she would be willing to pay a premium to choose one workforce size over another. 
Moreover, she also exhibits ``patternicity'' \citep{Shermer1998} in that she perceives a pattern that does not exist, namely that taking action $a$ in menu $\{a,b\}$ leads to a higher expected profit.

\subsection{Self-confirming choices}
\label{sec:self-confirming}
The outcome that the DM expects to get from an action may depend on how frequently she chooses it. She may predict that one action is better than another only if she chooses it sufficiently frequently. This can lead to multiple personal equilibria.

Consider the following setting. Workforce size $b$ very often leads to both high own output and high profit. Workforce size $a$ can lead to either high or low profit as well as either low or high own output, and low own output more often leads to high profit.
%prevents the disease but often leads to plaque buildup, and that  treatment  $a$ leads to frequent disease but rare plaque buildup that only occurs when the patient is sick.
When the consultant rarely recommends workforce size $b$, the low own output is mistakenly believed to increase the probability of high profit. Consequently, recommending $b$, which raises the probability of high own output, seems like a bad idea.
Symmetrically, when she usually recommends $b$, she predicts that $a$ decreases the probability of high profit, because then high own output is positively correlated with profit.
%When the doctor rarely prescribes treatment $b$, low plaque buildup is mistakenly believed to increase the chance of good health. Consequently, prescribing $b$, which raises their plaque, seems like a bad idea. Symmetrically, when she usually prescribes $b$, she predicts that $a$ leads to a lower chance of good health because plaque buildup is negatively correlated with disease. %not intervening to decrease plaque buildup seems counterproductive.
Formally, let $b(1_{\vari },1_{\consequence})=1$, $a(0_{\vari },1_{\consequence})=\frac12 q$, $a(0_{\vari },0_{\consequence})=\frac12 (1-q)$, and $a(1_{\vari },0_{\consequence})=\frac12$ where $q \in (0,1)$.
When  $u(0)=0$ and $u(1)=30$, one can verify that 
%there are multiple personal equilibria for large enough $\lambda$; for instance,   
$\rho(a,\{a,b\}) \approx .34$, $\rho(a,\{a,b\}) \approx 0.02$, and $\rho(a,\{a,b\}) \approx 0.99$ are all equilibria when  $q=\frac34$.%
\footnote{See Appendix \ref{ex: selfconfirming} for the derivation.}
%roots ln x - ln(1-x) - 15[3/8- (2-2x)/(4-2x)] 

%\sout{With data that contains a small enough fraction of patients who received the drug, the doctor often does not intervene because her misspecified model concludes that plaque buildups lead to good health. 
%While this is true in that dataset, taking the drug is unambiguously superior, as she would realize  if a larger fraction took it. }
If we interpret the choice rule as the steady state of a learning process, then the above DM exhibits two biases. First, she exhibits status quo bias \citep{SamuelsonZeckhauser88}, a tendency toward ``maintaining one's current or previous decision.''  Second, she exhibits congruence bias \citep{Wason1960Congruence} by failing to test the alternative hypothesis that the other workforce size is better.
 
\section{Behavioral Foundations}
We now turn to the behavioral regularities that characterize the random choice rules with an Endogenous SCR.
We first identify a candidate causal model for the DM.
The axioms relate the predictions implied by that DAG  to the DM's choices.
Section \ref{sec: basic axioms} presents some familiar axioms. 
Section \ref{sec: special axioms} first illustrates our approach by characterizing the recommendation of a consultant in the running example  with model $R_{\vari } $.
Section \ref{sec: general axioms} presents the axiomatization for the general case.   

\subsection{Basic axioms}
\label{sec: basic axioms}
The first two axioms are standard, and we present them with minimal discussion. The first requires that the DM chooses every  action with positive probability.

\begin{ax}[Full-support]
\label{ax: positivity}
\label{ax: first SCR}
\label{ax: nontrivial}
For any $S \in \mathcal{S}$ and $a\in S$, $\rho(a,S)>0$.
%, and there exists $S\in \mathcal{S}$ and $a,b\in S$ with $\rho(a,S)\neq \rho(b,S)$. 
\end{ax}

The second limits the perceived difference between any two options. 

\begin{ax}[Bounded Misperception]
\label{ax: bounded}
The quantity $\sup_{S, a,b\in S} \frac{\rho(a,S)}{\rho(b,S)}$ is finite.
\end{ax}

The relative frequency with which the DM takes two actions, their \emph{Luce ratio}, indicates the strength of her preference.
Since the set of outcomes is finite, there is a best and worst outcome. 
These provide a natural limit to how much she prefers one action to another, which bounds the Luce ratio. 
The axiom thus bounds the size of the mistakes that the DM can make.
 
\subsection{Foundations - special case}
\label{sec: special axioms}
This subsection specializes the remaining axioms to a consultant represented by $R_{\vari }$. The first requires that $R_{\vari }$ is a candidate for her DAG.
\begin{ax*}[Consistent Revealed Causes$^*$, CRC*]
The set of minimal separators $\mathcal{A}^{\rho}$ equals $\{\{0\},\{\vari \},\{\consequence\}\}$.
 %If $X_k \perp_{\rho^{\{a,b\}}} X_{-k}$ for any $k \in \{0,\vari ,\consequence\}$, then $\rho(a,\{a,b\})=\frac{1}{2}$. 
If $X_{0} \perp_{\rho^{\{a,b\}}} X_{\vari }$ or $X_{\vari } \perp_{\rho^{\{a,b\}}} X_{\consequence}$, then $\rho(a,\{a,b\})=\frac{1}{2}$, and if $\rho^{\{a,b\}}$ has $\{0,\vari \}$- and $\{\vari,\consequence \}$-MLRP, then $\rho(a,\{a,b\})>\frac12$.
%Moreover, $\{0\}$ and  $\{\consequence\}$ are both adjacent to $\{\vari \}$ and only to $\{\vari \}$.
\end{ax*}

Under CRC*, the only MAP is $(0,\vari ,\consequence)$ by Lemma \ref{lem: adjacent iff rmap}. Moreover, since $\consequence$ is furthest from $0$ in that $R$-MAP,  the DM cares only about profit. Therefore, if the consultant has an Endogenous SCR, then she has one where her DAG is $R_{\vari}$.

Second, the consultant treats own output as a sufficient statistic for profit.
\begin{ax*}[Indifferent If Identical Immediate Implications$^*$, I5*]For $a,b \in S \in \mathcal{S}$, if $\marg_{\vari } a= \marg_{\vari } b$, then $\rho(a,S)=\rho(b,S)$.
\end{ax*}
According to $R_{\vari }$, any workforce sizes that lead to the same distribution of own output also have the same probability of large profit.
Therefore, if two workforce sizes lead to the same marginal distribution over own output, then the consultant is equally likely to choose either, regardless of any other differences between them. 

Third, she chooses similarly  when she makes similar predictions. 
\begin{ax*}[Luce’s Choice Axiom Given Inference$^*$, LCI*]
	For    any $S,S_1,S_2, \dots \in \mathcal{S}$ with $a,b \in S_m \cap S$ for each $m$:\\
	if 
$
\rho^{S_m} \left(x_{\consequence}|x_{\vari } \right) \rightarrow  \rho^{S } \left(x_{\consequence}|x_{\vari } \right)
$
 for all $x \in \mathcal{X}_{-0}$,
 then $ \frac{\rho(a, S_m)}{\rho(b,S_m)} \rightarrow \frac{\rho(a,S)}{\rho(b, S)}$. 
\end{ax*}

The Logit model is characterized by Luce's Choice Axiom \citep{luce1959individual}, which requires that $\frac{\rho(a, S')}{\rho(b,S')}=\frac{\rho(a,S)}{\rho(b, S)}$ whenever $a,b \in S \cap S'$.
%The Choice Axiom requires that the intensity of preference between $a$ and $b$ does not depend on the choice context.
LCI* requires that  Luce's Choice Axiom holds provided that the DM makes the same predictions for $a$ and $b$ when she faces either $S$ or $S'$.
To see why, note that a consultant represented by $R_Y$ thinks that the causal effect of own output on profit equals the distribution of profit conditional on own output.
If this distribution is the same for $\rho^{S'}$ and $\rho^{S}$, then she makes the same predictions about the effect of each workforce size when facing either $S'$ or $S$. Hence, she should rank the two the same, and so Luce's Choice Axiom should hold, regardless of whether her data comes from $\rho^{S'}$ or from $\rho^{S}$.
Indeed, letting $S_m=S'$ for all $m$, the axiom implies that  if $ \rho^{S'}(x_{\consequence}|x_{\vari }) =  \rho^{S}(x_{\consequence}|x_{\vari })$ for all $x \in \mathcal{X}_{-0}$, then  $\frac{\rho(a, S')}{\rho(b,S')} = \frac{\rho(a,S)}{\rho(b, S)}$.
Moreover, the Luce ratio should be ``almost'' the same whenever these conditional probabilities are ``close.''

Finally,  she behaves as if Logit-EU when her dataset is consistent with $R_{\vari }$.
\begin{ax*}[Correct Predictions Logit-EU$^*$, CPL*] 
There exists a increasing $u:\mathcal{X}_{\consequence}\rightarrow \mathbb{R}$  so that for every $a\in S$,
$$\rho(a,S)= \frac{\exp \left(  \sum_{x \in \mathcal{X}_{-0}} u(x_{\consequence})a(x) \right)}
{ \sum_{b\in S} \exp \left(  \sum_{x \in \mathcal{X}_{-0}} u(x_{\consequence})b(x)\right)}$$
whenever $b(y_{\consequence},y_{\vari })=b(y_{\vari })a(y_{\consequence}|y_{\vari })$ for all $y \in \mathcal{X}_{-0}$ and $a,b \in S$.
\end{ax*}

When profit is in fact independent of the size of workforce given own output, a DM represented by $R_Y$ makes correct predictions. 
Hence, she should behave according to Logit-EU. 
Notice that $X_{\vari} \perp_a X_{\consequence}$ and $X_{\vari} \perp_b X_{\consequence}$ do not imply that $X_{\vari} \perp_{\rho^{\{a,b\}}} X_{\consequence}$, so it is not sufficient that the conditional independence holds for each individual $b \in S$.%
\footnote{For one easy example, let $a(1_\vari,1_\consequence)=1$ and $b(0_\vari,0_\consequence)=1$. Clearly, $X_{\vari} \perp_a X_{\consequence}$ and $X_{\vari} \perp_b X_{\consequence}$ both hold, but $X_{\vari} \not \perp_{c} X_{\consequence}$ for  $c=\alpha a+(1-\alpha)b$ and any $\alpha \in (0,1)$.}

These axioms hold if and only if the consultant has an Endogenous SCR with DAG $R_{\vari }$.
\begin{cor}
The choice rule $\rho$ satisfies Full-support, Bounded Misperception, CRC*, I5*, LCI*, and CPL* if and only if $\rho$ has an Endogenous SCR $(R_{\vari },u,\consequence)$.
\end{cor}
The result is a corollary of Theorem \ref{thm: proper SCR}, and so we defer discussion until then.

\subsection{Foundations - general case}
\label{sec: general axioms}
We now present the axioms for an arbitrary DAG.

We adapt the identification results from Section \ref{sec:ID} to identify a candidate DAG. Recall that by Lemma \ref{lem: adjacent iff rmap}, if the DM is represented by the DAG $R$, then the variable $i$ is  $\rho$-adjacent to $j$ if and only if it follows or precedes it in some $R$-MAP. The definitions of minimal separators and $\rho$-adjacent (Definitions \ref{def: separator} and \ref{def: rho adjacent}) adjust naturally, replacing $q$ with $ {\rho^S}$ or $S$ as appropriate. These definitions require that two sets of variables are independent in the data generated by the DM's choices. We construct such menus as follows. For sets of variables $A,B$ such that $0\notin A$, $A\cap B=\emptyset$ and $a\in S$,  if $b(x)=a(x_A)b(x_B)b(x_{[A\cup B]^c}|x_{A\cup B})$ for all $b \in S$ and $x \in \mathcal{X}_{-0}$, then $X_A \perp_{\rho^S} X_B$ regardless of how the DM chooses.
The proofs of Lemmas \ref{lem: I contains AP iff block} and \ref{lem: adjacent iff rmap} each construct an $\{a,b\}$ for which $\rho^{\{a,b\}}$ satisfies the independence conditions in the definition. By Lemmas \ref{lem: I contains AP iff block}.(ii) and \ref{lem: adjacent iff rmap}.(ii), if  one $\{a,b\}$  reveals $i$ is $\rho$-adjacent to $j$ or that $A$ is a $\rho$-separator, then any $\{a,b\}$ satisfying the conditions does as well. 
% For  a selection  $I $ from the $\rho$-separators:
% \emph{$i \in I \setminus \{0\}$ is  $\rho$-adjacent to $0 $} if   $\rho(a,\{a,b\})=\rho(b,\{a,b\})$ for some $\{a,b\}$ so that $\rho^{\{a,b\}}$   has $\{i',j'\}$-MLRP for all distinct $i',j' \in I$ s.t. $\{i',j'\} \neq \{i , 0\}$,   $$X_I \perp_{\rho^{\{a,b\}}} X_{I^c} \ \& \ X_{i} \perp_{\rho^{\{a,b\}}} X_{j},$$
% and when
% $i \in I \setminus \{0\}$ is not $\rho$-adjacent to $0$, \emph{$i$ is  $\rho$-adjacent to $j \in I \setminus \{0,i\}$} if   $\rho(a,\{a,b\})=\rho(b,\{a,b\})$ for some $\{a,b\}$ so that $\rho^{\{a,b\}}$   has $\{i',j'\}$-MLRP for all distinct $i',j' \in I$ s.t. $\{i',j'\} \neq \{i , j\},\{i,0\}$,   $$X_I \perp_{\rho^{\{a,b\}}} X_{I^c} \ \& \ X_{i} \perp_{\rho^{\{a,b\}}} X_{j}.$$
%We also say that \emph{$A \in \mathcal{A}^\rho $ is adjacent to $B \in \mathcal{A}^\rho $}, written $A \propto^\rho B$, if   every $k \in A  \setminus B$ is $\rho$-adjacent to every $l \in B \setminus A$ and $ A\neq B$.

The next axiom ensures that we can  consistently  infer the DM's DAG.

\begin{ax}[Consistent Revealed Causes, CRC]
\label{ax: Consistent Revealed Causes}
The set  $\mathcal{A}^\rho $ can be ordered $\mathcal{A}^\rho=\left\{A_1,\dots,A_{m} \right\}$, where $A_1 =\{0\}$ and $|A_m |= 1$, 
so that (i) $j$ is $\rho$-adjacent to $k$ if and only if there exists $i$ and $i'\in \{i-1,i+1\}$ for which $j \in A_i \setminus A_{i'}$  and $ k \in  A_{i'} \setminus A_{i} $  and (ii) if $j \in A_{i+1} \setminus A_i$, then $j \notin A_1 \cup \dots \cup A_i$.
%and for every $i=1,2,\dots,m$,
%$$A^\rho_i  \propto^\rho A^\rho_{j} \iff j \in \{i+1,i-1\}.$$
\end{ax}

CRC ensures that we can make consistent inferences about the DM's DAG and the outcome. The candidate outcome is the (unique) variable in $A_m$, and the candidate  DAG $R$   has $ A_i \subset R(j) \subset A_i \cup A_{i+1} $ whenever $j \in A_{i+1} \setminus A_{i}$. For an intuition, suppose that the minimal separators are disjoint. Then by Lemma \ref{lem: adjacent iff rmap}, every MAP contains one variable from each $A \in \mathcal{A}^\rho$, %any choice of one variable from each $A \in \mathcal{A}^\rho$ corresponds to a MAP, 
and variables that follow each other in that MAP are adjacent. Thus, the minimal separator sets can be ordered using adjacency, which reveals that the DM's  DAG satisfies the above inclusions. 

% Let $A_2$ be the set of all of the variables caused by $0$.
% This is a minimal separator, and every $j \in A_2^\rho$ is adjacent to $0$.
% Notice that if $k \in A_2^\rho=A_2^\rho\setminus A_1^\rho$, then $0$ causes $ k$ by construction, and any other  $j$ that causes $k$ must be in   $A_2^\rho$ by perfection.%
% \footnote{Since $0 R k$ and $j R k$, $0 R j$ must hold or $R$ is not perfect.}
%  Consider the set $B$ of all of the variables caused by every variable in $A^\rho_2$; $B$ is non-empty by perfection and $R$ having the minimal number of edges. If  $B'$ is the set of variables in $A^\rho_2$ that are part of an $R$-MAP that does not intersect $B$, then $A^\rho_3=B\cup B'$ is a minimal separator. Also, any $k \in  A^\rho_3 \setminus A^\rho_2$ is adjacent to every $j\in A^\rho_2 \setminus A^\rho_3$, and $ A^\rho_2  \subset R(k) \subset A^\rho_2 \cup A^\rho_3$ for every   $k \in A^\rho_3 \setminus A^\rho_2$. 
% We can recursively apply this logic to order the remaining minimal separators, and for each $i$, $A^\rho_{i} \subset  R(k) \subset A^\rho_i \cup A^\rho_{i+1} $ whenever $k \in A^\rho_{i+1} \setminus A^\rho_{i}$.   
% Alternative:  
In general, suppose that $\rho$ has an Endogenous SCR with a DAG $R$ having a minimal number of edges. Let $A_{m}=\{n^*\}$. Suppose that $A_{m'}$ is a minimal separator, and  that for every $i,j \in A_{m'}$, either $i R j$ or $j R i$. This is clearly true when $m'=m$.
For every $m'>1$, we recursively define $A_{m'-1}$ as follows.
Consider the node $k$ in $A_{m'}$ that does not cause any other node in $A_{m'}$; such a node exists because $R$ is acyclic. Let $B$ be the set of all nodes preceding $k$ on an $R$-MAP, and 
 $B'$ be the set of variables in $A_{m'}$ that are part of an $R$-MAP that does not intersect $B$.  Then, $A_{m'-1} = B\cup B'$ is a minimal separator, and since every $i' \in A_{m'-1}$ causes $k$, either $i R j$ or $j R i$ for every $i,j \in A_{m'-1}$ by perfection. Perfection also requires that $i R j$ for every
$i \in A_{m'-1}$ and  $j \in A_{m'} \setminus A_{m'-1} $; if in addition, $i \notin A_{m'}$ (so that $i \in B$) then $i$ precedes $j$ on an $R$-MAP.%
\footnote{If $i \in A_{m'-1}$, then  $i R k$ and $j R k$, so either $i R j$ or $j R i$. If also $j R i$, then one of $i$'s ancestors causes $j$ and so $j \in B'$.} Therefore, if $i \in A_{m'}\setminus A_{m'-1}$, then $A_{m'-1} \subset R(i) \subset  A_{m'}\cup A_{m'-1}$, and $i$ is $\rho$-adjacent to every $j \in A_{m'-1} \setminus A_{m'}$ by Lemma \ref{lem: adjacent iff rmap}.

We say that $\left(A_1,\dots,A_k\right)$ is an \emph{adjacent ordering} of $\mathcal{A}^\rho$ if it has the properties in the axiom's statement.

The fourth axiom requires that if the DM predicts that two actions lead to the same outcome distribution, then she chooses each with the same probability.

\begin{ax}[Indifferent If Identical Immediate Implications, I5]
\label{ax: same pr if first clique}For $a,b \in S \in \mathcal{S}$:\\
if $A \in \mathcal{A}^\rho$ is such that every $j\in A$ is  $\rho$-adjacent to $0$ and  $\marg_{A} a= \marg_{A} b$, then $\rho(a,S)=\rho(b,S)$.
\end{ax}

The covariates directly caused by the action are a sufficient statistic for the DM's prediction of the outcome distribution.
Whenever two actions lead to the same distribution over these covariates, she predicts that they lead to the same outcome distribution. She is therefore indifferent between any two actions with identical immediate implications.
The axiom only considers the marginal distributions on $X_{A}$ of $a$ and $b$, remaining agnostic about their distribution on other variables and any other available actions.

The fifth axiom ensures that if she makes similar predictions in two contexts, then she makes similar choices when facing them.
% \begin{ax}[Luce's Choice Axiom Given Inferences, LCI]
% \label{ax: luce axiom inference} \ \\
% 	For any $S,S_1,S_2, \dots \in \mathcal{S}$ with $a,b \in S_m \cap S$ for each $m$, 	if    
% 	$$\rho^{S_m} \left( y_{A^\rho_{i+1}  }|y_{A^\rho_{i} } \right) \rightarrow \rho^{S} \left(y_{A^\rho_{i+1}  }|y_{A^\rho_{i} } \right)$$    for every $y \in \mathcal{X}_{-0}$ and  $i =2,\dots,|\mathcal{A}^\rho|-1$, 
% 	then $\frac{\rho(a, S_m)}{\rho(b,S_m)}\rightarrow \frac{\rho(a,S)}{\rho(b, S)}$.
% \end{ax}

\begin{ax}[Luce's Choice Axiom Given Inferences, LCI]
\label{ax: luce axiom inference}
	For any $S,S_1,S_2, \dots \in \mathcal{S}$ with $a,b \in S_m \cap S$ for each $m$: 	if   $\left(A_1,\dots,A_k\right)$ is an adjacent ordering of $\mathcal{A}^\rho$ and
	$$\rho^{S_m} \left( x_{A_{i}  }|x_{A_{i-1} } \right) \rightarrow \rho^{S} \left(x_{A_{i}  }|x_{A_{i-1} } \right)$$    for every $x \in \mathcal{X}_{-0}$ and  $i =3,\dots,k$, 
	then $\frac{\rho(a, S_m)}{\rho(b,S_m)}\rightarrow \frac{\rho(a,S)}{\rho(b, S)}$.
\end{ax}
LCI relates the DM's inferences about the causal effects to her choices.
%As per Equation (\ref{eq:JT factor}), the axiom's hypothesis ensures that her predictions about the outcome of $a$ and $b$ from $\rho^{S_m}$ are close to those from $\rho^S$ for large $m$.
As per the discussion following CRC, if $\rho$ is represented by $R$ and  $j \in A_{i+1} \setminus A_i$, then $R(j) \subset  A_i \cup A_{i+1}$.
  Therefore, the hypothesis of the axiom implies that $\rho^{S_m} \left(x_j|x_{R(j)} \right) \rightarrow \rho^{S}\left(x_j|x_{R(j)} \right) $ for every $j \in A_3 \cup \dots \cup A_{k}$.%
  \footnote{Since $A_1=\{0\}$, $\rho^{S_m}(x_{A_2}|c)=\rho^{S}(x_{A_2}|c)$ for $c=a,b$ and all $m$, so this is trivially true for $j \in A_2$.}
Since her predictions about every causal effect is close, the Luce ratio of $a$ and $b$ from  $S_m$ should  be close to  their Luce ratio from $S$.
By taking $S_1=S_m$ for all $m$, we see that LCI implies that if her inferences are the same, then their Luce ratio is the same.

The next definition identifies a set of menus for which the DM's predictions about the outcome of each action are correct.

\begin{defn}\label{def: correct perception}
A menu $S \in \mathcal{S}$  \emph{leads to correct  predictions} if  
\[
b \left( x_{\cup_{i = 2}^{k} A_i} \right)=b \left(x_{A_2} \right)\prod_{i = 3}^{k} a \left(x_{A_{i} } |x_{A_{i-1}} \right) ,
\]
 for every $a,b \in S$ and $x\in \mathcal{X}_{-0}$ when $\left(A_1,\dots,A_k\right)$ is adjacent ordering of $\mathcal{A}^\rho$.
\end{defn}

By Theorem \ref{thm: RMAP characterize DAG}, only the variables indexed by $\bigcup_{i = 1}^{k} A_i$ are relevant for the DM's predictions. As per the discussion following CRC, the DM thinks that every $X_j$ with  $j \in A_{i+1}\setminus A_i$ is independent of $X_{\cup_{j = 1}^{i-1} A_j}$ conditional on $X_{A_i}$.
A correctly perceived menu satisfies these conditional independences, so her prediction of the distribution of all relevant variables, including the outcome, is correct.

The DM behaves in a standard fashion when her predictions are correct.
\begin{ax}[Correct  Predictions Logit-EU, CPL]
\label{ax: IA R markov}
\label{ax: markov domininace}
\label{ax: last}
There exists a strictly  increasing $u$ so that if $S$ leads to correct  predictions, then for any $a \in S$, $$\rho(a,S)=\frac{ \exp \left( \sum_{x\in \mathcal{X}_{-0}} u \left(x_{A_k} \right) a(x) \right)}{\sum_{b\in S}\exp \left( \sum_{x\in \mathcal{X}_{-0}} u \left(x_{A_k} \right) b(x) \right)}$$
where  $\left( A_1,\dots,A_k \right)$ is an adjacent ordering of $\mathcal{A}^\rho$.
\end{ax}
If $S$ leads to correct predictions, then every conditional independence assumption implied by her causal model holds in $\rho^S$. Hence,  her predictions about each of the actions are correct.
She should therefore behave according to Logit-EU. This axiom can be replaced by assuming that analogues of the independence, monotonicity, and continuity axioms hold for correctly perceived menus.

\begin{thm}
\label{thm: proper SCR}
A random choice rule $\rho$ has an Endogenous SCR if and only if $\rho$ satisfies Full-support, Bounded Misperception, CRC, I5, LCI, and CPL.
\end{thm}

The result highlights the connection between SCR and the Logit-EU model.
Notice that if $\mathcal{A}^\rho = \left\{ \{0\}, \{n^*\} \right\}$ and $\rho$ satisfies Axioms 1-\ref{ax: last}, then the choice rule has a Logit-EU representation.
The axioms relate deviations from Logit-EU to inconsistent predictions about causal effects.
I5 says that two alternatives are chosen with same probability whenever they coincide on the distribution of variables the action is revealed to cause, whereas Logit-EU requires coincidence on the outcome distribution.
LCI requires that the Luce ratio is constant only when predictions are constant, whereas Logit-EU requires it to be constant across all menus.  
CPL limits deviations from Logit-EU to situations where the DM's predictions do not match reality. 

We outline the proof for sufficiency here, and defer a formal proof to the appendix.
By CRC, we can define a candidate DAG $R$  and $n^*$  to represent $\rho$. 
Given $S \in \mathcal{S}$, we construct a correctly predicted $S_1$ so that for every $a \in S$, there is an $a' \in S_1$  with the same distribution that she would predict for $a$ given the candidate DAG, i.e.,  $a'(\cdot)=\rho^S_{R}(\cdot |a)$.
Our goal is to show that for any $a,b \in S$, the DM chooses $a$ and $b$  from $S$ with the same relative frequency  as  she chooses $a'$ and $b'$ from $S_1$.
To do so, we add distinct alternatives to $S_1$ to form a nested sequence $(S_m)_{m=1}^\infty$ where each $S_m$ is correctly predicted.
Bounded Misperception implies that the probability of choosing anything in $S$ from $S_m \cup S$ goes to zero as $|S_m|$ goes to infinity.
Therefore, the inferences that the DM makes when facing $S_m \cup S$ approach those she makes from $S_1$, which in turn equal those she makes from $S$.
LCI implies that the relative frequency with which $a'$ and $b'$ are chosen from $S_m \cup S$ converges to that of $a'$ and $b'$ from $S_1$.
Moreover, $a$ and $a'$ (as well as $b$ and $b'$) are chosen from $ S_m \cup S$ with the same probability by I5.
Applying LCI another time, we see that $a$ and $b$ are chosen with the same relative frequency in $S$ as $a'$ and $b'$ are in $S_1$, completing the proof.

\section{Discussion}
Theorem \ref{thm: RMAP characterize DAG} shows that one can use choice of actions  to reveal the relevant parts of a subjective causal model.
It suggests the types of questions, and the regularities in data, that could be used in surveys or experiments to infer or to test a subjective model.
The approach  provides both an alternative approach to asking subjects directly to describe their model, as in \cite{andre2021}, and a complementary way to test whether subjects actually use the model that they described. 

We conclude by discussing comparative behavior and alternative interpretations of the model.
\subsection{Comparative Coarseness}
A coarser causal model leaves out some variables or relationships relative to another.
Authors often explain ``irrational'' behavior in situations with adverse selection via coarseness.
For instance, \cite{EysterRabin2005}, \cite{Jehiel2008}, and \cite{Esponda2008} argue that the winner's curse reflects bidders who do not fully take into account the relationship between others' actions and signals.%
\footnote{Section 5 of \cite{Spiegler2016} discusses how and to what extent these models fit into the DAG framework.}
In this subsection, we compare DMs in terms of the coarseness of their model.
In particular, how can we separate two DMs who differ in that one's model contains more variables than the other's?

\begin{defn}
Say that $\rho_2$ \emph{has a coarser model than} $\rho_1$ if $X_i \perp_q X_{N \setminus \{i\}}$ for all $i \in N$ that are not  in any $R_2$-MAP from $0$ or an $R_2$-confounder to $n^*$, then for any $a \in S_q$, $\rho_1(a,q)\geq \rho_1(b,q) $  for all $b \in S_q \iff \rho_2(a,q)\geq \rho_2(b,q) $  for all $b\in S_q$.
%in any minimal   $\rho_2$-separator.
% $X_{N \setminus N^*(\rho_2)} \perp_S X_{N^*(\rho_2)}$.
\end{defn}

Consider DM1 represented by $\rho_1$ and DM2 represented by $\rho_2$.
As revealed by Theorem \ref{thm: RMAP characterize DAG}, the variables that do not belong to an $R_2$-MAP from $0$ or an $R_2$-confounder to $n^*$ are irrelevant for DM2's predictions.
The condition says that whenever those variables are independent of all other variables, i.e. they are actually irrelevant when forming predictions using any DAG, then the two DMs behave in the same way.
This ensures that if DM2 thinks a variable is relevant, so does DM1.

\begin{proposition}
\label{prop:coarseness}
Let $\rho_i$  have an SCR $(R_i,u_i,n^*)$ for $i=1,2$.
Then,  $\rho_2$ has a coarser model than $\rho_1$ if and only if  there exists $N' \subset N$ so that $\rho_2$ has an SCR $\left ( R_1 \cap N' \times N',u_1,n^* \right)$.
\end{proposition}

The comparison reveals when the models of two DMs are nested.
Specifically, they agree on the causal relationship between any two variables that both consider relevant and on the desirability of outcomes.
However, more variables may be relevant for DM1's predictions  than for DM2's. 

\subsection{Interpretations of the model}\label{sec: interpretations}
Our main interpretation of an SCR is the one discussed above: 
it describes a DM who predicts the outcome of her action using a causal model and adjusts her predictions for the spurious correlation caused by any confounders.

Alternatively, the model may also describe a DM with limited data access \citep{Spiegler2017}.
In this interpretation, she only considers or observes the distributions of several overlapping subsets of variables.
She then extrapolates from the partial datasets  using the distribution that maximizes entropy subject to matching the marginal distributions over the datasets. 
Identifying her DAG corresponds to identifying the considered subsets.
In the example, a consultant represented by $R_{\vari }$ only observes, or only has access to, two datasets;
one that keeps track of the effect of the workforce size on own output and another one that contains the correlations between own output, competitor output and profit. 
%Recently (and controversially), the FDA approved the Alzheimer's medication aducanumab on the basis of its reduction in plaque buildup despite limited evidence of its effects on the disease itself.%
%\footnote{ See \href{http://www.theatlantic.com/health/archive/2021/07/americas-drug-approval-system-unsustainable/619422/}{www.theatlantic.com/health/archive/2021/07/americas-drug-approval-system-unsustainable/619422/} for a description of the controversy and 
%\href{http://fda.report/media/143503/PCNS-20201106-CombinedFDABiogenBackgrounder0.pdf}{fda.report/media/143503/PCNS-20201106-CombinedFDABiogenBackgrounder\_%
%0.pdf} for the evidence submitted and the FDA's evaluation thereof.
%}

Another interpretation is a particular kind of correlation neglect caused by limited memory.
Without confounders, the DAG reduces the number of parameters needed to reconstruct the joint distribution over the variables.
In the running example, a DM with DAG $R_{\covariate }$ can store all the information that she deems relevant  using only $6$ parameters when all variables are binary. It would require $2^4-1=15$ parameters to record the probability of each possible realization.

Another interpretation is that the DM estimates a structural equation model (SEM). 
She estimates each variable $X_i$ based on the variables indexed by $R(i)$. 
Her estimates have a causal interpretation if she includes all the variables (and only those) that she believes have a direct causal impact on it. See Chapter 5 of \cite{Pearl2009} for a more in-depth discussion of this. 
 
For a final interpretation, we note that when $\rho$ has an  SCR $(R,u,n^*)$, $q_R$ minimizes Kullback-Liebler divergence from $q$ among all the probability distributions on $\mathcal{X}$ that are consistent with $R$.%
 \footnote{See Section 5.5 of \cite{Hajeketal1992}.}
A $\rho$ with an Endogenous SCR is a single agent Berk-Nash equilibrium \citep{EspondaPouzo2016Berk} with extreme-value errors.
As in that model, we can interpret the behavior as the steady state of a learning process with a set of parameters (probability distributions) that does not include the  ``true'' one.
\newpage

\appendix
\section{Proofs omitted from the main text}
\subsection{Notation}
For a DAG $Q$, let $\tilde{Q}$ be the skeleton or undirected version of $Q$, i.e. $j \tilde{Q} i$ if and only if either $i Q j$ or $jQi $, $(i,j,k)$ be a $Q$-v-collider if and only if $i Q k$, $j Qk$, $i \!\not\!\! Q j$ and $j \!\not\!\! Q i$, and
$Q^*$ be the DAG that drops all edges into $0$ so $p_Q(x|do(a))=p_{Q^*}(x|a)$ for all full support $p$.
 
\begin{defn}
Let $\mathcal{C}$ be a collection of subsets of a finite set and $\mathcal{T}$ a tree with $\mathcal{C}$ as its node set. 
Say that $\mathcal{T}$ is a \emph{junction tree} if for any $C_1,C_2 \in \mathcal{C}$, $C_1\cap C_2$ is contained in every node on the unique path in $\mathcal{T}$ between $C_1$ and $C_2$.  
\end{defn}
The set $C \subseteq \{0,\dots,n\}$ is a clique for $R$ if $j \tilde{R} k$ for all $j,k \in C$.
By Theorem 4.6 of \cite{cowell1999}, the maximal cliques of a   perfect  DAG $R$ can be linked to form a junction tree.
Call this a \emph{maximal clique junction tree (MCJT)} for $R$.

\subsection{Preliminary Lemmas  for Theorems \ref{thm: RMAP characterize DAG} and \ref{thm: proper SCR}}
Throughout, there is no loss in considering only an $R$ for which $R \cap An_{R^*}(n^*)^2=R$. Recall that as a binary relation, $R \subset N\times N$ and $R \cap E \times E = R \cap E^2$ with $E\subset N$ defines a new binary relation that retains only the edges between nodes in $E$.

Let $C^*_R$ (or just $C^*$ when $R$ is clear) be the set consisting of $0$ and all $R$-confounders. Notice $C^*_R$ is a clique: for $j,k \in C^*_R$, $j \tilde{R} k$ if either $j=0$ or $k=0$ and if $j,k\neq 0$, then  $j R 0$ and $k R 0$ so $j \tilde{R} k$ by perfection. 

First, we show that we can reduce the number of nodes in $R$ without changing predictions. Set $R^0=R$, $N^0(R)= An_R(n^*)$. 
Inductively define $N^{i+1}(R)$ and $R^{i+1}$ as follows.
There is a unique maximal $R^i$-clique $C^{i\dag}$ containing $n^*$, since $j R^i n^*$ and $k R^i n^*$ implies $j \tilde{R}^i k$ by perfection and $n^* \notr^i j$ for all $j$.
For any MCJT for $R^i$, there is a unique path from any maximal clique containing $C^*$ to $C^{i\dag}$. 
Take a maximal clique containing $C^*$ and the MCJT with one of the shortest such paths, and let $\mathcal{C}^{i}=\{C^i_1,\dots,C^i_m\}$ be the maximal cliques in this path ordered so that so that there is an edge from $C^i_j$ to $C^i_{j+1}$, and $C^*\subseteq C^i_1$.
If $\cup \mathcal{C}^i \neq N^{i}(R)$, then set $N^{i+1}(R)= \cup \mathcal{C}^i$.
If $\cup \mathcal{C}^i = N^{i}(R)$ and there is $j_{i+1} \in N^{i}(R) \setminus [C^* \cup \{n^*\} ]$ that is not contained in at least two maximal $R^{i}$-cliques, then set $N^{i+1}(R)= N^{i}(R) \setminus j_{i+1}$.
Otherwise, $N^{i+1}(R)=N^i(R)$.
Set $R^{i+1}=R^i \cap [N^{i+1}(R) \times N^{i+1}(R)]$.

Since there are finite variables and cliques, there is $\bar{i}$ so that $N^{i+1}(R)=N^i(R)$ for all $i>\bar{i}$. Let $N^*(R) =N^{\bar{i}+1}(R)$. 
We show that edges between variables not in $N^*(R)$ can be dropped.

\begin{lemma}
\label{claim_fundamental}
If $R$ is a perfect non-trivial DAG and $Q =R\cap N^*(R)^2$, then for all $x \in \mathcal{X}$, $p_{R^*}(x_n^*|x_{C^*})=p_{Q^*}(x_n^*|x_{C^*})$.
\end{lemma}

\begin{proof}
Clearly, $p_{R^{0*}}(x_{n^*}|x_{C^*})=p_{R^{*}}(x_{n^*}|x_{C^*})$.
Assume (IH) that $p_{R^*}(x_{n^*}|x_{C^*})=p_{R^{i*}}(x_{n^*}|x_{C^*})$. We show that (IH) implies $p_{R^*}(x_{n^*}|x_{C^*})=p_{R^{i+1*}}(x_{n^*}|x_{C^*})$. This establishes the result.

If $\cup \mathcal{C}^i \neq N^{i}(R)$, then  by properties of the MCJT, we can treat $C^i_1$ as ancestral and all the cliques outside of $\mathcal{C}^{i}$ as being non-ancestors of $n^*$. Hence $$p_{R^{*}}(x_{n^*}|x_{C^*})=p_{R^{i*}}(x_{n^*}|x_{C^*})=p_{R^{i+1*}}(x_{n^*}|x_{C^*}).$$ That is, we can drop all variables that do not appear in at least one clique in $\mathcal{C}^i$. 

If $\cup \mathcal{C}^i = N^{i}(R)$, then $(C^i_1,\cdots, C^i_m)$ is a MCJT for $R^{i}$. By construction, it has a single path that can be ordered naturally and, without loss, has $C^* \subset C^i_1$ and $n^* \in C^{i}_m $.
Let $k$ be so that $j_{i+1} \in C^i_k$ and $j_{i+1} \notin C^{i}_{k'}$ for all $k' \neq k$.

 Let $l(k')<k'$ be the highest index for each $k'>1$ so that $C^{i}_{k'} \cap C^{i}_{k'-1} \subset C^{i}_{l(k')}$.
Define $Q^i$ so that $jQ^i j' \Leftrightarrow jR^i j'$ when $j,j'\in C^*$, $jQ^i j'$ whenever $j\in C^*$ and $j' \in C_1^i \setminus C^*$, $jQ^i j'$ whenever $j \in C^{i}_{k'}\cap C^{i}_{l(k')}$ and $j' \in C^{i}_{k'}\setminus C^{i}_{l(k')}$, $jQ^i j_{i+1}$ whenever $j\in C^{i}_{k}\setminus\{j_{i+1}\}$, and $j \tilde{Q}^i j'$ for each distinct $j,j' \in C^{i}_{k'}$ for all $k'$. 

Notice $Q^{i*}$ has the same skeleton and the same v-colliders as $R^{i*}$ and that  $j_{i+1} \notin An_{Q^i}(n^*)$. Hence $$p_{[Q^{i} \cap N^{i+1}(R)^2]^*}(x_{n^*}|x_{C^*})=p_{Q^{i^*}}(x_{n^*}|x_{C^*})=p_{R^{i*}}(x_{n^*}|x_{C^*})=p_{R^*}(x_{n^*}|x_{C^*})$$
where the last equality holds by hypothesis and the second by   Theorem 1 of \cite{Verma1991equivalence}.
Moreover, $[Q^{i} \cap N^i(R)^2]^*$ has the same skeleton and same v-colliders as $R^{i+1*}$, so $p_{R^*}(x_{n^*}|x_{C^*})=p_{R^{i+1*}}(x_{n^*}|x_{C^*})$.
\end{proof}

By Lemma \ref{claim_fundamental}, a MCJT for a perfect $R$ such that $R \subset N^*(R)^2$ consists of single path. We say that $(C_1,\dots,C_m)$ is the \emph{Canonical Maximal Clique Junction Tree (CMCJT)} for such an $R$ if $C_i$ is a maximal $R$-clique for each $i$, $\{C_1,\dots,C_m\}$ is a MCJT for $R$ that has an edge between $C_i$ and $C_{i+1}$ for each $i$, and $C^*_R \subseteq C_1$. 

Thus, every non-redundant variable appears in at least two  maximal cliques. Therefore, each maximal clique is the union of its intersections with its neighbors. Below, we show that the intersection of adjacent maximal cliques corresponds to a $\subseteq$-minimal $\rho$-separator.

\begin{lemma}
\label{lem: clique union of intersection}
If $(C_1,\dots,C_m)$ is the CMCJT for a perfect and non-trivial DAG $R \subset N^*(R)^2$, $A_0=C^*$, $A_m=\{n^*\}$, and $A_i =C_i \cap C_{i+1}$ for $i=1,\dots,m-1$, then $C_i = A_{i-1} \cup A_i$.
\end{lemma}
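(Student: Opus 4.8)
The plan is to prove the two set inclusions separately. The inclusion $A_{i-1} \cup A_i \subseteq C_i$ is immediate from the definitions: for $i \ge 2$ we have $A_{i-1} = C_{i-1} \cap C_i \subseteq C_i$, while $A_0 = \{0\} \subseteq C_1$ because the path $(C_1,\dots,C_m)$ was constructed with $0 \in C_1$; symmetrically $A_i = C_i \cap C_{i+1} \subseteq C_i$ for $i \le m-1$ and $A_m = \{n+1\} \subseteq C_m$. Thus the only genuine content is the reverse inclusion $C_i \subseteq A_{i-1} \cup A_i$.

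For the reverse inclusion I would fix $j \in C_i$ and show that $j$ lies in a neighboring clique — either $C_{i-1}$ or $C_{i+1}$ — which is exactly what membership in $A_{i-1}$ or $A_i$ records. The crux is the observation that, for any node $j$, the collection of path-cliques containing $j$ forms a contiguous block $\{C_a, C_{a+1}, \dots, C_b\}$ of the path. This follows from the defining property of a junction tree: if $j \in C_p$ and $j \in C_q$ with $p < q$, then $j \in C_p \cap C_q$, which is contained in every node on the unique path between $C_p$ and $C_q$ — here $C_p, C_{p+1}, \dots, C_q$ — so $j$ lies in every intermediate clique. I expect this contiguity claim to be the main obstacle, in the sense that it is the one genuinely structural step; everything else is bookkeeping.

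Given contiguity, the argument finishes quickly. When $j \notin \{0, n+1\}$, Lemma \ref{claim_fundamental} guarantees that $j$ belongs to at least two maximal cliques, so the block $\{C_a, \dots, C_b\}$ satisfies $a < b$, and $a \le i \le b$ since $j \in C_i$. Then either $i > a$, forcing $j \in C_{i-1} \cap C_i = A_{i-1}$, or $i < b$, forcing $j \in C_i \cap C_{i+1} = A_i$; because $a < b$, at least one of these inequalities holds. The two endpoint nodes are handled directly by Lemma \ref{claim_fundamental}: the node $0$ lies in the unique clique $C_1$, so $j = 0 \in C_i$ forces $i = 1$ and $j \in A_0$; symmetrically $j = n+1$ forces $i = m$ and $j \in A_m$.

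The one subtlety I would check carefully is that the chosen index stays in range at the two ends of the path. For instance, an interior node $j \neq 0$ sitting in $C_1$ has its block starting at $a = 1$, so necessarily $i = 1 = a < b$ and only the $i < b$ branch applies, placing $j$ in $A_1$ (and never requiring a spurious $C_0$); the mirror image holds at $C_m$. Verifying that these boundary configurations always select the valid branch, together with the contiguity claim, completes the proof that $C_i = A_{i-1} \cup A_i$.
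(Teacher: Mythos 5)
Your proof is correct and follows essentially the same route as the paper's: both reduce the content to the inclusion $C_i \subseteq A_{i-1}\cup A_i$, invoke Lemma \ref{claim_fundamental} to place a node $j\in C_i$ in a second maximal clique, and use the junction-tree (path) property to pull $j$ into an adjacent clique $C_{i-1}$ or $C_{i+1}$. You merely make the contiguity of the block of cliques containing $j$ and the endpoint cases more explicit than the paper does.
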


\begin{proof}
First note that this clearly holds for $i=1,m$. Thus, pick an $i\in \{2,\cdots,m-1\}$ and $j \in C_i$.   By Lemma \ref{claim_fundamental}, there exists $i' \neq i$ so that $j \in C_{i'}$. If $i'>i$, $j \in C_{i+1}$ and hence $j \in A_{i+1}$. If $i'<i$, $j \in C_{i-1}$ and hence $j \in A_{i-1}$.
\end{proof}

Finally, we relate the links in $R$ that cannot be reversed to the maximal cliques and thus the $\subseteq$-minimal $\rho$-separators.
\begin{defn}
\label{def: char fundamental}
For a DAG $Q\subseteq N^*(Q)^2$, $k \hat{Q} j$  if and only if $k Q j$ and either:
\vspace{-1em}
\begin{itemize}
\item [(a)] $k\in C^*$ and either $j \notin C^*$ or $j=0$, 
\item [(b)] $k$ precedes $j$ on a $Q$-MAP from $l\in C^*$ to $n^*$, or
\item [(c)] there exists a node $l\in N$ such that $l \hat{Q} k$ and $l \not \!\! Q j$.
\end{itemize}
\end{defn}

\begin{lemma}
\label{lem: fun links and cliques}
If $(C_1,\dots,C_m)$ is the CMCJT for a perfect and non-trivial DAG $R \subset N^*(R)^2$ and $C_0=C^*_R \subset C_1$ and $\{k,j\} \not\subseteq C_0$, then $k \hat{R} j $ if and only if there exists $i$ so that $k \in C_i \cap C_{i-1}$ and $j \in C_i \setminus C_{i-1}$.
\end{lemma}

\begin{proof}
First we show necessity.
Let (*) be the assertion that ``if $k\in C_{i} \cap C_{i-1}$ and $j\in C_{i}\setminus C_{i-1}$, then $k \hat{R}j$.'' We show that (*) holds for all $i$ by induction. For $i=1$, $k \in C_0 \cap C_{1}$ iff $k \in C^*$. If $j \in C_1 \setminus C_0$, then $k R j$. To see this, suppose not, i.e. $k\notr j$. As $j,k\in C_1$ this implies that $jRk$. If $j R 0$, then $j \notin N^*(R)$ since $j \notin C^*$ and $j R 0$ imply all paths from $j$ to $n^*$ go through $C^*$. Hence $0 R j$ and $k \neq 0$, but $j R k$ implies  $0 R j R k R 0$,  a cycle. Conclude $k R j$.

Assume  (IH) that (*) holds for all $i'\in \{1,\cdots,i-1\}$ where $i \geq 2$.
Take any $k \in C_i \cap C_{i-1} $ and $j \in C_i \setminus C_{i-1}$.
We show first that $k R j$. If not, then $j R k$ since $j,k \in C_i$. Pick any $l \in C_{i-1} \setminus C_{i-2}$ and $l' \in C_{i-1} \setminus C_i$. Since $l'\in N^*(R)$, we have $l'\in C_{i-1}\cap C_{i-2}$, and hence by (IH) $l'\hat{R} l$. We must have $l R j$. To see this, note that by Lemma \ref{lem: clique union of intersection}, $l \in C_i$ and $l' \in C_{i-2}$. By definition of a junction tree, $l'\!\not\!\! \tilde{R} j$.  $j R l$ would imply $j \tilde{R} l'$. 
Because $l',k \in C_{i-1}$, $l' \tilde{R} k$. If $k R l'$, then $k R l' R l R j R k$ is a cycle in $R$. If $l' R k$, then $j \tilde{R} l'$ by perfection of $R$. Either is a contradiction, so $k R j$ must hold.

We now show that $k \hat{R} j$. If $k \in C_0$, then $k \hat{R} j$ follows from Definition \ref{def: char fundamental}.(a). Otherwise, there is $i'< i-1$ so that $k \in C_{i'} \setminus C_{i'-1}$. For any $l \in C_{i'} \setminus C_{i'+1}$, $l \in C_{i'-1}$ by Lemma \ref{lem: clique union of intersection} and so $l \hat{R} k$ by IH. But $l \not \tilde{R} j$, so $k \hat{R} j$ by \ref{def: char fundamental}.(c).
% If $k\in C_0$ and $j R k$, then there exists $l \in C_{i-1} \setminus C_{i-2}$ and $l' \in C_{i-1} \setminus C_i$. 
% Now, $j \tilde{R} l$ by construction and $l' R l$ by IH. 
% We must have $l R j$ since $j R l$ would imply $j \tilde{R} l'$, contradicting that $j \notin C_{i-1}$ and $l' \notin C_i$. If $l' \notin C_0$, then by IH, there is an $R$-path from $k$ to $l$, leading to a cycle when combined with $l R j R k$. If $l' \in C_0$ and either $l' = k$ or $k R l'$, then there is a cycle. If $l' \in C_0$ and $l' R k$, then $l' \tilde{R} j$ by perfection, contradicting that $j \notin C_{i-1}$ and $l' \notin C_i$. Hence $k R j$ and, since $k \in C^*$, $k \hat{R} j$.

%  So suppose that $k\not \in C_0$. Then there exists $i'\leq i-1$ for which $k \in C_{i'} \setminus C_{i'-1}$ and also $l \in C_{i'} \setminus C_{i'+1}$ since $C_{i'} \not \subseteq C_{i'+1}$. By Lemma \ref{lem: clique union of intersection}, $l \in C_{i'} \cap C_{i'-1}$, and by IH, $l \hat{R}k$.  Since $j\not \in C_{i''}$ for any $i''<i$ and $l \notin C_{i''}$ for any $i''>i'$,  $l\ \tilde{\notr} j$, so by perfection, $j \notr k$. Then, $k R j$ since $k,j \in C_i$, and  $k \hat{R} j$ by Definition \ref{def: char fundamental}.c.

 To complete the proof, we show that ``if $j\in C_i\setminus C_{i-1}$ and $k\hat{R}j$, then $k\in C_{i-1}\cap C_i$.'' 
Let $j \in C_i \setminus C_{i-1}$ and $k \hat{R} j$.
If $i=1$ and $k \in  C^*$, then $k \in C_0 \cap C_1$.
If $i=1$ and $k \notin  C^*$, then $j,k \in C_1 \setminus C_0$. Therefore, $k' R j$ and $k' R k$ for all $k' \in C^*$ by necessity. But this rules out Definitions 1.(a)-(c) holding for $k$ and $j$, since no $R$-MAP can contain both. 
For $i > 1$,   take any $j' \in C_{i}  \setminus C_{i+1}$, noting  $j' \in C_{i-1} \cap C_i$ by Lemma \ref{lem: clique union of intersection}. By necessity, $j' \hat{R} j$. If $j'=k$, then we are done. 
If not, then $k \tilde{R} j'$ by perfection, so $k \in C_{i'}$ for some $i'\leq i$. 
Similarly, $k \tilde{R} j$ so $k \in C_{i''}$ for some $i'' \geq i$. 
Because $(C_1,\dots,C_m)$ is the CMCJT, $k  \in C_i $. 
To see $k \in C_{i-1} \cap C_i$, suppose not, so $k \notin C_{i-1}$.
Hence $k\not \in C^*$. 
 Since any path from a node in $C^*$ to $n+1$ contains some $j' \in C_{i-1} \cap C_i$, and $j' \hat{R} j$ and $j' \hat{R} k$  by necessity, no $R$-MAP includes both $j$ and $k$.
Therefore, by Definition \ref{def: char fundamental} there is $l$ so that $l\hat{R}k$ and $l\notr j$.
Then, $l \in C_i$, since otherwise $k \hat{R} l$ by necessity. Thus, $jRl$.
But then $jRlRkRj$, a contradiction. Hence $k\in C_{i-1}\cap C_i$.
\end{proof}

Finally, we show that all other links in $R$ prevent v-colliders.
\begin{lemma}\label{lem:R not E implies v}
For a perfect and non-trivial DAG $R \subseteq N^*(R)^2$, if $j R k$ and $j\ \hat{\notr} k$, then there exists $l$ such that $j \hat{R} l$ and $k \hat{R} l$.
\end{lemma}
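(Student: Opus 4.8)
The plan is to argue entirely inside the maximal clique junction tree (MCJT) $(C_1,\dots,C_m)$ of $R$, which by Lemma \ref{claim_fundamental} is a single path, and to encode each node by the interval of clique-indices in which it appears. Writing $a_v=\min\{i:v\in C_i\}$ and $b_v=\max\{i:v\in C_i\}$ under the conventions $C_0=\{0\}$ and $C_{m+1}=\{n+1\}$, the junction-tree property makes $\{i:v\in C_i\}$ the consecutive block $\{a_v,\dots,b_v\}$. The key dictionary, read off from Lemma \ref{lem: fun link clique int}, is that $u\hat{R}v$ holds if and only if $a_u<a_v\le b_u$: indeed $v\in C_i\setminus C_{i-1}$ forces $i=a_v$, and then $u\in C_{a_v}\cap C_{a_v-1}$ amounts to $a_u<a_v\le b_u$. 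With this translation the goal becomes purely combinatorial: produce an index $s$ with $\max(a_j,a_k)<s\le\min(b_j,b_k)$ together with a node $l$ first appearing in $C_s$, since then $a_j<a_l=s\le b_j$ and $a_k<a_l=s\le b_k$ yield $j\hat{R}l$ and $k\hat{R}l$.

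First I would pin down the relative positions of $j$ and $k$. Because $jRk$ gives $j\tilde{R}k$, the two nodes lie in a common clique, so their intervals overlap; in particular $a_k\le b_j$ and $a_j\le b_k$. Now $j\ \hat{\notr}\ k$ is the negation of $a_j<a_k\le b_j$, and since $a_k\le b_j$ always holds this reduces to $a_j\ge a_k$. Acyclicity rules out $kRj$ and hence $k\hat{R}j$ (a fundamental link is a link), so the negation of $a_k<a_j\le b_k$ holds, and since $a_j\le b_k$ this reduces to $a_k\ge a_j$. Combining the two gives $a_j=a_k=:a$, so $\max(a_j,a_k)=a$.

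Next I would establish the strict inequality $a<\min(b_j,b_k)$. The crucial input is that no relevant node occupies a single clique-index: for $v\in N^*(R)\setminus\{0,n+1\}$ Lemma \ref{claim_fundamental} places $v$ in at least two maximal cliques, while the conventions give $0$ the interval $\{0,1\}$ and $n+1$ the interval $\{m,m+1\}$, so $a_v<b_v$ in every case. Since $a_j=a_k=a$, an equality $a=\min(b_j,b_k)$ would force $b_j=a$ or $b_k=a$, i.e.\ a degenerate interval for $j$ or $k$, which is impossible. Hence $a<\min(b_j,b_k)$, and $s:=a+1$ satisfies $a<s\le\min(b_j,b_k)$; note also that $a_j=a_k$ precludes $j=k=0$, so $a\ge 1$ and $s\ge 2$. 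I then exhibit the witness by choosing any $l\in C_s\setminus C_{s-1}$, which is nonempty since $C_{s-1}$ and $C_s$ are distinct maximal cliques (here $2\le s\le m$); such an $l$ has $a_l=s$, and the dictionary delivers $j\hat{R}l$ and $k\hat{R}l$.

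The main obstacle is the bookkeeping at the two boundary nodes and the guarantee that the selected layer is inhabited. Concretely, I must check that the conventional cliques $C_0,C_{m+1}$ make the interval characterization of $\hat{R}$ hold uniformly (so that, for instance, every edge out of $0$ is automatically fundamental, rendering the hypothesis vacuous there), and that $s=a+1$ never lands on the empty layer $C_{m+1}\setminus C_m=\emptyset$. The latter cannot occur, since $s=m+1$ would require $a=m$ and $\min(b_j,b_k)\ge m+1$, forcing both $j$ and $k$ to equal $n+1$ and contradicting that they are distinct endpoints of an edge. I expect the careful verification of the uniform interval dictionary, and of the non-degeneracy of intervals supplied by Lemma \ref{claim_fundamental}, to be the part of the argument that must be handled most precisely.
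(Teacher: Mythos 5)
Your proof is correct and follows essentially the same route as the paper's: both arguments work inside the path-shaped MCJT, use Lemma \ref{lem: fun link clique int} to characterize fundamental links by clique positions, deduce that $j$ and $k$ first appear in the same clique and both persist into the next one, and take $l$ from the set difference of adjacent cliques. Your interval encoding merely makes explicit (and slightly more carefully) the bookkeeping that the paper's version compresses into the claim that one can pick $i$ with $j,k\in C_i\cap C_{i+1}$.
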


\begin{proof}
Pick any $j,k\in N^*(R)$ so that $j R k$ and $j \ \hat{\notr} k$.
If $\{j,k\} \subset C^*$, then $k\neq 0$ by Definition \ref{def: char fundamental}(a), and $j\neq 0$ as $k\in C^*$. Thus, $j$ and $k$ are $R$-confounders, and so $j \hat{R} 0$ and $k \hat{R} 0$ by Definition \ref{def: char fundamental}.
Otherwise, let $(C_1,\dots,C_m)$ be the CMCJT for $R $, noting that  $j,k \in C_i  $ for some $i$ and that if $k \notin C_{i-1}$, then $j \notin C_{i-1}$ by Lemma \ref{lem: fun links and cliques} and $j     \ \hat{\notr}  k$. This combined with Lemma \ref{lem: clique union of intersection} implies that either $j,k \in C_i \cap C_{i-1}$ or $j,k \in C_{i} \cap C_{i+1}$. In the former case pick $l \in C_{i} \setminus C_{i-1} $ and in the latter pick $l \in C_{i+1} \setminus C_{i} $. In either case, $j \hat{R}l$ and $k \hat{R} l$  by Lemma \ref{lem: fun links and cliques}.
\end{proof}

The following rewriting of $p_{R^*}(x)$ will be useful.
\begin{lemma}
\label{lem: JT rearrange}
If $(C_1,\dots,C_m)$ is the CMCJT  of a   perfect and non-trivial   DAG $R  \subset  N^*(R)^2$, then for any $p \in \mathcal{D}$,
\begin{equation}
p_{R^*}(x_{N^*(R)})=p(x_{C^* \setminus \{0\}})p(x_0)p(x_{C_1}|x_{C^*})\prod_{j=1}^m p( x_{C_j} | x_{C_j \cap C_{j-1}}).
\label{eq: JT}	
\end{equation}
\end{lemma}

\begin{proof}
Let $p^*$ be defined by the RHS of Equation (\ref{eq: JT}) where $(C_1,\dots,C_m)$ is the CMCJT of $R$.
Let $\bar{C}_{k}= \cup_{i=1}^k C_i$.
Proceed by induction to show that $p^*(x_{\bar{C}_k})=p_{R^*}(x_{\bar{C}_k})$.
Since $C^* \setminus \{0\}$ and $\{0\}$ are ancestral cliques in $R^*$ that are $R^*$-independent, $p_{R^*}(x_{C^*})=p(x_{C^* \setminus \{0\}})p(x_0)$.
That $p^*(x_{\bar{C}_1})=p_{R^*}(x_{\bar{C}_1})$ follows from the formula for $p_{R^*}(x_{\bar{C}_1})$. 

Suppose (IH) that $p^*(x_{\bar{C}_{M-1}})=p_{R^*}(x_{\bar{C}_{M-1}})$ and that if $i \in C_{M-1}\setminus C_{M-2}$, then $i' R i$ for all $i' \in C_{M-1} \cap C_{M-2}$. 
Notice the $C_M \cap \bar{C}_{M-1}=C_M \cap C_{M-1}$ by definition of CMCJT. 
Take any $l \in C_{M-1} \setminus C_M$. By Lemma \ref{lem: clique union of intersection}, $l \in C_{M-1} \cap C_{M-2}$, and by definition of CMCJT, $l \notin C_{m'}$ for any $m'>M-1$.
First, by Lemma \ref{lem: fun links and cliques}, $k R j$ for all $k \in C_M \cap C_{M-1}$ and $j \in C_M \setminus C_{M-1}$.
%First, we show that $k R j$ for all $k \in C_M \cap C_{M-1}$ and $j \in C_M \setminus C_{M-1}$.   Pick any $j \in  C_M \setminus C_{M-1}$, and suppose not, so $j R k$ for some $k \in C_M \cap C_{M-1}$. By (IH), $l R k$, and by perfection, $j \tilde{R} l$, contradicting that $j \notin C_{M-1}$ and $l \notin C_M$.
Second, we show that  $k \notr j$ for all $ k \notin \bar{C}_M$. If $k R j$ for some $k \notin \bar{C}_M$, then $k \tilde{R} l$ by perfection, contradicting that $l \notin C_{M}$ or that $k \notin \bar{C}_M$.

Pick $j_1 \in C_M \setminus C_{M-1}$ so that $j_1 R j'$ for all $j' \in C_M \setminus  C_{M-1}$. $j_1$ exists since $R$ is acyclic and $C_M \setminus C_{M-1}$ is a clique. By the above, $R(j_1)= C_M\cap C_{M-1}$. Now, $$p_{R^*}(x_{\bar{C}_{M-1}},x_{j_1})=p^*(x_{\bar{C}_{M-1}})p(x_{j_1}|x_{R(j_1)})=p^*(x_{\bar{C}_{M-1}})p^*(x_{j_1}|x_{C_M\cap C_{M-1}}).$$
For $k \geq 2$, recursively define $j_k\in  C_M \setminus [ C_{M-1} \cup \{j_1,\dots,j_{k-1}\}]$ so that $j_k R j'$ for all $j' \in  C_M \setminus [ C_{M-1} \cup \{j_1,\dots,j_{k-1}\}]$. $j_k$ exists by acyclicity.  Assume that \[p_{R^*}(x_{\bar{C}_{M-1}},x_{j_1},\dots,j_{k-1})=p^*(x_{\bar{C}_M-1})p (x_{j_1},\dots,x_{j_{k-1}}|x_{C_M\cap C_{M-1}}) \] and that $R(j_{k'})= \{j_1,\dots,j_{k'-1}\} \cup [C_{M-1} \cap C_M]$ for $k'<k$, which is true for $k=2$. By the above, $R(j_k) = \{j_1,\dots,j_{k-1}\}\cup [C_M\cap C_{M-1}]$. Therefore,
\begin{align*}
&p_{R^*}(x_{\bar{C}_{M-1}},x_{j_1},\dots,x_{j_{k}})\\
=& p^*(x_{\bar{C}_{M-1}})p (x_{j_1},\dots,x_{j_{k-1}}|x_{C_M\cap C_{M-1}})p(x_{j_k}|x_{R(j_k)}) \\
=& p^*(x_{\bar{C}_{M-1}})p(x_{j_1},\dots,x_{j_{k-1}}|x_{C_M\cap C_{M-1}})p(x_{j_k}|x_{j_1},\dots,x_{j_{k-1}},x_{C_M\cap C_{M-1}})\\
=& p^*(x_{\bar{C}_{M-1}})p (x_{j_1},\dots,x_{j_{k}}|x_{C_M\cap C_{M-1}})
\end{align*}
Applying enough times, we have $p^*(x_{\bar{C}_M}) = p_{R^*}(x_{\bar{C}_{M-1}}) p(x_{C_M \setminus C_{M-1}}|x_{C_M \cap C_{M-1}}) = p_{R^*}(x_{\bar{C}_M})$. 
Inductively,  $p^*=p_{R^*}$.
\end{proof}

\subsection{Sufficiency for Theorem \ref{thm: RMAP characterize DAG}}
Suppose that $\rho$ has a perfect SCR $(R, u, n^*)$ and $R'$ is   perfect and non-trivial DAG so that the set of $R$-confounders equals the set of $R'$-confounders, the set of $R^{\prime}$-MAPs from an $R$-confounder or $0$ to $n^*$ equals the set of $R$-MAPs from an $R$-confounder or $0$ to $n^*$.
By Lemmas \ref{lem: fun links and cliques} and \ref{lem:R not E implies v}, any two DAGs that have the same confounders and the same MAPs from confounders or $0$ to $n^*$ correspond to the same CMCJT after removing unnecessary variables. Therefore,  $q_{R^*}(x_{n^*}|a)=q_{R^{\prime*}}(x_{n^*}|a)$  for any $q$ and $a \in S_q$.

\subsection{Necessity for Theorem \ref{thm: RMAP characterize DAG}}
\label{sec: Proof Necessity ID}
It will be useful to introduce the following class of datasets.
Partition every $\mathcal{X}_{j'}$ into $\{\bar{E}_{j'},\underline{E}_{j'}\}$ where $x_{j'} \in \bar{E}_{j'}$ and $x'_{j'} \in \underline{E}_{j'}$ implies $x_{j'}>x'_{j'}$ and let $\bar{E}_0=\{a\}$ and $\underline{E}_0=\{b\}$.
A dataset $q$ is \emph{pseudo-binary} if $S_q=\{a,b\}$ and the distribution within each $\bar{E}_{j'}$ and $\underline{E}_{j'}$ is independent of all  events  outside of $j'$.
A pseudo-binary dataset has $\{i,j\}$-MLRP if and only if $q(\bar{E}_i|\bar{E}_j)>q(\bar{E}_i|\underline{E}_j)$ because %$\frac{q(\bar{E}_i|\bar{E}_j)}{q(\bar{E}_i|\underline{E}_j)}>\frac{q(\underline{E}_i|\bar{E}_j)}{q(\underline{E}_i|\underline{E}_j)}$ if and only if 
$$\frac{q(\bar{E}_i|\bar{E}_j)}{q(\underline{E}_i|\bar{E}_j)}
>\frac{q(\bar{E}_i|\underline{E}_j)}{q(\underline{E}_i|\underline{E}_j)}\iff \frac{q(\bar{E}_i|\bar{E}_j)}{1-q(\bar{E}_i|\bar{E}_j)}
>\frac{q(\bar{E}_i|\underline{E}_j)}{1-q(\bar{E}_i|\underline{E}_j)}  $$
and the likelihood ratios are the same for any values in the same cell.
Also note $\{i,j\}$-MLRP is symmetric in $i$ and $j$.

\begin{proof}[Proof of Lemma \ref{lem: I contains AP iff block}]
[(i) $\implies$ (ii)] Suppose that $A$ intersects all paths from $0$ to $n^*$ and let $R'$ remove all links from $A$ to $ A^c$ and vice versa.
Observe that when $X_A \perp_q X_{A^c}$, 
$$q(x_{i}|x_{R(i)})=q(x_{i}|x_{R(i)\cap A},x_{R(i) \setminus A})=q(x_{i}|x_{R(i)\cap A})=q(x_i|x_{R'(i)})$$
when $i \in A$, and for $i \notin A$,
$$q(x_{i}|x_{R(i)})=q(x_{i}|x_{R(i)\cap A},x_{R(i) \setminus A})=q(x_{i}|x_{R(i)\setminus A})=q(x_i|x_{R'(i)}).$$
Hence $q_R(x)=q_{R'}(x)$ for all $x$. Since $0 \notin An_{R^{\prime}}(n^*)$, $X_0 \perp_{q_{R'}} X_{n^*}$  and so $\rho(a,q)=\rho(b,q)$ for all $a,b \in S_q$.

[(ii) $\implies$ (i)] By contrapositive. Let $(i_0,\dots,i_m)$ be an $R$-MAP from $0$ to $n^*$ that does not intersect $A$.		
Pick a pseudo-binary dataset $q$ that factorizes as 
\[q(x)=q(x_0)\prod_{j =1}^m q(x_{i_{j}}|x_{i_{j-1}}) \prod_{k' \notin \{i_0,\dots,i_m\}} q(x_{k'})\]
and where $q(\bar{E}_{i_{j}}|\bar{E}_{i_{j-1}})>q(\bar{E}_{i_{j}}|\underline{E}_{i_{j-1}})$ for all $j$.
This implies that $X_{A} \perp_q X_{A^c}$ since $A \subseteq N \setminus  \{i_0,\dots,i_{m}\}$. For such a dataset $q$, $\rho(a,q)>\rho(b,q)$ since   $q_{R }(\cdot|do(a))$ FOSD $q_{R }(\cdot|do(b))$.

[(ii) $\implies$ (iii)] This is immediate.

[(iii) $\implies$ (i) when $R$ unconfounded] Suppose not, so there is $A$ that does not intersect at least one $R$-MAP and a $q^A$ having the claimed properties for which $\rho(a,q^A)=\rho(b,q^A)$. Let $(C_1,\dots, C_m)$ be a CMCJT representation for $R\cap N^*(R)^2$ with $0\in C_1$ and $n^*\in C_m$, and $B_k=C_k\setminus A$ for all $k$.
Since $A$ does not intersect at least one $R$-MAP, $B_k\cap B_{k-1}\neq \emptyset$ for every $k$.

By Lemma \ref{lem: JT rearrange} and $X_A\perp_{q^A} X_{A^c}$, we note that
$$q^A_{R}(\bar{x}_{n^*}\mid a) \geq  q^{A}(\bar{x}_{B_1\setminus\{0\}}\mid a) \prod_{k=2}^{m} q^{A}(\bar{x}_{B_k\setminus B_{k-1}}\mid \bar{x}_{B_k\cap B_{k-1}}).$$
When $\delta=\left((2^{1/n} -1){q^A}(a)+1 \right)^{-1}$,
\begin{align*}
q^{A}(\bar{x}_{B_k\setminus B_{k-1}}| \bar{x}_{B_k\cap B_{k-1}})&=\frac{q^{A}(\bar{x}_{B_{k}}| a) q^{A}(a) + q^{A}(\bar{x}_{B_k}| b) q^{A}(b)}{q^{A}(\bar{x}_{B_k\cap B_{k-1}}|a)q^{A}(a)+q^{A}(\bar{x}_{B_k\cap B_{k-1}}| b)q^{A}(b)}\\
&>\frac{(\delta) q^{A}(a) + (0) q^{A}(b)}{q^{A}(\bar{x}_{B_k\cap B_{k-1}}|a)q^{A}(a)+q^{A}(\bar{x}_{B_k\cap B_{k-1}}| b)q^{A}(b)}\\
&>\frac{(\delta) q^{A}(a) }{(1) q^{A}(a)+(1-\delta)q^{A}(b)}= \frac{\delta  q^{A}(a)}{1-(1-q^{A}(a))\delta }= 2^{-1/ n},
\end{align*}
since $\delta \leq q^{A}(\bar{x}_{B_k}|a) < 1$, $0 < q^{A}(\bar{x}_{B_k}|b)\leq 1 - \delta$, and $q^{A}(b)=1-q^{A}(a)$. As there are no more than $n$ minimal separator sets, $q_{R}(\bar{x}_{n^*}|a)> \left(2^{-1/n}\right)^n =\frac12$.
Symmetrically, $q_{R}(\underline{x}_{n^*}|b)>\frac12$. Therefore, $U_{q^A}(a) > \frac12 u(\bar{x}) + \frac12 u(\underline{x})  > U_{q^A}(b)$ so $\rho(a,q)>\rho(b,q)$.
\end{proof}

We now relate $\mathcal{A}^\rho$ to the CMCJT of a DAG $R$ that represents $\rho$.
\begin{lemma}
\label{lem: separators intersection}
Let $\rho$ have a perfect SCR $(R\subseteq N^*(R)^2, u, n^*)$ and $R'=R\cap  \left[ N^* \left(R\setminus [R(0)\times N] \right) \right]^2$. Then, $A \in \mathcal{A}^\rho$ if and only if $A = C_i \cap C_{i+1}$ for some $i$ where $(C_1,\dots,C_{m})$ is the CMCJT for $R'$, $C_0 = \{0\}$, and $C_{m+1}=\{n^*\}$.
\end{lemma}

\begin{proof}
Let $\rho$ have a perfect SCR $(R\subseteq N^*(R)^2, u, n^*)$, $$R'=R\cap  \left[ N^* \left(R\setminus [R(0)\times N] \right) \right]^2,$$ $(C_1,\dots,C_{m})$ be the CMCJT for $R'$, $C_0 = \{0\}$,  $C_{m+1}=\{n^*\}$, $A_i=C_i \cap C_{i+1}$ for each $i$, and $B_i=\cup_{j=1}^i C_j$ for each $i$.
Moreover, observe that every $R$-MAP from $0$ to $n^*$ is an $R'$-MAP, and every $R'$-AP ($R'$-MAP) from $0$ to $n^*$ is a $R$-AP ($R$-MAP).
Clearly, $A_0,A_{m} \in \mathcal{A}$.

Pick any $i\geq 0$. We first show that every $R$-AP from $0$ to $n^*$ intersects $A_{i+1}$. Take any such $R$-AP. It contains an $R$-MAP $(i_0,\dots,i_M)$. Let $i_k$ be first index so that $i_k \notin B_{i}$. In particular, $i_k \in C_j \setminus C_{j-1}  $ for some $j \geq i+1$. Then, $i_{k-1} \in C_{j-1} \cap C_j \subset B_{j-1}$ by Lemma \ref{lem: fun links and cliques}, so $j-1=i$ and $i_k \in A_{i+1}$. By Lemma \ref{lem: I contains AP iff block}, $A_{i+1}$ separates.
	
Let $A \in \mathcal{A}\setminus \{\{0\},\{n^*\}\}$. By Lemma \ref{lem: I contains AP iff block}, $A$ intersects every $R$-AP from $0$ to $n^*$. By Theorem 4.4 of \cite{cowell1999}, $A$ is a clique, so $A \subset C_i$ for some $i$. 
We show that either $A=A_i$ or $A= A_{i-1}$. Since both $A_i$ and $A_{i-1}$ separate and $A$ is minimal, it suffices to show that either $A_{i-1}\subseteq A$ or $A_i\subseteq A$. 

For contradiction, suppose that  $A_i \cap A_{i-1} \not \subseteq A$. There exists $j \in A_i \cap A_{i-1} \setminus A$. Let $i'<i-1$ be such that $j \in C_{i'} \setminus C_{i'-1}$. Then, there exist $j' \in C_{i'} \setminus C_{i'+1} \subseteq C_{i'} \cap C_{i'-1}$ and $l \in C_{i+1}\setminus C_i$. By Lemma \ref{lem: fun links and cliques},  $j' \hat{R}' j$ and $j \hat{R}' l$, so there exists a $R$-AP that does not intersect $A$.
Therefore, $A_i \cap A_{i-1}  \subseteq A$.

If $A_{i-1}  \subseteq A_i$ or $A_i  \subseteq A_{i-1}$, then we are done since $ A_i \cap A_{i-1}  \subseteq  A$. Otherwise, for any  $j \in A_{i-1} \setminus A_i$ and any $k \in A_{i} \setminus A_{i-1}$, $j \hat{R} k$ by Lemma \ref{lem: fun links and cliques}, and by the above, there is an $R$-AP that includes $(j,k)$ and no other variables in $C_i$. Since $A$ intersects all $R$-APs, either $k \in A$ on $j \in A$. In the former case, this must be true for every  $k \in A_{i} \setminus A_{i-1}$, and in the latter for every $j \in A_{i-1} \setminus A_i$. Since  $A_i \cap A_{i-1}  \subseteq A$, either $A_{i} \subseteq A$ or $A_{i-1} \subseteq A$.
%If $A_{i} \not\subseteq A$, then there is $k \in A_{i} \setminus A \subset A_i \setminus A_{i-1} $. For any $j \in A_{i-1} $, $j \hat{R} k$  by Lemma \ref{lem: fun links and cliques} and since $A$ intersects all $R$-APs, $j \in A$. Since $j$ was arbitrary,  $A_{i-1} \subseteq A$.

It remains to be shown that $A_i \not\subseteq A_j$ for any $j \neq i$. Suppose not, so $A_i \subseteq A_j$ for $j \neq i$. Consider $j>i$; similar arguments apply when $j<i$. By Lemma \ref{lem: clique union of intersection}, $l \in A_i \cap A_j$ implies that $l \in [C_{i+1} \cap C_{j+1}] \setminus R(0)$. Since $(C_1,\dots,C_m)$ is the CMCJT, $l \in C_{i+2}$. But then every $l \in A_i$ is also in $ A_{i+1} $, and by Lemma \ref{lem: clique union of intersection}, $C_{i+1} = A_i \cup A_{i+1}  =A_{i+1}\subseteq C_{i+2}$, contradicting that $C_{i+1}$ is maximal.
\end{proof}

We can now prove Lemma \ref{lem: adjacent iff rmap}, establishing the result when $R$ is unconfounded.
\begin{proof}[Proof of Lemma \ref{lem: adjacent iff rmap}]	
Let $(C_1,\dots,C_m)$ by the CMCJT for  $$R'=R\cap  \left[ N^* \left(R\setminus [R(0)\times N] \right) \right]^2,$$  and take $A_i=C_i \cap C_{i+1}$ with $A_0=\{0\}$ and $A_{m}=\{n^*\}$.
By Lemma \ref{lem: separators intersection}, $\mathcal{A}^\rho=\{A_0,\dots,A_m\}$.
	
First, we show that any selection from the $\rho$-separators contains an $R$-MAP.
Let $J$ be a selection from the $\rho$-separators. We construct the $R$-MAP recursively.
For all $i\in J$, let $F(i)=\min\{k:i \in A_k\}$ and $L(i) = \max\{k:i \in A_k\}$. Observe $0 \in J$. Take $i_0=0$, noting $F(0)=L(0)=0$, and let $J_0=J \setminus \{0\}$ be the remaining indexes. We recursively identify $i_l$ from $J_{l-1}=J\setminus \{i_0,\dots, i_{l-1}\}$ as long as $J_{l-1}\neq \emptyset$. To this end, we first need to find the lowest index of minimal separator sets, that does not intersect $\{i_0,\dots,i_{l-1}\}$.
While $J_{l-1} \neq \emptyset$, let $k_l$ be the lowest index for which $J_{l-1} \cap A_{k_l} \neq \emptyset$. 
Note that $k_l$ exists since $J \setminus J_{l-1}$ does not intersect some $A \in \mathcal{A}^\rho$ by definition; $k_l > F(i_{l-1})$ and, when $l>1$, $k_l>L(i_{l-2})+1$ since otherwise $J$ can be made smaller; and $k_l \leq L(i_{l-1})+1$ since otherwise $J \cap A_{L(i_{l-1})+1} = \emptyset$. Moreover, $|J_{l-1} \cap A_{k_l}|=1$.
To see this, suppose not, and there are $j_1,j_2 \in J_{l-1} \cap A_{k_l}$.
Then there is $\kappa_i$ so that $j_i \in A_{k}$ for $k \in [k_l,\kappa_i]$ by Lemma \ref{lem: clique union of intersection} and $(C_1,\dots,C_m)$ is the CMCJT.
If $\kappa_1 \geq \kappa_2$ ($\kappa_2 \geq \kappa_1$), then $J \setminus\{j_2\}$ ($J \setminus\{j_1\}$) still intersects every separator, contradicting minimality of $J$.
	
Define $i_l \in J_{l-1} \cap A_{k_l}$ and $J_l = J_{l-1} \setminus \{i_l\}$. 
By above, $i_{l-1} \in A_{k_l-1}$, so  $i_{l-1} R i_l$ and   $F(i_l)>L(i_{l-2})+1$ implies that $i_{l'} \notr i_l$ for all $l'<l-1$ by Lemma \ref{lem: fun links and cliques}. Moreover, minimality of $J$ requires that $L(i_l)>L(i_{l-1})$. Repeating until $l=|J|-1$, conclude that $(i_0,\dots,i_{|J|-1})$ is an $R$-MAP from $0$ to $n^*$.

Now, we show that the variables in an $R$-MAP are a selection from the $\rho$-separators. Pick any $R$-MAP $(i_0=0,\dots,i_m=n^*)$ and let $J=\{i_0,\dots,i_m\}$.
By Lemma \ref{lem: I contains AP iff block},  $J\cap A \neq \emptyset $ for any separator $A$.
Clearly, $\{i_0 \}= J \cap A_0$ and $\{i_m \}= J \cap A_m$, so $J \setminus \{i_0\}\cap A_0 = \emptyset$ and $J \setminus \{i_m\}\cap A_m = \emptyset$.
By Lemmas \ref{lem: clique union of intersection} and \ref{lem: fun links and cliques}, for each $l \in (0,m)$, there exists $k_l$ so that  $i_l \in A_{k_l}$ and $i_{l+1} \in A_{k_l+1} \setminus A_{k_l}$. Since $i_{l'} \notr i_{l+1}$ for all $l'  \neq l$ by definition, $i_{l'} \notin A_{k_{l}}$ for all $l'  \neq l$ by Lemma \ref{lem: clique union of intersection}. Therefore, $J \setminus \{i_{l}\} \cap A_{k_{l}} = \emptyset$ for all $l\in (0,m)$, and so no proper subset of $J$ intersects every $A \in \mathcal{A}$.
	
(ii) clearly implies (iii). 

[(i) $\implies$ (ii)] Let $(i_0,\dots,i_m)$ be an $R$-MAP from $0$ to $n^*$. By above, $I=\{i_0,\dots,i_m\}$ is a selection from the separators.
Fix $j<m$ and consider $q$ so that 	 $X_{i_j} \perp_q X_{i_{j+1}}$ and $X_I \perp_q X_{I^c}$.
Note that\[q(x_{i_k}|x_{R(i_k)})=
	q(x_{i_k}|x_{R(i_k)\cap I},x_{R(i_k)\cap I^c})=
	q(x_{i_k}|x_{R(i_k)\cap I})=q(x_{i_k}|x_{i_{k-1}})
	\]
for $k>0$.
Therefore, $$q_R(x_{n^*}|a)=\sum_y q(y_{i_1}|a)\prod_{k=1}^{m-2} q(y_{i_{k+1}}|y_{i_{k}})q(x_{n^*}|y_{i_{m-1}}),$$ which does not vary with $a$ if $q(y_{i_{j+1}}|y_{i_j})=q(y_{i_{j+1}})$ for all $y$. 
	
	%$\marg_{i_l} q_R(\cdot|a)$ FOSD $\marg_{i_l} q_R(\cdot|b)$ for all $l$, provided that $\{i_l,i_{l+1}\}\neq \{i,k\}$ for all $l$.By Lemma \ref{lem: adjacent iff rmap}, $\{i_l,i_{l+1}\}\neq \{i,j\}$ for all $l$ if and only if $i$ is not adjacent to $j$. For $l=1$, this follows per hypothesis.
[(iii) $\implies$ (i)] Suppose that  $I$ is a selection from the separators. By the above, $I$ corresponds to an $R$-MAP $(i_0,\dots,i_m)$, and   for any $j$,	 $X_{i_j} \perp_q X_{i_{j+1}}$ and $X_I \perp_q X_{I^c}$ imply that $\rho(a,q)=\rho(b,q)$. We show that $j'$ is $\rho$-adjacent to $k'$ if and only if $\{j',k'\}= \{i_l,i_{l-1}\}$ for some $l$. Pick any distinct $j',k' \in I$.
Suppose first that  $j'=i_l$ and $k'=i_j$ for $j<l-1$. We show that $\rho(a,q)>\rho(b,q)$ for any $q$ with $X_{i_j} \perp_q X_{i_{l}}$, $X_I \perp_q X_{I^c}$, and $\{i',j'\}$-MLRP for all $\{i',j'\} \subset I$ with $\{i',j'\} \neq \{i_j,i_l\},\{0,i_l\}$.
First, $\{0,i_1\}$-MLRP implies that $\marg_{i_1} q(\cdot |a)$ FOSD $\marg_{i_1} q(\cdot |b)$.
For $l'\geq 1$, assume $\marg_{i_{l'}} q_R(\cdot|a)$ FOSD $\marg_{i_{l'}} q_R(\cdot|b)$.
Since $q$ has $\{i_{l'},i_{l'+1}\}$-MLRP,  $\marg_{i_{l'+1}}q(\cdot|x_{i_{l'}})$ FOSD $\marg_{i_{l'+1}} q(\cdot|x'_{i_{l'}})$ when $x>x'$. Therefore, for any event $E=(-\infty,k) \cap X_{i_{l'+1}}$, the function $f=x \mapsto q(E|x_{i_{l'}})$ is decreasing.
Then,
$$q_R(E_{i_{l'+1}}|a)=\sum_{x\in \mathcal{X}_{i_{l'}}} q_R(x_{i_{l'}}|a)f(x) \leq \sum_{x\in \mathcal{X}_{i_{l'}}} q_R(x_{i_{l'}}|b)f(x)=q_R(E_{i_{l'+1}}|b),$$
establishing that $\marg_{i_{l'+1}} q_R(\cdot|a)$ FOSD $\marg_{i_{l'+1}} q_R(\cdot|b)$.
 Inductively extend to $i_m$, and conclude $\rho(a,q)>\rho(b,q)$. Second,  if $j'=i_l$ and $k'=i_{l+1}$, then $X_{j'} \perp_q X_{k'}$ and $X_I \perp X_{I^c}$ implies $\rho(a,q)=\rho(b,q)$ for all $a,b\in S_q$ by the above proof that (i) implies (ii).
 Therefore, $j' \in I$ is $\rho$-adjacent to $k' \in I \setminus\{j'\}$ if and only if $\{j',k'\}$ equals $\{i_{l},i_{l+1}\}$ for some $l$.

It remains to show that such a $q$ exists for any $i,j \in I$.
If $0 \in \{i,j\}$, then relabel so that $i=k=0$; otherwise, relabel so that $k=j$. 
We construct a pseudo-binary dataset $q$ that factorizes as \[
	q(x)=q(x_{i}|x_0)q(x_0)q(x_{j})\prod_{l \in I \setminus \{i,k,0\}} q(x_{l}|x_0,x_{j})  q(x_{I^c})
	\]
where  $q(\bar{E}_{i}|a)>q(\bar{E}_{i}|b)$ [these are $1$ and $0$, respectively, if $i=0$] and for all $l$
$q(\bar{E}_{l}|x_0,x_{j})=q(\bar{E}_{l}|Y)$, for $Y=1$ if and only if $x_0,x_{j} \in \bar{E}_0 \times \bar{E}_{j}$ and otherwise $Y=0$, and $q(\bar{E}_{l}|Y=1)>q(\bar{E}_{l}|Y=0)$.
Clearly $X_{i} \perp_q X_{j}$.
Let $\Delta_l =q(\bar{E}_l|Y=1)-q(\bar{E}_l|Y=0)>0$.

 We claim $\{l,l'\}$-MLRP holds for $\{l,l'\}\neq \{i,j\}, \{k,0\}$.
First, observe that for any $l'\in I \setminus  \{0,i,j\}$, $$q(\underline{E}_{l'}|Y=1)\frac{q(Y=1)}{q(\underline{E}_{l'})}=q(Y=1|\underline{E}_{l'})<q(Y=1|\bar{E}_{l'})=q(\bar{E}_{l'}|Y=1)\frac{q(Y=1)}{q(\bar{E}_{l'})}$$ 
since $q(\bar{E}_{l'}|Y=1)>q(\bar{E}_{l'})$ and $q(\underline{E}_{l'}|Y=1)=1-q(\bar{E}_{l'}|Y=1)<1-q(\bar{E}_{l'})=q(\underline{E}_{l'})$.
Also, for any $l \in \{0,j\}$, $q(Y=1|x_{l})=q(\bar{E}_m)\mathbb{I}_{\bar{E}_{l}}(x_{l})$ for $m \in \{0,j\} \setminus \{l\}$, and  when $l=i$, $q(Y=1|x_{l})=q(\bar{E}_j)q(\bar{E}_{0}|x_{l})$.
In all cases $q(Y=1|x_l)$ is increasing in $x$.
Now, for $l \in I$ and $l' \in I \setminus  \{0,i,j,l\}$, we have $$q(\bar{E}_{l'}|x_{l})=q(Y=1|x_{l})\Delta_{l'}+q(\bar{E}_{l'}|Y=0).$$ Therefore, $q(\bar{E}_{l'}|x_{l})$ weakly increases in $x$, strictly for some $x$.
Similarly, when $l \in \{0,i,j\}$ and  $l' \in   \{0,i,j\} \setminus \{l\}$, either $\{l,l'\}=\{i,j\}$ or $\{l,l'\}=\{0,k\}$  or $\{l,l'\}=\{i,0\}$ and $i \neq 0$. By construction in this case, $q$ has $\{0,i\}$-MLRP. Conclude that $q$ satisfies the conditions.
% By construction, $X_{I} \perp_q X_{I^c}$ and $X_{i_j} \perp_q X_{i_k}$.
% Nevertheless,   $q(\bar{E}_{i_{l}}|\bar{E}_{l-1})>q(\bar{E}_{i_{l}}|\underline{E}_{i_{l-1}})$ for $l=k-1$ and $l=k$, so $q_{R^*}(\bar{E}_{n^*}|a)>q_{R^*}(\bar{E}_{n^*}|b)$, so $\rho(a,q)>\rho(b,q)$.
% We construct a pseudo-binary dataset $q$ that factorizes as \[
% 	q(x)=q(x_0)q(x_{i_k})\prod_{l \neq k-1,k-2} q(x_{i_{l+1}}|x_{i_{l}})q(x_{i_{k-1}}|x_{i_{k-2}},x_{i_k})  q(x_{I^c})
% 	\]
% and where
% $$q(\bar{E}_{i_{k-1}}|x_{\{i_k,i_{k-2}\}} )= \left\{
% 	\begin{array}{ll}
% 	\frac34 & \text{if }x_{\{i_k,i_{k-2}\}} \in \bar{E}_{i_k}\times \bar{E}_{i_{k-2}}\\
% 	\frac14 & \text{otherwise}
% 	\end{array} \right.	$$	
% and 
% $q(\bar{E}_{i_{l}}|\bar{E}_{i_{l-1}})>q(\bar{E}_{i_{l}}|\underline{E}_{i_{l-1}})$ for $l \neq k,k-1$.
% By construction, $X_{I} \perp_q X_{I^c}$ and $X_{i_j} \perp_q X_{i_k}$.
% Nevertheless,   $q(\bar{E}_{i_{l}}|\bar{E}_{l-1})>q(\bar{E}_{i_{l}}|\underline{E}_{i_{l-1}})$ for $l=k-1$ and $l=k$, so $q_{R^*}(\bar{E}_{n^*}|a)>q_{R^*}(\bar{E}_{n^*}|b)$, so $\rho(a,q)>\rho(b,q)$. Conclude that $i_j$ and $i_k$ are not adjacent.
\end{proof}

It still remains to show necessity for Theorem \ref{thm: RMAP characterize DAG} when $C^*\setminus\{0\}\neq \emptyset$.  Let $N^*$ be the variables in some $R$-MAP from $0$ to $n^*$, i.e. $N^*=  \bigcup_{A \in \mathcal{A}^\rho} A$ and $\hat{N}$ the remaining variables. 
\begin{lemma}\label{lem: not N*}
Let $i^* \in C^* \setminus \{0\}$. If  $(i_0=0,\dots, i_k=n^*)$ and $(j_0=i^*,\dots,j_{k'}=i_k)$ are $R$-MAPs with $\{j_1,\dots,j_{k'-1}\} \subseteq \hat{N} \setminus C^*$ and $k'>1$, then $i_{k-1} R j_{l}$ for all $l \in \{1,\dots,k'-1\}$ and $i^* R i_{l'}$ for all $l' \in \{1,\dots,k-1\}$.  
\end{lemma}
	
\begin{proof}
Note that $i_{k-1} \tilde{R} j_{k'-1}$ as $R$ is perfect, so suppose for contradiction that $j_{k'-1} R i_{k-1}$.
If $k=1$, then $i_{k-1} =0$. Thus, $j_{k'-1} \in C^*$, a contradiction.
If $k>1$, then there exists some $l<k-1$ such that $i_l R j_{k'-1}$ as otherwise $j_{k'-1} \in C^*$. But then $(i_0,\dots, i_l,j_{k'-1}, i_{k-1},i_k)$ is an $R$-MAP, which contradicts that $j_{k'-1}\not\in N^*$. 	
	
Suppose that (*) $i_{k-1}R j_{l'}$ for $l'\in \{l,\dots, k\}$ and, for contradiction, that $i_{k-1}\notr j_{l-1}$.
By perfection, $j_{l-1} R i_{k-1}$. Again, $k-1=0$ implies that $j_{l-1} \in C^*$, a contradiction, so there exists some $l'<k-1$ such that $i_{l'} R j_{l-1}$.
But then $(i_0,\dots, i_{l'},j_{l-1}, \dots, j_{k'-1}, i_k,\dots,i_M)$ is an $R$-MAP, contradicting that $j_{l-1}\notin N^*$. Induction establishes that $i_{k-1} R j_l$ for $l=1,\dots,k'-1$. 

Finally, note that $i^* \tilde{R}i_{k-1}$ by perfection and $i^* R i_{k-1}$  as $R$ is acyclic. The same arguments inductively establish that $i^* R i_{l'}$ for all $l' \in \{1,\dots,k-1\}$.
\end{proof}

\begin{defn}
The covariate $i^* \notin N^*$ is a \emph{revealed confounder} if  
$$X_{ N\setminus \{i^*,0\} } \perp_q X_{  0 }$$
does not imply that $\rho(a,q)=\rho(b,q)$ for all $a,b \in S_q$ and all $q$.
Let $C^*(\rho)$ be the set of all revealed confounders.
\end{defn}

\begin{lemma}
$i^*$ is a $R$-confounder if and only if $i^* \in C^*(\rho)$.
\end{lemma}	

\begin{proof}
Suppose that $ i^* $ is an $R$-confounder. This implies that there is an $R$-MAP $(i_0=i^*,\dots, i_m=n^*)$ that does not go through $0$ or any other $j^*$ such that $j^* R 0$.
Let $I=\{i_0,\dots,i_m,0\}$ and construct a pseudo-binary dataset $q$ that factorizes according to
\[q(x) =  q(x_{i_0}\mid x_{\{i_{1},0\}}) q(x_{0}) q(x_{i_1})\prod_{k'> 1} q(x_{i_{k'}}\mid x_{i_{k'-1}}) \prod_{j\not\in I} q(x_j)\]
where   $q(\bar{E}_{i_1}) \geq \frac{5}{6}$,
$q(\bar{E}_{i_{k'}}\mid \bar{E}_{i_{k'-1}})> q(\bar{E}_{i_{k'}}\mid \underline{E}_{i_{k'-1}})$ for $k'\geq 2$, and
$$q(\bar{E}_{i_0}\mid x_{\{i_{1},0\}})= \begin{cases}
\frac34 \quad &\text{ if } x_{\{i_1,0\}}\in \bar{E}_{i_1}\times \bar{E}_{0},\\
\frac14 & \text{ otherwise}.
\end{cases}$$
By construction, $X_0 \perp_{q} X_{N \setminus \{0,i^*\}}$,
$q(\bar{E}_{i_0}) = \frac14+\frac12 q(\bar{E}_{i_1})q(a) < \frac34$, 
$$   q(\bar{E}_{i_1}\mid \bar{E}_{i_0},a)=\frac{3}{ q(\bar{E}_{i_1})^{-1}+2}   > \frac{1}{ q(\bar{E}_{i_1})^{-1}+2} = q(\bar{E}_{i_1} \mid \underline{E}_{i_0},a),$$ 
and $q(\bar{E}_{i_1} \mid E_{i_0}, b) =q(\bar{E}_{i_1})$.
Hence,  \[
q_{R^*}(\bar{E}_{i_1}\mid a)< \frac34 1 + \frac14 \frac{1}{ 3}=\frac{5}{6}
\leq q(\bar{E}_{i_1})= q(\bar{E}_{i_1} \mid b) .\] 
%\[q_{R^*}(\bar{E}_{i_1}\mid a)
%%=\frac{3+ 2q(\bar{E}_{i_1})q(a)}{ 2q(\bar{E}_{i_1})^{-1}+4}
%\leq q(\bar{E}_{i_1} \mid b) =q(\bar{E}_{i_1})\]
Combined with the above, we have $q_{R^*}(\bar{E}_{n^*}\mid b) > q_{R^*}(\bar{E}_{n^*}\mid a)$.
Thus, $\rho(b,q)>\rho(a,q)$ and $i^*\in C^*(\rho)$.

Now suppose that $j^*\in \hat{N}$ is not an $R$-confounder. 
WLOG, assume $R= R \cap N^*(R)^2$ so the CMCJT for $R$ given by $(C_1,\dots, C_m)$ is such that $C^*\subset C_1$ and $(C_1,\dots,C_m)$ satisfies the running intersection property with $C_k \cap \bar{C}_{k-1} \subset C_{k-1}$ when  $\bar{C}_{k}=\cup_{k'=1}^{k} C_{k'}$ for every $k$.
%WLOG, assume $R= R \cap N^*(R)^2$ so there exists a CMCJT for $R$, $(C_1,\dots, C_m)$ such that $C^*\subset C_1$ and $(C_1,\dots,C_m)$ satisfies the running intersection property with $C_k \cap \bar{C}_{k-1} \subset C_{k-1}$ when  $\bar{C}_{k}=\cup_{k'=1}^{k} C_{k'}$ for every $k$.

Consider any $q$ that satisfies $X_{N\setminus \{0,j^*\}} \perp_q X_0$.
By Lemma \ref{lem: JT rearrange} we have 
\begin{align}
q_{R^*}(x_{\bar{C}_{k}})&= q(x_0) q(x_{C^*\setminus \{0\}}) q(x_{C_1\setminus C^*}\mid x_{C^*}) \prod_{k'=2}^{k} q(x_{C_{k'}\setminus C_{k'-1}}\mid x_{C_{k'}\cap C_{k'-1}}) \nonumber\\
&= q(x_{C_k\setminus C_{k-1}}\mid x_{C_{k}\cap C_{k-1}}) q_{R^*}(x_{\bar{C}_{k-1}}). \label{eq: JT formula k}
\end{align}
We use this to show  that $q_{R^*}(C_k)=q(C_k)$ for every $k$.
For $k=1$, this follows directly from the assumptions on $q$ and $j^*\not \in C^*$: $q_{R^*}(C_1)=q(x_0)q(x_{C^*\setminus\{0\}})q(x_{C_1 \setminus C^*}|x_{C^*})=q(x_0)q(x_{C^*\setminus \{0\}}\mid x_0)q(x_{C_1 \setminus C^*}|x_{C^*})=q(x_{C_1})$.
Assume (IH) that $q_{R^*}(x_{C_{k-1}})=q_{R^*}(x_{C_{k-1}})$. Then,
\begin{align*}
q_{R^*}(x_{C_k}) &= \sum_{y\in \mathcal{X}_{C_{k-1}\setminus C_{k}}}q(x_{C_k\setminus C_{k-1}}\mid x_{C_{k}\cap C_{k-1}}) q_{R^*}(x_{C_{k-1}\cap C_k},y_{C_{k-1}\setminus C_k})\\
&= \sum_{y\in \mathcal{X}_{C_{k-1}\setminus C_k}}q(x_{C_k\setminus C_{k-1}}\mid x_{C_{k}\cap C_{k-1}}) q(x_{C_{k-1}\cap C_k},y_{C_{k-1}\setminus C_k}) = q(x_{C_k}),
\end{align*}
where the second equality follows from (IH). 

Let $k^*$ be the last index for which $0 \in C_{k^*}$.
If $n^* \in C_{k^*}$, then $q_{R^*}(x_{n^*}|a)=q_{R^*}(x_{n^*}|b)$ for all $a,b \in S_q$.
If $j^* \notin C_{k^*} \cap C_{k^*+1}$, then $q(x_{C_{k^*} \cap C_{k^*+1}}|a)=q(x_{C_{k^*} \cap C_{k^*+1}}|b)$ since   $X_{C_{k^*} \cap C_{k^*+1}} \perp_q X_0$.
Then, $q_{R^*}(x_{n^*}|a)=q_{R^*}(x_{n^*}|b)$ for all $a,b \in S_q$ follows from Eq (\ref{eq: JT formula k}).
Therefore, $j^*\not \in C^*(\rho)$ follows if we show $j^* \in C_{k^*} \cap C_{k^*+1}$ or $n^* \in C_{k^*}$.
Suppose not, so $n^* \notin C_{k^*}$ and $j^* \in C_{k^*} \cap C_{k^*+1}$.
By Lemma \ref{lem: fun links and cliques}, $0 R j^*$. For any $l \in C_{k^*+2} \setminus C_{k^*+1}$, $0 R j^* R l$, $0 \notr l$, and there is an $R$-AP from $l$ to $n^*$.
But this implies that  there exists an $R$-MAP $(0,j^*,\dots,n^*)$, contradicting that $j^* \notin N^*$.
\end{proof}
	
\begin{defn} \label{def: revealed confounding path}
There is a \emph{revealed confounding path} from  $i^* \in C^*(\rho) $  to $i_k$  in the $R$-MAP $(i_0=0,\dots,i_M=n^*)$ if
\begin{align*}
& X_j \perp_q X_{N\setminus j} \quad \forall j \in N^* \cup C^*(\rho) \setminus \{i^*,i_0,\dots,i_M\}\\
\&\  & X_{i_k}\perp_q X_{i_{k-1}} 
\end{align*}
do not imply that $\rho(a,q) = \rho(b,q)$.
\end{defn}
	
First condition implies that the only $R$-confounder that matters is $i^*$, and the only $R$-MAP from $0$ to $n^*$ that matters is $(i_0,\dots, i_M)$. The remaining condition ensures that in the absence of a path from $i^*$ to $i_k$, the DM will express indifference.	

\begin{lemma} 
\label{lem: confounding path}	
For $R$-MAP $(i_0=0,\dots,i_M=n^*)$, there is a revealed confounding path from $i^* \in \hat{N}$ to $i_k$ $\iff$ there exists an $R$-MAP from $i^*$ to $i_k$ that does not intersect $N^* \cup C^*$.
\end{lemma}

\begin{proof}
Suppose that $i^*$ is an $R$-confounder and there exists an $R$-MAP from $i^*$ to $i_k$, $(j_0=i^*,j_1,\dots,j_m=i_k)$. that doesn't intersect $N^*$. 
	
Let $C = \{i_0,\dots,i_M,j_0,\dots,j_m\}$. Pick a pseudo-binary dataset $q$ that factorizes according to 
\[q(x)=q(x_{\{i_0,j_0\}})\prod_{k'\neq k}  q(x_{i_{k'}}|x_{i_{k'-1}})\prod_{k'=1}^{m-1} q(x_{j_{k'}}|x_{j_{k'-1}}) q(x_{i_{k}}|x_{i_{k-1},j_{m-1}})
\prod_{j' \notin C} q(x_{j'})\]
where
\begin{enumerate}[(i)]
	\item $q(\bar{E}_{i_{k'}}|\bar{E}_{i_{{k'}-1}})>q(\bar{E}_{i_{k'}}|\underline{E}_{i_{{k'}-1}})$ for $k' \neq  k$;
	\item $q(\bar{E}_{j_{k'}}|\bar{E}_{j_{{k'}-1}})>q(\bar{E}_{j_{k'}}|\underline{E}_{j_{{k'}-1}})$ for $k'<m$; and
	\item for appropriate $z,z',\epsilon\in (0,1)$, 
	\[\begin{array}{c|c||c|c}
X_{\{i_0,j_0\}}& q(.)  &X_{\{i_{k-1},j_{m-1}\}}&  q(\bar{E}_{i_k}|\cdot)\\
\hline
\bar{E}_{i_0}\times\bar{E}_{j_0}&z\frac14 & \bar{E}_{i_{k-1}}\times\bar{E}_{j_{m-1}} & \frac{q\left(\bar{E}_{j_{m-1}}|\underline{E}_{i_{k-1}}\right)}{q\left(\bar{E}_{j_{m-1}}|\bar{E}_{i_{k-1}} \right)}z'+\epsilon\\
\underline{E}_{i_{0}}\times\bar{E}_{j_0}& (1-z) \frac34&   \underline{E}_{i_{k-1}}\times\bar{E}_{j_{m-1}}&z'+\epsilon\\
\bar{E}_{i_0}\times\underline{E}_{j_{0}}&z \frac34 &  \bar{E}_{i_{k-1}}\times\underline{E}_{j_{m-1}}&\epsilon \\
\underline{E}_{i_{0}}\times \underline{E}_{j_{0}} &(1-z)\frac14&   \underline{E}_{i_{k-1}}\times \underline{E}_{j_{m-1}} &\epsilon
\end{array}\]
so $X_{i_{k-1}} \perp_q X_{i_{k}}$ and $q(\bar{E}_{i_k}|x_{j_{m-1}},\bar{E}_{i_{k-1}})>q(\bar{E}_{i_k}|x_{j_{m-1}},\underline{E}_{i_{k-1}}) $ for all $x$.%
\footnote{The quantity $\frac{q(\bar{E}_{j_{m-1}}|\underline{E}_{i_{k-1}})}{q(\bar{E}_{j_{m-1}}|\underline{E}_{i_{k-1}})}$ is pinned down by (1), (2), and the first part of (3), so $z',\epsilon \in (0,1)$ exist satisfying the conditions so that $q$ is a well-defined conditional probability.}
\end{enumerate}
Then, $q_{R^*}(x_C)$ equals
\begin{align*}
&q(x_{i_0})q(x_{j_0})\prod_{k'=1}^{m-1} q(x_{j_{k'}}|x_{R(j_{k'})})\prod_{k'=1}^M q(x_{i_{k'}}|x_{R(i_{k'})})\\
=&q(x_{i_0})q(x_{j_0})\prod_{k'=1}^{m-1}   q(x_{j_{k'}}|x_{j_{k'-1}},x_{i_{k-1}}) \prod_{k'=1}^{k-1}  q(x_{i_{k'}}|x_{i_{k'-1}},x_{j_0})\prod_{k'=k+1}^M q(x_{i_{k'}}|x_{i_{k'-1}}) q(x_{i_k}|x_{i_{k-1}},x_{j_{m-1}})\\
=&q(x_{i_0})q(x_{j_0})\prod_{k'=1}^{m-1} q(x_{j_{k'}}|x_{j_{k'-1}})\prod_{k'=1}^{k-1} q(x_{i_{k'}}|x_{i_{k'-1}})\prod_{k'=k+1}^M q(x_{i_{k'}}|x_{i_{k'-1}}) q(x_{i_k}|x_{i_{k-1}},x_{j_{m-1}})
\end{align*}
where the second equality follows from the construction of $q$.

A simple induction argument establishes that $q(\bar{E}_{i_{k'}}|\bar{E}_{i_{k''}})>q(\bar{E}_{i_{k'}}|\underline{E}_{i_{k''}})$ whenever $k>k'>k''$ or $k'>k''\geq k$, and $q(\bar{E}_{j_{m-1}}|\bar{E}_{j_1})>q(\bar{E}_{j_{m-1}}|\underline{E}_{j_1})$.
Then, $q_{R^*}(\bar{E}_{i_k}|c)$ equals
$$\sum_x q_{R^*}(x_{j_{m-1}}) \left\{q(\bar{E}_{i_{k-1}}|c) \left[q(\bar{E}_{i_k}|\bar{E}_{i_{k-1}},x_{j_{m-1}})-q(\bar{E}_{i_k}|\underline{E}_{i_{k-1}},x_{j_{m-1}}) \right]+q(\bar{E}_{i_k}|\underline{E}_{i_{k-1}},x_{j_{m-1}}) \right\}$$
and
$$q_{R^*}(\bar{E}_{n^*}|c)=q_{R^*}(\bar{E}_{i_k}|c)[q(\bar{E}_{n^*}|\bar{E}_{i_k})-q(\bar{E}_{n^*}|\underline{E}_{i_k})]+q(\bar{E}_{n^*}|\underline{E}_{i_k}).$$
Combining implies $q_{R^*}(\bar{E}_{n^*}|a)>q_{R^*}(\bar{E}_{n^*}|b)$.

Now, suppose that there is no path in $\hat{N}$ from $i^*$ to $i_k$.
Observe that if  there is a path in $\hat{N}$ from $i^*$ to $i_{k+1}$, there is also a path to $i_{k}$ by Lemma \ref{lem: not N*}.
Consider any $q$ satisfying the conditions.
Define $R'$ that removes all edges involving a node in $N^* \cup C^*(\rho) \setminus \{i^*,i_0,\dots,i_M\}$ but otherwise agree with $R$.
Take $R''=R' \cap N^*(R')^2$, and arguments as in Lemma \ref{lem: I contains AP iff block} show that $q_{R^*}(x_{n^*}|c)=q_{R'{}^*}(x_{n^*}|c)=q_{R''{}^*}(x_{n^*}|c)$.
Lemma \ref{lem: fun links and cliques} shows that $N^*(R')$ equals the variables that are part of some $R'$-MAP from $i^*$ or $0$ to $n^*$.
Let $l \in R''(i_k)$.
Either $l$ is in an $R''$-MAP from $0$ to $n^*$, in which case $l=i_{k-1}$, or $l$ is in an  $R''$-MAP from $i^*$ to $n^*$, in which case $l=i_{k-1}$ or $l$ is part of a path in $\hat{N}$ from $i^*$ to $i_k$,  a contadiction. 
%So if $l R'' i_k$, then by Lemmas \ref{lem: fun links and cliques} and \ref{lem:R not E implies v}, either $l = i_{k-1}$ or  $l$ is on a path from $i^*$ to $n^*$ and $l R'' i_{k+1}$.
%But if there is a path from $i^*$ to $i_{k+1}$, there is also a path to $i_{k}$ by Lemma \ref{lem: not N*}, a contradiction, so .
Therefore, $R''(i_{k})=\{i_{k-1}\}$ and $q(x_{i_{k}}|x_{R''(i_{k})})=q(x_{i_{k}}|x_{i_{k-1}})=q(x_{i_k})$ since $X_{i_k} \perp_q X_{i_{k-1}}$.
Then, the DM is indifferent, and there is no revealed confounding path from $i^*$ to $i_k$.
\end{proof}

\begin{defn}
If $(i_0=0,\dots,i_m=n^*)$ is an $R$-MAP and there is a revealed confounding path from $i^* \in C^*(\rho)$  to $i_k$, then $B \subset \{i^* \} \cup [\hat{N} \setminus C^*(\rho)]$ is a $(i^*,i_k)$-separator if $\rho(a,q)=\rho(b,q)$ whenever
\begin{align*}
&X_j \perp_q X_{\{j\}^c} \quad \forall j \in N^* \cup C^*(\rho) \setminus \{i^*,i_0,\dots,i_m\}\\
&X_{i_k}\perp_q X_{i_{k-1}} \\
\&\ & X_B \perp_q X_{B^c}.
\end{align*}
\end{defn}

The first two conditions are the same as before, and the third condition is analogous to separation.

\begin{rem}
$\{i^*\}$ is a  $(i^*,i_k)$-separator but $\{i_k\}$ is not.
\end{rem}
	
\begin{lemma}
If $B \subset \{i^* \} \cup [\hat{N} \setminus C^*(\rho)]$, then $B$ is a $(i^*,i_k)$-separator $\iff$ $B$ intersects all $R$-paths from $i^*$ to $i_k$ in $\hat{N}$.
\end{lemma}
	
\begin{proof}
If $B$ does not intersect a path from $i^*$ to $i_k$ contained in $[(N^* \cup C^*)\setminus \{i^*,i_k\}]^c$, then we can find a minimal $R$-path  $(j_0=i^*,j_1,\dots,j_m=i_k)$ with  $j_l \notin B\cup N^* \cup C^*$ for all $l\neq 0,m$. 
Then, the dataset $q$ from Lemma \ref{lem: confounding path} constructed for this path leads to $\rho(a,q)\neq \rho(b,q)$.
	
If $B$ intersects all paths from $i^*$ to $i_k$, then consider  $R'$ that drops all edges involving a node in $B$ and a node in $B^c$.
Since $X_{B} \perp_q X_{B^c}$, $q_{R'{^*}}(x_{n^*}|x_0)=q_{R^*}(x_{n^*}|x_0)$ for all $x \in S_q \times \mathcal{X}_{-0}$. 
Applying Lemma \ref{lem: confounding path} to $R'$ and $q$ establishes that $\rho(a,q)=\rho(b,q)$. 
\end{proof}

\begin{defn}\label{def: revealed adjacent off path}
Let $\mathcal{A}^{(i^*,i_k)}$ be the set of minimal $(i^*,i_k)$-seperators, and  say that $J$ is a selection from $\mathcal{A}^{(i^*,i_k)}$ if $J \cap A \neq \emptyset$ for all $A \in \mathcal{A}^{(i^*,i_k)}$ and no proper subset of $J$ has a non-empty intersection with every $A \in \mathcal{A}^{(i^*,i_k)}$.
For any selection $J$ from $\mathcal{A}^{(i^*,i_k)}$,  $i \in J$ is \emph{adjacent} to $j \in J$ if $\rho(a,q)=\rho(b,q)$ for all $a,b \in S_q$ whenever
\begin{align*}
& X_l \perp_q X_{N\setminus l} \quad \forall l \in N \setminus [\{ i_0,\dots,i_m\} \cup J],\\
& X_{i} \perp_q X_{j}|X_{i_{k-1}}, \text{ and}\\
\&\ & X_{i_k}\perp_q X_{i_{k-1}}.
\end{align*}
\end{defn}
 The first two conditions are analogous to the conditions in Lemma \ref{lem: adjacent iff rmap}.(ii), and the third condition ensure that the DM expresses indifference whenever all variables outside the $R$-MAP between $0$ and $n^*$ are independent. 

\begin{lemma}
\label{lem: adjacent off path}
The sequence $(j_0=i^*,\dots,j_{M-1},i_k)$ is an $R$-MAP  from $i^*\in C^*(\rho)$ to $i_k$ and  $J=\{j_0, \dots,j_{M-1}\}$ does not intersect $N^*$ if and only if $J$ is a selection from $\mathcal{A}^{(i^*,i_k)}$ and $j_l$ is adjacent to $j_{l+1}$  for all $l<M-2$.
\end{lemma}

\begin{proof}	
As in Lemma \ref{lem: adjacent iff rmap}, $J$ is a selection from $\mathcal{A}^{(i^*,i_k)}$ if and only if there is an $R$-MAP $(j_0=i^*,\dots,j_M=i_k)$, $\{j_0,\dots,j_{M-1}\}=J$, and $J\cap C^*(\rho)=\{i^*\}$.

Let $N^\dag=\{ i_0,\dots,i_m\} \cup J$.
We show that $j_{k'}$ is adjacent to $j_{{k'}+1}$.
Pick any $q$ satisfying the conditions for $j_{k'}$ and $j_{k'+1}$ to be adjacent.
Then by Lemma \ref{lem: not N*}, $$q(x_{i_l}|x_{R(i_l)})=q(x_{i_l}|x_{R(i_l)\cap N^\dag},x_{R(i_l) \setminus N^\dag})=q(x_{i_l}|x_{R(i_l)\cap N^\dag})=q(x_{i_l}|x_{i_{l-1}},x_{j_0})$$
for $l<k$.
Similarly, $q(x_{i_l}|x_{R(i_l)})=q(x_{i_l}|x_{i_{l-1}})$ for $l>k$. Moreover,
$$q(x_{j_l}|x_{R(j_l)})=q(x_{j_l}|x_{R(j_l)\cap N^\dag},x_{R(j_l) \setminus N^\dag})=q(x_{j_l}|x_{R(j_l)\cap N^\dag})=q(x_{j_l}|x_{j_{l-1}},x_{i_{k-1}})$$
for $l>0$. In particular,
$q(x_{j_{{k'}+1}}|x_{R(j_{{k'}+1})})=q(x_{j_{{k'}+1}}|x_{i_{k-1}})$ by the second condition in Definition \ref{def: revealed adjacent off path}.

Note that for $s=k'+2,...,M  $,
\begin{align*}
 &\sum_{y_{j_{s-1}}} 
q(x_{j_{s}}|y_{j_{s-1}},x_{i_{k-1}})q(y_{j_{ s-1}}|x_{i_{k-1}})=\sum_{y_{j_{ s-1}}} 
q(y_{j_{ s-1}},x_{j_{ s }}|x_{i_{k-1}})\\
=& 
\sum_{y_{j_{ s-1}}} q(y_{j_{ s-1}}|x_{j_{ s }},x_{i_{k-1}})q(x_{j_{ s }}|x_{i_{k-1}})=q(x_{j_{ s }}|x_{i_{k-1}}).
\end{align*}
Since $j_{s }$ is the only variable dependent on $j_{s-1}$  in  $q_{R^*}(\cdot |x_0)$   and $q(x_{j_{{k'}+1}}|x_{R(j_{{k'}+1})})=q(x_{j_{{k'}+1}}|x_{i_{k-1}})$, we can successively apply the above to drop $x_{j_{s-1}}$  for $s=k'+2,...,M $ from the expression for
$q_{R^*}(x_{n^*}|x_0)$.
%$$q(x_{j_{k'+s+1}}|x_{R(j_{k'+s+1)})=q(x_{j_{k'+s+1}}|x_{i_{k-1}})$$
%{\red Combining and using the chain rule for probability, 
%$$\prod_{l=k'+1}^{M-1} q(x_{j_{l}}|x_{R(j_l)})=q_R(x_{j_{k'+1}},\dots,x_{j_{M-1}}|x_{i_{k-1}}).$$  By construction, $q_{R^*}(x_{i_k})=q(x_{i_k})$ and so $q_R(x_{j_{M-1}},x_{i_k})=q (x_{j_{M-1}},x_{i_k})$. Therefore $q_{R^*}(x_{i_k}|x_{i_{k-1}},\dots, x_{i_0},x_{j_0},\dots,x_{j_{k'}})$ equals \sout{so since $i_{k-1}$ blocks all active paths from $0$ to $i_{k}$ that are not shut down,} 
Since $X_{i_k} \perp_q X_{i_{k-1}}$, $$\ q (x_{j_M}|x_{i_{k-1}} ) =q(x_{i_k}|x_{i_{k-1}})=q(x_{i_k}) $$
 %$$\sum_{y \in \mathcal{X}_{j_{M-1}}} q (x_{i_k}|x_{i_{k-1}},y_{j_{M-1}})q(y_{j_{M-1}}|x_{i_{k-1}})=q(x_{i_k}|x_{i_{k-1}})=q(x_{i_k}) $$
%$$q (x_{i_k}|x_{R(i_k)})=q (x_{i_k}|x_{i_{k-1}},x_{j_{M-1}})=q(x_{i_k}|x_{i_{k-1}})=q(x_{i_k}).$$
so the DM is indifferent, and  $j_{k'}$ and $j_{{k'}+1}$ are adjacent.
	
We now show that $j_{l}$ is not adjacent to $j_{k^*+1}$ for any $l<k^*$.
Pick a pseudo-binary dataset $q$ that factorizes as 
\begin{align*}
q(x)=&q(x_{\{i_0,j_0\}})q(j_{k^*+1})\prod_{k'\neq k}  q(x_{i_{k'}}|x_{i_{k'-1}})\prod_{k'\notin\{k^*,k^*+1\}} q(x_{j_{k'}}|x_{j_{k'-1}}) \prod_{j' \notin N^\dag} q(x_{j'})\times\\
&\qquad\times q(j_{k^*}|j_{k^*+1},j_{k^*-1}) q(i_{k}|i_{k-1},j_{M-1})
\end{align*}
and where
$$q(\bar{E}_{i_{k}}|x_{\{i_{k-1},j_{M-1}\}} )= \left\{
\begin{array}{ll}
\frac34 & \text{if }x_{\{i_{k-1},j_{m-1}\}} \in \bar{E}_{i_{k-1}}\times \bar{E}_{j_{M-1}}\\
\frac14 & \text{otherwise}
\end{array} \right.	,$$	
$$q(\bar{E}_{j_{k^*}}|x_{\{j_{k^*-1},j_{k^*+1}\}} )= \left\{
\begin{array}{ll}
\frac34 & \text{if }x_{\{j_{k^*-1},j_{k^*+1}\}} \in \bar{E}_{j_{k^*-1}}\times \bar{E}_{j_{k^*+1}}\\
\frac14 & \text{otherwise}
\end{array} \right.	,$$
$q(\bar{E}_{j_{k'}}|\bar{E}_{j_{k'-1}})>q(\bar{E}_{j_{k'}}|\underline{E}_{j_{k'-1}})$ for $k' \neq k^*,k^*+1$,
and $q(\bar{E}_{i_{k'}}|\bar{E}_{i_{k'-1}})>q(\bar{E}_{i_{k'}}|\underline{E}_{i_{k'-1}})$ for $k' \neq k$.
By construction, $X_l \perp_q X_{k^*+1} |X_{i_{k-1}}$ for all $l<k^*$.
We can also calculate that $q(\bar{E}_{j_{k'}}|\bar{E}_{k'-1})>q(\bar{E}_{j_{k'}}|\underline{E}_{j_{k'-1}})$ for $k'=k^*,k^*+1$, so $q_{R^*}(\bar{E}_{j_m}|\bar{E}_{j_0})>q_{R^*}(\bar{E}_{j_m}|\underline{E}_{j_0})$. As in Lemma \ref{lem: adjacent iff rmap}, this leads to $\rho(a,q)>\rho(b,q)$. Conclude that $j_{k^*+1}$ and $j_l$ are not adjacent for all $l<k^*$.
\end{proof}

\subsection{Proof of Proposition \ref{prop: MAP nongeneric}}
Any  dataset $q$ where $X_I \perp_q X_{I^c}$ can be parametrized by  $$(\alpha,\beta) \in \left\{(x,y)\in \mathbb{R}^{m}_+ \times \mathbb{R}^k_+: \sum x_i= 1,\sum y_i = 1 \right\}=\Delta$$ where $m= \left| \prod_{j\in I} \mathcal{X}_j  \right|
$ and $k=\left| \prod_{j\notin I} \mathcal{X}_j  \right|
$. Enumerating $\prod_{j\in I} \mathcal{X}_j =\{\omega^I_1,\dots,\omega^I_m\}$ and $\prod_{j\notin I} \mathcal{X}_j =\{\omega^c_1,\dots,\omega^c_k\}$, $q(\omega^I_i,\omega^c_j)=\alpha_i \beta_j$. $\Delta$ can be thought of as a positive measure subset of $\mathbb{R}^{(m-1)(k-1)}$.

Pick a DAG $Q$. Then, let $f_Q:\Delta \rightarrow [0,1]$ equal $\sum_x u(x) [q_Q(x_{n^*}|a)-q_Q(x_{n^*}|b)]$ when $q$ corresponds to $\Delta$. $f$ is a sum of ratios of polynomials in $\Delta$, as \[
q_Q(x_{n^*}|c)=\sum_{y \in \mathcal{X}:y_0=0,y_{n^*}=x_{n^*}}\prod \frac{ q(y_{j},y_{Q(j)})}{q(y_{Q(j)})}\]
and $$q(y_{B})=\sum_{j,k:(\omega_j,\omega_k)_{B} = y_{B} } \alpha_j \beta_k$$ for any $B \subset N$. The sum of a ratio of polynomials is a ratio of polynomials.

By Lemma \ref{lem: adjacent iff rmap}, there exists an $(\alpha,\beta)$ so that $f_Q(\alpha,\beta) \neq f_R(\alpha,\beta)$ whenever $(i_0,\dots,i_m)$ is not a $Q$-MAP. 
Therefore, $f_Q - f_R \neq 0$ and by standard results, e.g. Caron and Traynor (2005), the set $\{z \in \Delta: f_Q(z)-f_R(z)=0\}$ has zero measure. Therefore, the set of datasets for which $\sum u(x)[q_Q(x_{n^*}|a)-q_Q(x_{n^*}|b)] \neq \sum u(x)[ q_R(x_{n^*}|a)-q_R(x_{n^*}|b)]$ is open, dense, and has full measure on $\Delta$. \hfill \qedsymbol

\subsection{Necessity for Theorem \ref{thm: proper SCR}} 
Suppose that $\rho$ has an Endogenous SCR $(R,u,n^*)$. The $\rho$ is also represented by the DAG $R^\rho$ that eliminates links between unnecessary variables in $R$, i.e. $R^\rho =R \cap N(R)^2$. Lemma \ref{claim_fundamental} implies that $R^\rho$ corresponds to a CMCJT $(C_1,\dots,C_m)$, and Lemma \ref{lem: clique union of intersection} and \ref{lem: I contains AP iff block} imply that $\mathcal{A}^\rho$ can be labeled $(A_1,\dots,A_{m+1})$ so that $C_i=A_i \cup A_{i+1}$. By Lemma \ref{lem: adjacent iff rmap}, every $j \in A_i \setminus A_{i+1}$ is adjacent to every $k \in A_{i+1}\setminus A_i$. 
Therefore, Axiom \ref{ax: Consistent Revealed Causes} holds.
Since   $$\int u(c)d\rho^S_{R^\rho}(c_{n^*}|a')\in \left[ \min_{x \in \mathcal{X}_{n^*}}  u(x),\max_{x \in \mathcal{X}_{n^*}}  u(x) \right],$$
Axioms \ref{ax: first SCR} and \ref{ax: bounded} hold. 
Axioms    \ref{ax: same pr if first clique}, \ref{ax: luce axiom inference}, and \ref{ax: IA R markov} 
follow from Lemma \ref{lem: JT rearrange} and continuity. \hfill \qedsymbol

\subsection{Sufficiency for Theorem \ref{thm: proper SCR}}
Suppose that $\rho$ satisfies the axioms.
Axiom \ref{ax: Consistent Revealed Causes} implies that an adjacency ordering of $\mathcal{A}^\rho$,  $(A_1,\dots,A_{k})$, exists. Let $\{n^*\}=A_k$. Define $R$ by $j R k$ if and only if there exists $i$ so that either $k \in A_{i+1} \setminus A_i$ and $j \in A_i$ or $j,k \in A_{i+1} \setminus A_i$ and $j<k$. $R$ is acyclic by CRC.(i) since each $j \in A_{i+1}\setminus A_i$ for exactly one $i$, call it $i_j$, and $j R k$ only if $i_j \leq i_k$. If $j R l$ and $k R l$, then $i_l \geq i_j,i_k$, and, after relabeling, $i_j \leq i_k$. If $i_j=i_k$, then $j \tilde{R} k$. Otherwise, $i_j<i_k \leq i_l$ and $j \in A_{i_l-1}$ by definition of $R$.   Then, $j R k$ since $j \in A_{i_k-1}$ by CRC.(ii). Therefore, $R$  is a perfect, unconfounded, nontrivial DAG.

We show that
\begin{equation}
\label{eq: luce ratior rep}
\frac{\rho(a,S)}{\rho(b,S)}=\frac{\exp[\int_{\mathcal{X}_{n^*}} u(c)d\rho^S_{R}(c_{n^*}|a)]}{\exp[\int_{\mathcal{X}_{n^*}} u(c)d\rho^S_{R}(c_{n^*}|b)]}
\end{equation}
for any $a,b \in S$ and any $S\in \mathcal{S}$. If so, then  $\rho$ has  an Endogenous SCR $(R,u,n^*)$  since $\sum_{a\in S} \rho(a,S)=1$.
Pick any $S \in \mathcal{S}$ and any $a,b \in S$. 
By Lemma \ref{lem: JT rearrange}, \[
\rho^S_{R}(x_{N^*(R)})=
\rho^S(x_{A_1})\prod_{i=1}^{k-1} \rho^S \left(x_{A_{i+1}}|x_{A_i} \right).
\]
Let $a' (y)=\rho^S_{R}(y|a )$ and $b' (y)=\rho^S_{R}(y|b )$ for  every $y\in \mathcal{X}_{-0}$.
Since $\{a',b'\}$ is correctly perceived, $\rho(a',\{a',b'\}) /\rho(b',\{a',b'\}) $ has the desired form by Axiom \ref{ax: IA R markov}.
If $a'=b'$, then $\marg_{A^*_1} a= \marg_{A^*_1}b$, so $\rho(a,S)=\rho(b,S)$ by Axiom \ref{ax: same pr if first clique}, and Equation (\ref{eq: luce ratior rep}) holds.

Otherwise, let $S_1=\{a',b'\}$ and recursively define $S_m=S_{m-1} \cup \{ \frac{1}{m} a' + \frac{m-1}{m} b'\}$. Each $S_m$ is correctly perceived by construction, and each has $m+1$ distinct alternatives.
By Axiom \ref{ax: bounded}, there exists $K>0$ so that for any $a'',b'' \in S'' \in \mathcal{S}$, $\frac{\rho(a'',S'')}{\rho(b'',S'')} \leq K$.
For $S_m \cup S =\left\{s_1,\dots,s_{M} \right\}$ and any  $i,j  \in \{1,\dots,M \}$ with $i \neq j$, we have $\rho(s_i,S_m \cup S)\geq K^{-1} \rho(s_j,S_m \cup S)$. Then, \[1 = \sum_{i \neq j} \rho(s_i,S_m \cup S) + \rho(s_j,S_m \cup S) \geq [(M -1) K^{-1} +1]\rho(s_j,S_m \cup S)\]
so $\rho(s_j,S_m \cup S) \leq \frac{K}{M +K-1} $. In particular $\sum_{c\in S} \rho(c,S_m \cup S)\rightarrow 0$ as $m\rightarrow \infty$.

For $p_m= \rho^{S_m\cup S}$ and arbitrary $1<i < |\mathcal{A}^\rho |$ we have $p_m(x_{A_{i+1}}|x_{A_i})$ equals \[\frac{1}{p_m(x_{A_i})}
\left[ \sum_{a''\in S} p_m(a'') p_m(x_{A_i}|a'') a''(x_{A_{i+1}}|x_{A_i})+p_m(S_m)p_m(x_{A_i}|x_0 \in S_m) a'(x_{A_{i+1}}|x_{A_i})
\right]\]
for every $x \in \mathcal{X}_{-0}$ since  $\hat{a}(x_{A_{i+1}}|x_{A_i})=a'(x_{A_{i+1}}|x_{A_i})$ for all $\hat{a} \in S_m$.
This converges to  $\rho^{S_1}(x_{A_{i+1}}|x_{A_i})=a'(x_{A_{i+1}}|x_{A_i})$ because $p_m(a'')\rightarrow 0$ for all $a'' \in S$. Since $i$ was arbitrary, $\rho^{S_m \cup S}(x_{A_{i+1} }|x_{A_i}) \rightarrow \rho^{S_1}(x_{A_{i+1}  }|x_{A_i})$ for every $i$. 

Axiom \ref{ax: same pr if first clique} gives that $\rho(a,S_m \cup S)=\rho(a',S_m \cup S)$ and $\rho(b,S_m \cup S)=\rho(b',S_m \cup S)$. Axiom \ref{ax: luce axiom inference} implies that  $$\frac{\rho(a',S_m \cup S)}{\rho(b',S_m \cup S) }=\frac{\rho(a,S_m \cup S)}{\rho(b,S_m \cup S)}\rightarrow \frac{\rho(a',S_1)}{\rho(b', S_1)}$$ 
and that $$\frac{\rho(a,S_m \cup S)}{\rho(b,S_m \cup S)}=\frac{\rho(a',S_m \cup S)}{\rho(b',S_m \cup S) }\rightarrow\frac{ \rho(a, S)}{\rho(b,S)}.$$ 
Therefore,  $\frac{\rho(a',S_1)}{\rho(b', S_1)}=\frac{ \rho(a, S)}{\rho(b,S)}$ and Equation (\ref{eq: luce ratior rep}) holds for $a,b$. Since $a$, $b$, and $S$ were arbitrary, $\rho$ has an Endogenous SCR $(R,n^*,u)$. \hfill \qedsymbol

\subsection{Proof of Proposition \ref{prop:coarseness}}
Suppose that $\rho_i$ has a perfect SCR $(R_i,u_i,n^*)$ for $i=1,2$ and that $\rho_2$ has a coarser model than $\rho_1$.
 Let $N^*$ be the set of all $i$ that are part of an $R_2$-MAP from $0$ or an $R_2$-confounder to $n^*$.
Define $R_3= R_1 \cap N^*\times N^*$  and $\rho_3$ to have an SCR $(R_3,u,n^*)$.
Pick any $q$, and set $\hat{q}=x \mapsto q(x_{N^*}) \prod_{i \notin N^*} q(x_i)$.
By construction for $i=2,3$, $q_{R_i}(x_{n^*}|do(x_0))= \hat{q}_{R_i}(x_{n^*}|do(x_0))$ so $\arg\max_{S_q} \rho_i(\cdot,q)=\arg\max_{S_q} \rho_i(\cdot,\hat{q})$. Also, $\hat{q}_{R_1}(x_{n^*}|do(x_0))= \hat{q}_{R_3}(x_{n^*}|do(x_0))$ so $\arg\max_{S_q} \rho_1(\cdot,\hat{q})=\arg\max_{S_q} \rho_3(\cdot,\hat{q})$. By hypothesis, $\arg\max_{S_q} \rho_1(\cdot,\hat{q})=\arg\max_{S_q} \rho_2(\cdot,\hat{q})$.
Combining, $\arg\max_{S_q} \rho_2(\cdot,q)=\arg\max_{S_q} \rho_3(\cdot,q)$. Since $q$ was arbitrary, $R_3$ represents $\rho_2$.
%
%Let $(i_0,\dots,i_m)$ be a $R_2$-MAP, $I=\{i_0,\dots,i_m\}$, and
%$S_I = \{S \in \mathcal{S}: X_i \perp_S X_{N\setminus i}\text{ for all }i \in N \setminus I\}$.
%Restricted to menus in $S_I$, $\rho_i$ has an SCR $(R'_i,u_i,n^*)$ where $R'_i = R_i \cap I \times I$, and by construction $\rho_1(\cdot,S)=\rho_2(\cdot,S)$ for all $S\in  S_I$.
%Applying Theorem \ref{thm: RMAP characterize DAG} gives that $(i_0,\dots,i_m)$ is a $R_1$-MAP.
%Similar arguments show that any $R_1$-MAP with covariates contained in $N^*(R_2)$ is also a $R_2$-MAP.
%Letting $R^\dag_2=R_1 \cap [N^*(R_2) \times N^*(R_2)]$, the set of $R^\dag_2$-MAPs coincides with the set of $R_2$-MAPs, so applying Theorem \ref{thm: RMAP characterize DAG} establishes the result.
 
Conversely, let $\rho_i$ have a perfect  SCR $(R_i,u_i,n^*)$ for $i=1,2$, $u_2 = u_1$ and $R_2 = R_1 \cap \left[ N' \times N' \right]$ for some $N' \subset N$.
By Lemmas \ref{claim_fundamental} and \ref{lem: fun links and cliques}, $N^*(R_2)$ equals the indexes that appear an $R_2$-MAP from $0$ or an $R_2$-confounder to $n^*$.
Pick any $q \in \mathcal{D}$ so that $X_i \perp_q X_{N \setminus \{i\}}$ for all $i \notin N^*(R_2)$. Since $N' \supset N^*(R_2)$, we also have $X_i \perp_q X_{N \setminus \{i\}}$ for all $i  \notin  N'$, so for every $i$ and  $x \in \mathcal{X}$,
\[ q\left(x_i|x_{R_1(i)} \right)= q \left( x_i|x_{R_1(i) \cap N'},x_{R_1(i) \setminus N'} \right)= q\left( x_i|x_{R_1(i) \cap N'} \right)=q\left(x_i|x_{R_2(i)} \right). \]
Hence, $q_{R_1}=q_{R_2}$, and so $\rho_1(a,q)\geq \rho_1(b,q)$ for all $b \in S_q$ if and only if $\rho_2(a,q)\geq \rho_2(b,q)$ for all $b \in S_q$.\hfill \qedsymbol

\section{Examples}
\subsection{Violating Regularity}\label{ex: regularity}
If $\rho(a,S)=z$, then 
\[ 
\rho^S(1_{\consequence}|0_{\vari })=0<\rho^S(1_{\consequence}|1_{\vari })
%=\frac{z \frac12 +(1-z) \frac12}{z(1) +(1-z)\frac12}
=\frac{1}{1+z},
\]
and since $a(1_{\vari })>b(1_{\vari})>0$, we have $1>z>\frac12$. 
Then, $\frac12<\rho^S(1_{\consequence}|1_{\vari })<\frac23$, so $\rho^S_R(1_{\consequence}|b)<\frac13$, while $\rho^S_R(1_{\consequence}|a)>\frac12$. 
Hence 
\[
\frac{\rho(b,S)}{\rho(a,S)}<\frac{\exp[\frac13 6+\frac23 0]}{\exp[\frac12 6+\frac12 0]}= \exp [-1]<\frac12
\]
and $\rho(b,S)<\frac13$.
Because profit is independent of size of workforce conditional on own output according to $R_{\vari }$, the consultant is indifferent between two workforce sizes which lead to the same distribution of own output.
Therefore, $\rho(c,S')=\rho(b,S')=\gamma$. Then,
\[ 
\rho^{S'}(1_{\consequence}|1_{\vari })=\frac{(1-2\gamma) \frac12+ \gamma \frac12 + \gamma(0)}{(1-2\gamma)+ 2 \gamma \frac12}=\frac12=
\frac{(1-2\gamma) (0) +\gamma(0) + \gamma\frac12}{(1-2\gamma)(0)+2\gamma\frac12 }=\rho^{S'}(1_{\consequence}|0_{\vari }),
\]
so $\rho(a,S')=\rho(c,S')=\rho(b,S')=\frac13>\rho(b,S)$, violating regularity.

\subsection{Self-confirming choices}
\label{ex: selfconfirming}One can compute that
\[
\rho^S \left(1_{\consequence}|1_{\vari} \right)=\frac{2-2\rho(a,S)}{2-\rho(a,S)}\text{ and }\rho^S \left( 1_{\consequence}|0_{\vari} \right)= q.
\] 
That is, whether the consultant thinks that own output has a positive or negative effect on profit depends on the fraction who take action $a$.
Setting $u(1)=\lambda >u(0)=0$,  
$$
\lambda \frac12 \left( q- \frac{2-2\rho(a,S) }{2-\rho(a,S)}  \right) = \ln \rho(a,S) - \ln (1-\rho(a,S)).$$
For large enough $\lambda$, $\rho(a,S) \approx 0$ and $\rho(a,S) \approx 1$ are both solutions.

\bibliographystyle{plainnat}
\bibliography{subjectivecausality}

\begin{thebibliography}{57}
\providecommand{\natexlab}[1]{#1}
\providecommand{\url}[1]{\texttt{#1}}
\expandafter\ifx\csname urlstyle\endcsname\relax
  \providecommand{\doi}[1]{doi: #1}\else
  \providecommand{\doi}{doi: \begingroup \urlstyle{rm}\Url}\fi

\bibitem[Alexander and Gilboa(2019)]{Alexander}
Yotam Alexander and Itzhak Gilboa.
\newblock Subjective causality.
\newblock September 2019.

\bibitem[Andre et~al.(2022)Andre, Pizzinelli, Roth, and Wohlfart]{andre2021}
Peter Andre, Carlo Pizzinelli, Christopher Roth, and Johannes Wohlfart.
\newblock Subjective models of the macroeconomy: Evidence from experts and a
  representative sample.
\newblock \emph{Review of Economic Studies}, Forthcoming, 2022.

\bibitem[Apesteguia and Ballester(2018)]{apesteguia2018monotone}
Jose Apesteguia and Miguel~A. Ballester.
\newblock {Monotone Stochastic Choice Models: The Case of Risk and Time
  Preferences}.
\newblock \emph{Journal of Political Economy}, 126\penalty0 (1):\penalty0
  74--106, 2018.

\bibitem[Bohren and Hauser(2021)]{BohrenHauser2018}
J.~A. Bohren and D.~Hauser.
\newblock Learning with heterogenous misspecified models: Charaterization and
  robustness.
\newblock \emph{Econometrica}, 89\penalty0 (6):\penalty0 3025--3077, 2021.

\bibitem[Bohren et~al.(2023)Bohren, Haggag, Imas, and Pope]{Bohren2023}
J~Aislinn Bohren, Kareem Haggag, Alex Imas, and Devin~G Pope.
\newblock Inaccurate statistical discrimination: An identification problem.
\newblock \emph{Review of Economics and Statistics}, pages 1--45, 2023.

\bibitem[Brady and Rehbeck(2016)]{brady2016menu}
Richard~L Brady and John Rehbeck.
\newblock Menu-dependent stochastic feasibility.
\newblock \emph{Econometrica}, 84\penalty0 (3):\penalty0 1203--1223, 2016.

\bibitem[Card(1999)]{Card1999}
David Card.
\newblock The causal effect of education on earnings.
\newblock volume~3 of \emph{Handbook of Labor Economics}, pages 1801--1863.
  Elsevier, 1999.

\bibitem[Cattaneo et~al.(2020)Cattaneo, Ma, Masatlioglu, and
  Suleymanov]{cattaneo2020randomattention}
Matias Cattaneo, Xinwei Ma, Yusufcan Masatlioglu, and Elchin Suleymanov.
\newblock A random attention model.
\newblock \emph{Journal of Political Economy}, 128, 2020.

\bibitem[{Cerreia Viogolio} et~al.(2022){Cerreia Viogolio}, Hansen, Maccheroni,
  and Marinacci]{Cerreiaetal2017ModelMisspecification}
Simone {Cerreia Viogolio}, Lars~Peter Hansen, Fabio Maccheroni, and Massimo
  Marinacci.
\newblock Making decisions under model misspecification.
\newblock \emph{working paper}, 2022.

\bibitem[Chambers et~al.(2022)Chambers, Cuhadaroglu, and
  Masatlioglu]{chambersetal2021influence}
Christopher~P Chambers, Tugce Cuhadaroglu, and Yusufcan Masatlioglu.
\newblock {Behavioral Influence}.
\newblock \emph{Journal of the European Economic Association}, 05 2022.
\newblock ISSN 1542-4766.
\newblock \doi{10.1093/jeea/jvac028}.
\newblock URL \url{https://doi.org/10.1093/jeea/jvac028}.
\newblock jvac028.

\bibitem[Cowell et~al.(1999)Cowell, Dawid, Lauritzen, and
  Spiegelhalter]{cowell1999}
R.~Cowell, P.~Dawid, S.~Lauritzen, and D.~Spiegelhalter.
\newblock \emph{Probabilistic Networks and Expert Systems}.
\newblock Springer, 1999.

\bibitem[Denrell(2018)]{denrell2020sampling}
Jerker Denrell.
\newblock Sampling biases explain decision biases.
\newblock pages 49--95. Oxford University Press, 2018.

\bibitem[Eliaz and Spiegler(2020)]{Eliaz2018model}
Kfir Eliaz and Ran Spiegler.
\newblock A model of competing narratives.
\newblock \emph{American Economic Review}, 110\penalty0 (12):\penalty0
  3786--3816, 2020.

\bibitem[Eliaz et~al.(2020)Eliaz, Spiegler, and Weiss]{Eliaz2020}
Kfir Eliaz, Ran Spiegler, and Yair Weiss.
\newblock Cheating with models.
\newblock \emph{American Economic Review: Insights}, 2020.

\bibitem[Eliaz et~al.(2021)Eliaz, Spiegler, and Thysen]{Eliaz2019}
Kfir Eliaz, Ran Spiegler, and Heidi~C Thysen.
\newblock Persuasion with endogenous misspecified beliefs.
\newblock \emph{European Economic Review}, 134:\penalty0 103712, 2021.

\bibitem[Eliaz et~al.(2022)Eliaz, Galperti, and Spiegler]{eliaz2022false}
Kfir Eliaz, Simone Galperti, and Ran Spiegler.
\newblock False narratives and political mobilization.
\newblock \emph{arXiv preprint arXiv:2206.12621}, 2022.

\bibitem[Ellis and Masatlioglu(2022)]{EllisMasatlioglu2021}
Andrew Ellis and Yusufcan Masatlioglu.
\newblock Choice with endogenous categorization.
\newblock \emph{The Review of Economic Studies}, 89\penalty0 (1):\penalty0
  240--278, 2022.

\bibitem[Ellis and Piccione(2017)]{EllisPiccione2017}
Andrew Ellis and Michele Piccione.
\newblock Correlation misperception in choice.
\newblock \emph{American Economic Review}, 107\penalty0 (4):\penalty0 1264--92,
  April 2017.

\bibitem[Esponda(2008)]{Esponda2008}
Ignacio Esponda.
\newblock Behavioral equilibrium in economies with adverse selection.
\newblock \emph{American Economic Review}, 98\penalty0 (4):\penalty0 1269--91,
  September 2008.

\bibitem[Esponda and Pouzo(2016)]{EspondaPouzo2016Berk}
Ignacio Esponda and Demian Pouzo.
\newblock Berk–nash equilibrium: A framework for modeling agents with
  misspecified models.
\newblock \emph{Econometrica}, 84\penalty0 (3):\penalty0 1093--1130, 2016.
\newblock \doi{https://doi.org/10.3982/ECTA12609}.
\newblock URL \url{https://onlinelibrary.wiley.com/doi/abs/10.3982/ECTA12609}.

\bibitem[Esponda and Vespa(2018)]{esponda2018endogenous}
Ignacio Esponda and Emanuel Vespa.
\newblock Endogenous sample selection: A laboratory study.
\newblock \emph{Quantitative Economics}, 9\penalty0 (1):\penalty0 183--216,
  2018.

\bibitem[Eyster and Rabin(2005)]{EysterRabin2005}
Erik Eyster and Matthew Rabin.
\newblock Cursed equilibrium.
\newblock \emph{Econometrica}, 73\penalty0 (5):\penalty0 1623--1672, 2005.

\bibitem[Frick et~al.(2020)Frick, Iijima, and Ishii]{Fricketal2019}
Mira Frick, Ryota Iijima, and Yuhta Ishii.
\newblock Misinterpreting others and the fragility of social learning.
\newblock \emph{Econometrica}, 88\penalty0 (6):\penalty0 2281--2328, 2020.

\bibitem[Gul and Pesendorfer(2006)]{Gul2006Random}
Faruk Gul and Wolfgang Pesendorfer.
\newblock Random expected utility.
\newblock \emph{Econometrica}, 74\penalty0 (1):\penalty0 121--146, 2006.

\bibitem[Hajek et~al.(1992)Hajek, Havranek, and Jirousek]{Hajeketal1992}
Petr Hajek, Tomas Havranek, and Radim Jirousek.
\newblock \emph{Uncertain Information Processing in Expert Systems}.
\newblock CRC Press, 1992.

\bibitem[He(2022)]{He2018}
Kevin He.
\newblock Mislearning from censored data: The gambler's fallacy and other
  correlational mistakes in optimal-stopping problems.
\newblock \emph{Theoretical Economics}, 17\penalty0 (3):\penalty0 1269--1312,
  2022.

\bibitem[Heidhues et~al.(2018)Heidhues, Koszegi, and
  Strack]{HeidhuesKoszegiStrack2018}
P.~Heidhues, B.~Koszegi, and P.~Strack.
\newblock Unrealistic expectations and misguided learning.
\newblock \emph{Econometrica}, 86\penalty0 (4):\penalty0 1159--1214, 2018.

\bibitem[Imbens(2020)]{Imbens2020}
Guido~W Imbens.
\newblock Potential outcome and directed acyclic graph approaches to causality:
  Relevance for empirical practice in economics.
\newblock \emph{Journal of Economic Literature}, 58\penalty0 (4):\penalty0
  1129--79, 2020.

\bibitem[Jehiel and Koessler(2008)]{Jehiel2008}
Philippe Jehiel and Fr{\'e}d{\'e}ric Koessler.
\newblock Revisiting games of incomplete information with analogy-based
  expectations.
\newblock \emph{Games and Economic Behavior}, 62\penalty0 (2):\penalty0
  533--557, 2008.

\bibitem[Ke et~al.(2020)Ke, Zhao, Wang, and Hsieh]{Keetal2020}
Shaowei Ke, Chen Zhao, Zhaoran Wang, and Sung-Lin Hsieh.
\newblock Behavioral neural networks.
\newblock \emph{Working paper}, 2020.

\bibitem[Kochov(2018)]{Kochov2015}
Asen Kochov.
\newblock A behavioral definition of unforeseen contingencies.
\newblock \emph{Journal of Economic Theory}, 175:\penalty0 265--290, 2018.

\bibitem[K\"{o}szegi and Rabin(2006)]{KR06}
B.~K\"{o}szegi and M.~Rabin.
\newblock A model of reference-dependent preferences.
\newblock \emph{Quarterly Journal of Economics}, 121\penalty0 (4):\penalty0
  1133--1165, 2006.

\bibitem[Lang and Kahn-Lang~Spitzer(2020)]{Lang2020}
Kevin Lang and Ariella Kahn-Lang~Spitzer.
\newblock Race discrimination: An economic perspective.
\newblock \emph{Journal of Economic Perspectives}, 34\penalty0 (2):\penalty0
  68--89, 2020.

\bibitem[Langer(1975)]{Langer1975illusion}
E.~Langer.
\newblock The illusion of control.
\newblock \emph{Journal of Personality and Social Psychology}, 32:\penalty0
  311--328, 1975.

\bibitem[Levy et~al.(2022)Levy, Razin, and Young]{Levyetal2021}
Gilat Levy, Ronny Razin, and Alwyn Young.
\newblock Misspecified politics and the recurrence of populism.
\newblock \emph{American Economic Review}, 112\penalty0 (3):\penalty0 928--62,
  2022.

\bibitem[Lipman(1999)]{Lipman1999}
Barton~L. Lipman.
\newblock Decision theory without logical omniscience: Toward an axiomatic
  framework for bounded rationality.
\newblock \emph{The Review of Economic Studies}, 66\penalty0 (2):\penalty0 pp.
  339--361, 1999.

\bibitem[Lombrozo(2007)]{lombrozo2007simplicity}
Tania Lombrozo.
\newblock Simplicity and probability in causal explanation.
\newblock \emph{Cognitive psychology}, 55\penalty0 (3):\penalty0 232--257,
  2007.

\bibitem[Lu(2016)]{Lu2016}
Jay Lu.
\newblock Random choice and private information.
\newblock \emph{Econometrica}, 84\penalty0 (6):\penalty0 1983--2027, 2016.

\bibitem[Luce(1959)]{luce1959individual}
R~Duncan Luce.
\newblock Individual choice behavior.
\newblock 1959.

\bibitem[Manzini and Mariotti(2014)]{manzini2014stochastic}
Paola Manzini and Marco Mariotti.
\newblock Stochastic choice and consideration sets.
\newblock \emph{Econometrica}, 82\penalty0 (3):\penalty0 1153--1176, 2014.

\bibitem[{Montiel Olea} et~al.(2021){Montiel Olea}, Ortoleva, Pai, and
  Prat]{olea2021competing}
Jose~Luis {Montiel Olea}, Pietro Ortoleva, Mallesh~M Pai, and Andrea Prat.
\newblock Competing models.
\newblock \emph{working paper}, 2021.

\bibitem[Pacer and Lombrozo(2017)]{pacer2017ockham}
M~Pacer and Tania Lombrozo.
\newblock Ockham’s razor cuts to the root: Simplicity in causal explanation.
\newblock \emph{Journal of Experimental Psychology: General}, 146\penalty0
  (12):\penalty0 1761, 2017.

\bibitem[Pearl(2009)]{Pearl2009}
J.~Pearl.
\newblock \emph{Causality: Models, Reasoning and Inference}.
\newblock Cambridge University Press, 2009.

\bibitem[Pearl(1995)]{Pearl1995Causal}
Judea Pearl.
\newblock Causal diagrams for empirical research.
\newblock \emph{Biometrika}, 82\penalty0 (4):\penalty0 669--688, 1995.

\bibitem[Pearl and Verma(1991)]{Pearl1991}
Judea Pearl and T.~S. Verma.
\newblock A theory of inferred causation, 1991.

\bibitem[Samuelson and Mailath(2020)]{SammuelsonMailath2019}
L.~Samuelson and G.~Mailath.
\newblock Learning under diverse world views: Model based inference.
\newblock \emph{American Economic Review}, 110\penalty0 (5):\penalty0 1464 --
  1501, May 2020.

\bibitem[Samuelson and Zeckhauser(1988)]{SamuelsonZeckhauser88}
W.~Samuelson and R.~Zeckhauser.
\newblock Status quo bias in decision making.
\newblock \emph{Journal of Risk and Uncertainty}, 1:\penalty0 7--59, 1988.

\bibitem[Schenone(2020)]{Schenone2020causality}
Pablo Schenone.
\newblock Causality: A decision theoretic foundation.
\newblock Technical report, 2020.

\bibitem[Schumacher and Thysen(2022)]{Schumacher}
Heiner Schumacher and Heidi~Christina Thysen.
\newblock Equilibrium contracts and boundedly rational expectations.
\newblock \emph{Theoretical Economics}, 17\penalty0 (1):\penalty0 371--414,
  2022.

\bibitem[Shermer(1998)]{Shermer1998}
Martin Shermer.
\newblock \emph{Why people believe weird things: pseudoscience, superstition,
  and other confusions of our time}.
\newblock Freeman \& Co, 1998.

\bibitem[Sloman(2005)]{Sloman2005}
Steven Sloman.
\newblock \emph{Causal Models: How People Think about the World and Its
  Alternatives}.
\newblock Oxford University Press, 2005.

\bibitem[Spiegler(2016)]{Spiegler2016}
Ran Spiegler.
\newblock Bayesian networks and boundedly rational expectations.
\newblock \emph{The Quarterly Journal of Economics}, 131\penalty0 (3):\penalty0
  1243--1290, 2016.

\bibitem[Spiegler(2017)]{Spiegler2017}
Ran Spiegler.
\newblock “data monkeys”: a procedural model of extrapolation from partial
  statistics.
\newblock \emph{The Review of Economic Studies}, 84\penalty0 (4):\penalty0
  1818--1841, 2017.

\bibitem[Spiegler(2020)]{Spiegler2020a}
Ran Spiegler.
\newblock Can agents with causal misperceptions be systematically fooled?
\newblock \emph{Journal of the European Economic Association}, 18\penalty0
  (2):\penalty0 583--617, 2020.

\bibitem[Tennant et~al.(2020)Tennant, Murray, Arnold, Berrie, Fox, Gadd,
  Harrison, Keeble, Ranker, Textor, Tomova, Gilthorpe, and
  Ellison]{Tennat2020DAGHealth}
Peter W~G Tennant, Eleanor~J Murray, Kellyn~F Arnold, Laurie Berrie, Matthew~P
  Fox, Sarah~C Gadd, Wendy~J Harrison, Claire Keeble, Lynsie~R Ranker, Johannes
  Textor, Georgia~D Tomova, Mark~S Gilthorpe, and George T~H Ellison.
\newblock {Use of directed acyclic graphs (DAGs) to identify confounders in
  applied health research: review and recommendations}.
\newblock \emph{International Journal of Epidemiology}, 50\penalty0
  (2):\penalty0 620--632, 12 2020.
\newblock ISSN 0300-5771.
\newblock \doi{10.1093/ije/dyaa213}.
\newblock URL \url{https://doi.org/10.1093/ije/dyaa213}.

\bibitem[Verma and Pearl(1991)]{Verma1991equivalence}
T.~S. Verma and Judea Pearl.
\newblock Equivalence and synthesis of causal models.
\newblock Technical report, 1991.

\bibitem[Wason(1960)]{Wason1960Congruence}
P.~C. Wason.
\newblock On the failure to eliminate hypotheses in a conceptual task.
\newblock \emph{Quarterly Journal of Experimental Psychology}, 12\penalty0
  (3):\penalty0 129--140, 1960.

\end{thebibliography}

\end{document}